\newcommand{\N}{{\textup{\textsc{n}}}}
\renewcommand{\S}{{\textup{\textsc{s}}}}
\newcommand{\E}{{\textup{\textsc{e}}}}
\newcommand{\W}{{\textup{\textsc{w}}}}
\newcommand{\NE}{{\textup{\textsc{ne}}}}
\newcommand{\SW}{{\textup{\textsc{sw}}}}
\newcommand{\NW}{{\textup{\textsc{nw}}}}
\newcommand{\SE}{{\textup{\textsc{se}}}}
\newcommand{\hole}{H}
\newcommand{\phole}{\tilde{H}}
\newcommand{\outline}{O}
\newcommand{\rhombus}{\mathcal{R}}
\renewcommand{\O}{\mathcal{O}}
\title{Distributed Rhombus Formation of Sliding Squares}
\author{Irina Kostitsyna}{KBR at NASA Ames Research Center, USA}{irina.kostitsyna@nasa.gov}{https://orcid.org/0000-0003-0544-2257}{}
\author{David Liedtke}{Paderborn University, Germany}{liedtke@mail.upb.de}{https://orcid.org/0000-0002-4066-0033}{}
\author{Christian Scheideler}{Paderborn University, Germany}{scheideler@upb.de}{https://orcid.org/0000-0002-5278-528X}{}
\authorrunning{I. Kostitsyna, D. Liedtke, C. Scheideler} 
\keywords{modular robots, distributed systems, sliding squares}
\begin{document}
    
\maketitle              

\begin{abstract}
    The sliding square model is a widely used abstraction for studying self-reconfigurable robotic systems, where modules are square-shaped robots that move by sliding or rotating over one another. 
    In this paper, we propose a novel distributed algorithm that allows a group of modules to reconfigure into a rhombus shape, starting from an arbitrary side-connected configuration. 
    It is connectivity-preserving and operates under minimal assumptions: one leader module, common chirality, constant memory per module, and visibility and communication restricted to immediate neighbors.
    Unlike prior work, which relaxes the original sliding square move-set, our approach uses the unmodified move-set, addressing the additional challenge of handling locked configurations. 
    Our algorithm is sequential in nature and operates with a worst-case time complexity of $\O(n^2)$ rounds, which is optimal for sequential algorithms.
    To improve runtime, we introduce two parallel variants of the algorithm. 
    Both rely on a spanning tree data structure, allowing modules to make decisions based on local connectivity. 
    Our experimental results show a significant speedup for the first variant, and linear average runtime for the second variant, which is worst-case optimal for parallel algorithms.
\end{abstract}


\section{Introduction}

Edge-connected configurations of square modules, which can reconfigure by sliding over or rotating around one another (see \cref{subfig:def-moves}), are a well-established theoretical model for self-reconfigurable robotic systems.
There are numerous applications such as search-and-rescue missions or autonomous construction, where modules collectively navigate complex terrains or form shapes while preserving connectivity.
Reconfiguration planning in the sliding square model has primarily focused on centralized approaches.
In contrast, we propose a fully distributed algorithm that enables modules to autonomously self-organize into a rhombus shape under minimal assumptions about their initial positions, computational capabilities, and sensory range.
The rhombus serves as an efficient intermediate shape due to its compactness, minimal bounding box, and lack of holes.

\subsection{Our Contribution}

We introduce a distributed reconfiguration algorithm for the sliding square model that transforms any initially side-connected configuration with a designated leader and common chirality into a rhombus shape.
These two assumptions are minimal but necessary to break initial symmetries.
The algorithm has a worst-case runtime of $\mathcal{O}(n^2)$ rounds, which is optimal among solutions that build up any shape sequentially.
It operates under strict local communication and visibility constraints, requires only constant memory per module, and maintains connectivity throughout execution.
In contrast to previous distributed solutions, which relax the original sliding square move-set, our algorithm utilizes the unmodified move-set, introducing the additional challenge of handling locked configurations.
While our main focus is the rhombus, the algorithm generalizes to any shape that can be sequentially constructed, provided that at every step, any two cells in the shape remain connected by a shortest path within the shape. 

To improve efficiency, we introduce two parallel variants of the algorithm.
Both variants rely on an auxiliary spanning tree structure maintained throughout execution, allowing modules to detect certain moves that preserve connectivity.
The first variant permits modules to move only if such moves improve their depth relative to the spanning tree (without explicitly computing or storing depth), resulting in a significant speedup over the sequential version. 
However, some configurations remain locked, limiting parallelization and requiring sequential progression.
The second variant allows modules to move whenever possible, showing an average-case linear runtime, as demonstrated by our experiments.

\subsection{Related Work}

Reconfiguration in the sliding square model, first introduced by Fitch et al. \cite{Fitch2003-ReconfigurationPlanning}, typically involves transforming configurations into a canonical intermediate shape before reaching a desired target shape.
Dumitrescu and Pach \cite{Dumitrescu2004-PushingSquaresAround} were the first to provide a universal 2D reconfiguration algorithm in this model, achieving a worst-case optimal runtime of $\mathcal{O}(n^2)$ moves using a straight line as the canonical shape.
Moreno and Sacristán \cite{Moreno2020-ReconfiguringSlidingSquares} adapted this into an in-place version, restricting all intermediate states to lie within the 1-offset bounding box of the initial and target configurations, while still requiring $\Theta(n^2)$ moves.
Akitaya et al. \cite{Akitaya2021-CompactingSquares} presented the first input sensitive algorithm, achieving $\mathcal{O}(nP)$ complexity, where $P$ is the maximum perimeter of the bounding boxes, using $xy$-monotone shapes and proving that minimizing the number of moves is NP-hard.
Abel et al. \cite{Abel2024-UniversalInPlaceReconfiguration} give the first universal reconfiguration algorithm for sliding cubes in three and higher dimensions, achieving an optimal $\mathcal{O}(n^2)$ bound and an in-place variant where all but one module stay within the bounding boxes of the start and target configurations. 
Kostitsyna et al. \cite{Kostitsyna2023-OptimalIn-PlaceCompactionOfSlidingCubes} achieve asymptotically optimal in-place reconfiguration for sliding squares, with the number of moves proportional to the total coordinate sum over all cubes.

All aforementioned algorithms are centralized and sequential. 
The first distributed and synchronous approach was by Dumitrescu et al. \cite{Dumitrescu2004-MotionPlanningForMetamorphicSystems}, who reconfigure $y$-monotone shapes into a line, assuming global communication.
Hurtado et al. \cite{Hurtado2015-DistributedReconfiguration} later gave the first distributed universal algorithm with optimal $\mathcal{O}(n)$ rounds, but with a relaxed move set, a leader requiring $\mathcal{O}(n)$ memory, and $\mathcal{O}(\log n)$ memory in other modules (despite their claim of constant memory, as storing integers up to $n$ requires $\log n$ bits). 
They further assume a vision and communication range of $2$. 
Wolters \cite{Wolters2024-ParallelAlgorithms} presented parallel versions for reconfiguring $xy$-monotone shapes, though their work lacks correctness proofs. 
Recently, Akitaya et al. \cite{Akitaya2024-SlidingSquaresInParallel} gave the first centralized theoretical results for parallel reconfiguration in the sliding square model, achieving worst-case optimal makespan $\mathcal{O}(P)$ and showing NP-completeness for constant-makespan decisions.

In the limited sliding cube model, modules are restricted to sliding moves only, and reconfiguration is achieved through meta-modules, small groups of modules that act as single units. 
Kawano proposed a series of increasingly efficient algorithms in this model, culminating in a linear-time solution \cite{Kawano2015-CompleteReconfiguration,Kawano2020-Distributed,Kawano2023-Linear-TimeReconfiguration}.

Pivoting models, which allow only corner rotations, are significantly harder. 
Akitaya et al. \cite{Akitaya2020-CharacterizingUniversalReconfigurability, Akitaya2023-Reconfiguration} demonstrated PSPACE-hardness for several pivoting reconfiguration problems in both 2D and 3D.
A notable workaround uses five auxiliary modules (called musketeers) to achieve universal reconfiguration in $\mathcal{O}(n^2)$ pivot moves \cite{Akitaya2021-UniversalReconfigurationOfFacet-ConnectedModularRobotsByPivots}.
Non-universal strategies focus on configurations excluding certain forbidden substructures. 
Sung et al. \cite{Sung2015-ReconfigurationPlanning} identified three such subconfigurations, allowing reconfiguration in $\mathcal{O}(n^2)$ moves, which was extended by Feshbach and Sung \cite{Feshbach2021-ReconfiguringNon-ConvexHoles} to include non-convex holes. 

Other relevant problems include coordinated motion planning, where multiple agents navigate a shared space while maintaining connectivity \cite{Fekete2021-ConnectedCoordinatedMotionPlanningWithBoundedStretch, Walter2005-AlgorithmsForFastConcurrentReconfiguration}, nest-building inspired by biological systems to form compact structures \cite{Czyzowicz2021-BuildingANestByAnAutomaton}, and bounding-box construction \cite{Fekete2020-RecognitionAndReconfigurationOfLatticeBasedCellularStructures}.

\iftoggle{fullversion}{
    Amoebot-based models offer a theoretical framework for programmable matter in which simple, memory-constrained agents self-organize via expansion and contraction on a grid without global coordination. 
    There are various publications demonstrating that these systems are capable of complex shape formation tasks \cite{Derakhshandeh2015-AmoebotShapeFormation, Derakhshandeh2016-AmoebotShapeFormation, DiLuna2020-AmoebotShapeFormation}.
    A closely related variant is the hybrid programmable matter model, in which only a small subset of mobile agents is active while the majority of the system consists of passive tiles that are manipulated by the active ones. 
    These systems are capable of efficient shape formation in both 2D and 3D settings \cite{Gmyr2020-FormingTileShapesWithSimpleRobots, Hinnenthal2020-Hybrid3D, Hinnenthal2024-ShapeFormation, Friemel2025-ShapeReconfiguration}.
}{
    Amoebot-based models offer a theoretical framework for programmable matter, where memory-constrained agents locally self-organize and reconfigure into various complex shapes~\cite{Derakhshandeh2015-AmoebotShapeFormation, Derakhshandeh2016-AmoebotShapeFormation, DiLuna2020-AmoebotShapeFormation}. 
    A related variant is hybrid programmable matter, where few active agents manipulate passive tiles, enabling efficient self-reconfiguration in both 2D and 3D settings~\cite{Gmyr2020-FormingTileShapesWithSimpleRobots, Hinnenthal2020-Hybrid3D, Hinnenthal2024-ShapeFormation, Friemel2025-ShapeReconfiguration}.
}

These theoretical results are complemented by a growing body of real-world implementations in modular robotics, such as \cite{Jenett2019-MaterialRobotSystemforAssemblyOfDiscreteCellularStructure, Gregg2024-Ultralight}.

\subsection{Model Definition and Problem Statement}

Consider the infinite square grid with set of cells $C$.
Two distinct cells $c_1, c_2 \in C$ are called \emph{adjacent} if they share at least one corner, \emph{side-adjacent} if they share an edge, and \emph{corner-adjacent} if they share a corner but no edge.
The \emph{neighborhood} $N(c)$ of a cell $c$ consists of its eight adjacent cells: four side-adjacent cells in directions north ($\N$), east ($\E$), south ($\S$), and west ($\W$), and four corner-adjacent cells in directions northeast ($\NE$), southeast ($\SE$), southwest ($\SW$), and northwest ($\NW$). 

We consider a set of $n$ square modules, each with the computational power of a deterministic finite automaton.  
Each module occupies precisely one cell on the grid, and each cell contains at most one module.  
For simplicity, we refer interchangeably to modules and the cells they occupy.
A \emph{configuration} $(M, s, l)$ consists of the set of all modules $M \subseteq C$, a state function $s : M \rightarrow \mathcal{S}$ assigning each module $m \in M$ a state $s(m) \in \mathcal{S}$ from the finite set $\mathcal{S}$ of possible states, and a unique cell $l \in C$ initially occupied by the leader module.

Modules move according to two allowed operations: \emph{slides} and \emph{convex transitions} (see \cref{subfig:def-moves}).  
Let $(m, a, b, c)$ form a 4-cycle of side-adjacent cells, where $m \in M$.  
If $a \notin M$ and $b, c \in M$, then module $m$ can \emph{slide} into the empty cell $a$.  
If $a, b \notin M$ and $c \in M$, then $m$ can perform a \emph{convex transition} into the empty cell $b$.  

We assume a fully synchronous activation schedule consisting of discrete rounds.  
In each round, every module simultaneously executes a \emph{Look-Compute-Move} cycle:
In the \emph{look} phase, a module $m$ observes its neighborhood $N(m)$. 
For each cell $c \in N(m)$, it determines whether $c$ is empty or occupied, and if occupied, it observes the state of the module in $c$.  
We assume that all modules have the same sense of chirality.
In the \emph{compute} phase, $m$ may transition to a new state.
It may also change the states of its neighbors, based solely on the information gathered during the look phase.
In the \emph{move} phase, the module performs an action associated with its current state, which is either a slide, a convex transition, or staying in place.

\begin{figure}[tbp]
    \centering
    \begin{subfigure}[b]{0.333\linewidth}
        \centering%
        \includegraphics[width=\linewidth,page=1]{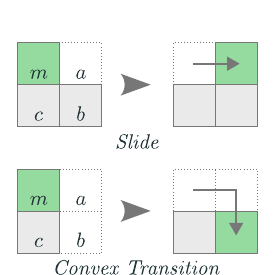}%
        \subcaption{}
        \label{subfig:def-moves}
    \end{subfigure}%
    \begin{subfigure}[b]{0.333\linewidth}
        \centering%
        \includegraphics[width=\linewidth,page=2]{definitions}%
        \subcaption{}
        \label{subfig:def-layers}
    \end{subfigure}%
    \begin{subfigure}[b]{0.333\linewidth}
        \centering%
        \includegraphics[width=\linewidth,page=3]{definitions}%
        \subcaption{}
        \label{subfig:def-rhombus}
    \end{subfigure}%
    \caption{(a) Modules move only via slides and convex transitions. For a slide to be viable, cell $a$ must be empty; for a convex transition to be viable, both $a$ and $b$ must be empty. (b) Example configuration with the leader module in the center. Modules of equal Manhattan distance from the leader are in the same layer $L_i$, depicted by having the same shade of grey. (c) The rhombus $\rhombus = (v_0, \ldots, v_{n-1})$ where $n = 20$. The connector cell of each layer is marked with a dot.}
\end{figure}

Conflicts may arise during the compute phase, when multiple modules attempt to update the state of the same neighbor, or during the move phase, if movement trajectories intersect.
We resolve both types of conflicts arbitrarily: a module may take on any of the proposed new states, and in the case of movement conflicts, an arbitrary module proceeds while the others remain in place.
While such conflicts could, in principle, lead to a livelock (e.g., if the conflicts are cyclic), our algorithm is designed to ensure that progress is always made.

We assume that the initial configuration $(M_0, s_0, l)$ satisfies the following:
The initial set of all modules $M_0$ is side-connected.
Exactly one module occupies the leader cell $l \in M_0$.
All other modules share the same initial state, i.e., $s_0(m) = s_0(m')$ for all $m, m' \in M_0 \setminus \{l\}$.
The leader's state is distinct from all other modules, i.e., $s_0(l) \neq s_0(m)$ for all $m \in M_0$.

By executing a distributed algorithm, the system goes through a sequence of configurations $(M_0, s_0, l), (M_1, s_1, l)\ldots, (M_T, s_T, l)$ where each pair of consecutive configurations corresponds to the execution of one synchronous round.
We require that in each round, the subset of modules that do not move remains side-connected.
The goal is to reach a final configuration $(M_T, s_T, l)$ where each module is in a terminated state, and the sum $\sum_{m \in M_T} d(m,l)$ is minimized, where $d(m,l)$ denotes the Manhattan ($L_1$) distance between $m$ and $l$.
Minimizing this sum in the square grid naturally results in modules being arranged in the shape of a rhombus, centered at $l$, whose outermost layer may be partially filled.

While modules are technically finite automata, we describe our distributed algorithm from a higher-level of abstraction, assuming each module stores only a constant number of variables, each of constant size domain.

\paragraph*{Additional Terminology.}

Denote by $d(c_1, c_2)$ the Manhattan distance between two cells $c_1, c_2 \in C$.
A \emph{layer} $L_i \subset C$ consists of all cells at Manhattan distance $i$ from the leader cell $l$, i.e., $L_i = \{c \in C \mid d(l,c) = i\}$ (see \cref{subfig:def-layers}).
For each layer, we define a unique \emph{connector} cell.
The connector of $L_0 $ is $l$, and for any $i > 0$, the connector of $L_i$ is the cell north of the westernmost cell in~$L_{i-1}$.

Our goal is to form a rhombus shape consisting of consecutive layers arranged around the leader module.
Formally, we define a \emph{rhombus} $\rhombus{}=(v_0,...,v_{n-1})$ of size $n$, which is constructed by appending the cells of each layer clockwise, starting from their respective connector cells (see \cref{subfig:def-rhombus}).
Our algorithm aims to fill each cell in $\rhombus{}$ with a module in this sequential order.

A \emph{path} (or \emph{corner-path}) of length $k$ from cell $c_0$ to cell $c_k$ is a tuple $(c_0, c_1, ..., c_k)$ such that $c_i$ is side-adjacent (or adjacent) to $c_{i+1}$ for all $0 \leq i < k$.
A set of cells is \emph{side-connected} (or \emph{corner-connected}) if there is a path (or corner-path) between any two cells from that set, and that path is fully contained within the set.
We call a module $m \in M$ \emph{removable} if $M \setminus \{m\}$ is side-connected.
For simplicity, we slightly abuse notation and treat tuples as sets when referring to their contents, e.g., we write $m \in P$ if module $m$ appears in path $P$.

A \emph{hole} $\hole{}$ is a maximal corner-connected subset of empty cells, and a \emph{pseudo-hole} $\phole{}$ is a maximal side-connected subset of empty cells.
Since the set of modules is finite, there is exactly one infinite hole and one infinite pseudo-hole.
For example, the configuration in \cref{subfig:def-layers} has one finite hole of size three, and two finite pseudo-holes of sizes two and three.

Any hole can be partitioned into one or more pseudo-holes.
Two distinct pseudo-holes $\phole{}$ and $\phole{}'$ are \emph{adjacent} if there exist adjacent cells $e \in \phole{}, e' \in \phole{}'$.
In that case, their common neighborhood $N(e) \cap N(e')$ contains precisely two occupied cells $c, c'$.
We refer to $(c,c')$ as a \emph{critical pair}, to cell $e$ as the \emph{bridge} from $\phole{}$ to $\phole{}'$, and to cell $e'$ as the \emph{bridge} from $\phole{}'$ to $\phole{}$.
Notably, for any two pseudoholes $\phole{}, \phole{}'$ there can be at most one critical pair $(c,c')$, and consequently, at most one bridge in each direction.
Otherwise, $\phole{} \cup \phole{}'$ would contain a corner-path cycle fully enclosing either $c$ or $c'$, contradicting that $M$ is side-connected. 

Finally, the \emph{boundary} of a set of cells $V$ is defined as $B(V) = (\bigcup_{v \in V} N(v)) \setminus V$. 
Note that by definition, the boundary of a hole can only contain occupied cells, while the boundary of a pseudo-hole may also contain empty cells that belong to adjacent pseudo-holes.
\section{A Distributed Rhombus Formation Algorithm}
\label{sec:algorithm}

\newcommand{\pActive}{\textup{\textsc{Active}}}
\newcommand{\pPassive}{\textup{\textsc{Passive}}}
\newcommand{\pHead}{\textup{\textsc{Head}}}
\newcommand{\pTail}{\textup{\textsc{Tail}}}

In this section, we describe our distributed rhombus formation algorithm.
\iftoggle{fullversion}{
    Some technical details are deferred to the end of this section (\cref{subsec:technical-details}).
}{
    Some technical details, including how the next rhombus cell is determined from local observations, message handling, and the handling of one corner case, are deferred to Appendix~\ref{sec:appendix-technical-details}.
}
\cref{fig:example} illustrates the execution of the algorithm on an example configuration.

\begin{figure}[htb]
    \centering
    \begin{subfigure}[b]{0.5\linewidth}
        \centering%
        \includegraphics[width=\linewidth,page=1]{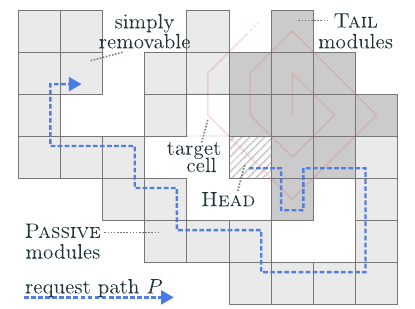}%
        \subcaption{}
        \label{subfig:example-1}
    \end{subfigure}%
    \begin{subfigure}[b]{0.5\linewidth}
        \centering%
        \includegraphics[width=\linewidth,page=2]{algo-example}%
        \subcaption{}
        \label{subfig:example-2}
    \end{subfigure}\\
    \begin{subfigure}[b]{0.5\linewidth}
        \centering%
        \includegraphics[width=\linewidth,page=3]{algo-example}%
        \subcaption{}
        \label{subfig:example-3}
    \end{subfigure}%
    \begin{subfigure}[b]{0.5\linewidth}
        \centering%
        \includegraphics[width=\linewidth,page=4]{algo-example}%
        \subcaption{}
        \label{subfig:example-4}
    \end{subfigure}%
    \caption{
    (a) The \pHead{} initiates a request, which is forwarded anti-clockwise along the target boundary to a simply removable module.
    (b) The module becomes \pActive{}, moves along the request path $P$ in reverse until blocked by a critical pair, then becomes \pPassive{} as the blocking module is not within $P$.
    (c) The request continues until the next forwarding direction is not a \pPassive{} module.
    Going backwards in the request path, the first conditionally removable module becomes \pActive{}.
    (d) The \pActive{} module is blocked again, this time the blocking module is within $P$, and its successor equals its predecessor.
    The \pActive{} module sends a cleanup message to erase the requests stored at modules that it cannot reach.
    Afterwards, it moves directly to the target cell (not visualized).
    }
    \label{fig:example}
\end{figure}

In our algorithm, modules transition through four distinct phases: \pHead{}, \pTail{}, \pActive{}, and \pPassive{}.
From a high level, the \pHead{} module identifies the next cell to fill in the rhombus, the \pActive{} module moves to occupy this cell, \pTail{} modules have already permanently joined the rhombus, and \pPassive{} modules include all modules that have yet to join.
Initially, the leader module is the \pHead{} while all other modules are \pPassive{}.
In any given round, there is precisely one \pHead{} and at most one \pActive{} module.
Any module that is not \pActive{} remains stationary and solely participates in communication.

The \pHead{} always identifies the next cell to occupy, referred to as the \emph{target cell}, and stores the direction to this cell in its constant memory.
In most cases, the target cell is the immediate successor of the current \pHead{} within the rhombus $\rhombus{}$ (except for one edge case discussed in \iftoggle{fullversion}{\cref{subsec:technical-details}}{Appendix~\ref{sec:appendix-technical-details}}).
If the target cell is already occupied by some module, then that module becomes the new \pHead{} while the previous \pHead{} becomes a \pTail{}.
If it is empty, the \pHead{} initiates a request message to find a \pPassive{} module that can become \pActive{} and move towards the target cell.
Messages are only exchanged between side-adjacent modules, and they travel along the boundary of the hole that contains the target cell, which we refer to as \emph{target boundary} and \emph{target hole}, respectively.
Moreover, messages carry no information besides their type.
A request message always travels anti-clockwise along the target boundary.
If a request fully traverses the target boundary without seeing \pPassive{} modules, the \pHead{} terminates.
All other modules then terminate in subsequent rounds upon observing a terminated neighbor.
Otherwise, if a request reaches a \pPassive{} module $m$, that module locally checks whether moving may break connectivity, which we formalize as follows:

\begin{definition}
    \label{def:simply-removable}
    A module $m$ is \emph{simply removable} if (1) it is side-adjacent to the target hole, and (2) for every pair $m_1, m_2$ of modules side-adjacent to $m$, there exists a path from $m_1$ to $m_2$ within $N(m) \cap M$. \emph{(Note that $m \notin N(m)$.)}
\end{definition}

If $m$ is simply removable, it becomes \pActive{}; otherwise, it forwards the request anti-clockwise along the target boundary.
Note that simple removability is a sufficient but not necessary condition for a module to be removable.
It may occur that no \pPassive{} module that receives a request is simply removable, in which case local information alone does not suffice to identify a removable module.
Modules maintain an ordered list of the requests they receive, together with the direction from which each request was received.
A module can receive at most four requests for the same target cell, one for each side-adjacent module.
Additionally, we ensure that all memory is reset before a new request is initiated.
Hence, constant memory suffices to store all requests.

\begin{definition}
    \label{def:request-path}
    A \emph{request path} $P = (m_1, m_2, ..., m_k)$ is a maximal directed path along the target boundary such that for each $1 < i \leq k$, module $m_i$ stores a request (or receives a request at the start of the current round) from module $m_{i-1}$, and each directed edge $(m_{i-1}, m_i)$ appears at most once in $P$.
\end{definition}

\begin{definition}
    \label{def:passive-segment}
    Given a request path $P = (m_1, m_2, ..., m_k)$, a \emph{passive segment} is a maximal contiguous sub-path $S = (m_i, ..., m_j) \subseteq P$, where $1 < i \leq j \leq k$, consisting entirely of \pPassive{} modules.
\end{definition}

Note that the request path is not necessarily simple in terms of nodes, but it contains any directed cycle at most once.
In contrast, in the analysis we show that following our algorithm, the request path $P$ contains at most one passive segment $S$, that this segment is simple (in terms of nodes), and that all \pPassive{} modules storing a request belong to $S$.
As a result, any \pPassive{} module $m$ that receives a request in some round must be the last module on both $P$ and $S$ in that round, i.e., $m = m_k = m_j$.
Since $m$ can observe the state of its neighbors, it knows which of them belong to the request path.
We use this additional information to identify simple cycles within the set of all modules $M$.
This gives us an additional criterion to check for removability: 

\begin{definition}
    \label{def:conditionally-removable}
    Let $Z = (m_0, m_1, ..., m_{k-1})$ be a simple cycle within the set of modules $M$. A module $m_i$ with $0 \leq i < k$ is \emph{conditionally removable}, if (1) $m_i$ is side-adjacent to the target hole, and (2) for every module $m$ side-adjacent to $m_i$, there exists a path from $m$ to $m_{i-1}$ or a path from $m$ to $m_{i+1}$ within $N(m) \cap M$ (where indices are taken modulo $k$).
\end{definition}

We now describe the procedure by which a conditionally removable module is selected to become \pActive{}.
The procedure is initiated whenever the last module $m_k$ on the request path $P$ is \pPassive{} and the module in the next forwarding direction is already contained in $P$ or is not a \pPassive{} module.
There are two cases to consider: 

\emph{Case 1: Request cycles back to itself.} 
Suppose that the module in the next forwarding direction from $m_k$ is a \pPassive{} neighbor $m_l\in P$, i.e., a module that already belongs to the request path $P$ with $l < k$.
Since the passive segment $S$ is unique and simple, $(m_l, m_{l+1}, ..., m_k)$ is a simple cycle.
In this case, the algorithm proceeds as follows:
Starting with module $m_k$, traverse the request path in reverse order.
At each step $i = k, k-1, \ldots, l$, the current module $m_i$ checks whether it is conditionally removable.
If so, it becomes \pActive{} and the process stops.
Otherwise, it deletes the request it has received from $m_{i-1}$ from its memory and sends an activation message to its predecessor $m_{i-1}$, which then continues the process.
In the analysis, we show that this process always finds a conditionally removable module before the activation message reaches $m_l$.

\emph{Case 2: Request reaches a non-\pPassive{} module.}
Now suppose that the module in the next forwarding direction from $m_k$ is not \pPassive{}, i.e., a \pHead{} or \pTail{} module.
Since $m_k$ is \pPassive{}, the passive segment $S = (m_i, ..., m_k)$ is non-empty.
Moreover, by maximality of $S$, the module $m_{i-1}$ must also be non-\pPassive{}.
Let $P'$ be a shortest path from $m$ to $m_{i-1}$ within the set of all non-\pPassive{} modules.
We show in \iftoggle{fullversion}{\cref{lem:extended-rhombus-weakly-convex}}{\cref{lem-appendix:extended-rhombus-weakly-convex}} that such a path exists. 
Since all shortest paths are simple and $S$ contains only \pPassive{} modules, the paths $S$ and $P'$ are disjoint.
Therefore, their concatenation forms a simple cycle.
In this case, the algorithm proceeds identical to Case~1.
Starting at $m_k$, each module along $S$ is checked in reverse order for conditional removability while deleting its stored request.
The first such module becomes \pActive{} and stops the process.
We show in the analysis that this process always finds a conditionally removable module before the activation message reaches $m_{i-1}$ unless the target hole has size one, a special case addressed in \iftoggle{fullversion}{\cref{par:special-case}}{Appendix~\ref{sec:appendix-technical-details}}.

Eventually, some \pPassive{} module becomes \pActive{} and deletes its stored request. 
The \pActive{} module moves via slides and convex transitions through the target hole, following the request path $P$ in reverse direction. 
During this process, it deletes requests stored at adjacent modules such that it always remains side-adjacent to the last module of $P$.

For simplicity, we treat the cell occupied by the \pActive{} module as empty w.r.t. all definitions and terminology. 
For example, if after moving it is side-adjacent to cells $c_1$ and $c_2$ that formed a critical pair in the previous round, we still consider $(c_1, c_2)$ a critical pair, even though they are now technically connected through the \pActive{} module.

\subsection{Moving past Critical Pairs}
\label{par:critical-pair-cases}
It may happen that the movement of the \pActive{} module is blocked by a critical pair $(m_k, c)$, where $m_k$ is the last module on the request path $P$.  
At this point, the current position $m$ of the \pActive{} module is one of the bridge cells of the critical pair.  
Let $b \neq m$ be the other bridge cell.  
By placing a module at $b$, we obtain a hole $\hole_m$ containing $m$, and by placing a module at $m$, we obtain a hole $\hole_b$ containing $b$ (see \cref{fig:active-critical-pair-algo}).  
Since any two adjacent pseudo-holes are separated by a unique critical pair, the holes $\hole_m$ and $\hole_b$ are disjoint.  
As a result, exactly one of them contains the target cell.

    \begin{figure}[tbp]
        \centering
        \begin{subfigure}[b]{0.25\linewidth}
            \centering%
            \includegraphics[width=\linewidth,page=2]{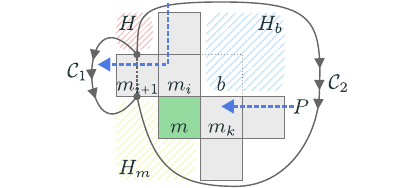}%
            \subcaption{}
            \label{subfig:active-critical-pair-algo-1}
        \end{subfigure}%
        \begin{subfigure}[b]{0.25\linewidth}
            \centering%
            \includegraphics[width=\linewidth,page=4]{active-critical-pair}%
            \subcaption{}
            \label{subfig:active-critical-pair-algo-2}
        \end{subfigure}%
        \begin{subfigure}[b]{0.25\linewidth}
            \centering%
            \includegraphics[width=\linewidth,page=5]{active-critical-pair}%
            \subcaption{}
            \label{subfig:active-critical-pair-algo-3}
        \end{subfigure}%
        \begin{subfigure}[b]{0.25\linewidth}
            \centering%
            \includegraphics[width=\linewidth,page=6]{active-critical-pair}%
            \subcaption{}
            \label{subfig:active-critical-pair-algo-4}
        \end{subfigure}%
        \caption{Case distinction for determining which pseudo-hole contains the target cell when the \pActive{} module $m$ is blocked by a critical pair $(m_k, c)$. The decision is based on whether $c$ occurs on the request path and, if so, on the relative position of its successor and predecessor. A variant of (b) where $m_{i+1} = m_{i-1}$ is corner-adjacent to $m$, as well as the case where $m_i = m_1$, are omitted.}
        \label{fig:active-critical-pair-algo}
    \end{figure}

To proceed, the \pActive{} module determines whether $\hole_m$ or $\hole_b$ contains the target cell, a decision that can be made locally based on the following case distinction:  
If $c \notin P$ (see \cref{subfig:active-critical-pair-algo-1}), then the target cell lies in $\hole_b$.
Otherwise, let $m_i \in P$ be the last occurrence of $c$ on the request path, i.e., the module $m_i = c$ with maximum index $i$.
Since requests are stored in ordered lists, this information can be read from the module's memory. 

If $m_i = m_1$, i.e., it has no predecessor and thus initiated the request, then $m$ can read the position of the target cell from the memory of $m_i$ and directly determine whether it lies in $\hole_m$ or $\hole_b$.
Since modules share a common chirality, the request can carry the \pHead{}'s orientation, so $m$ and $m_i$ agree on the direction of north. 

Otherwise, $m_i$ has a predecessor $m_{i-1} \in P$ and a successor $m_{i+1} \in P$.
Then the target cell lies in $\hole_m$ if $m_{i+1} = m_{i-1}$ or $m_{i+1} \notin N(m)$ (see \cref{subfig:active-critical-pair-algo-2} and \cref{subfig:active-critical-pair-algo-3}, respectively); otherwise it lies in $\hole_b$ (see \cref{subfig:active-critical-pair-algo-4}).
We prove claims in the proof of \cref{lem:invariants}.

In our algorithm, the \pActive{} module $m$ proceeds as follows:
It first determines, using the above case distinction, whether $\hole_m$ or $\hole_b$ contains the target cell.
If $\hole_m$ contains the target cell, then $m$ sends a cleanup message to $m_k$ (e.g., see \cref{subfig:example-4}).
Each module receiving this message forwards it to its predecessor on the request path and then deletes its last stored request.
Eventually, the message reaches $c$, at which point it is dropped and $m$ can resume its movement along the boundary of $\hole_m$.
Otherwise, if $\hole_b$ contains the target cell, then $m$ becomes \pPassive{} and acts as if it had received a request from $m_k$ (e.g., see \cref{subfig:example-2,subfig:example-3}).
Note that by becoming \pPassive{} and continuing the request from $m_k$, all pseudo-holes contained in $\hole_m$ are pruned from the target hole.
This behavior is essential for showing that all requests terminate in linear time.

\iftoggle{fullversion}{
    \subsection{Technical Details}
    \label{subsec:technical-details}

    In the following, we describe the technical details omitted from the previous section to complete the description of our rhombus formation algorithm.

    We first describe how a \pHead{} module determines the next target cell.
    Let $\rhombus{} = (v_0,...,v_{n-1})$ be a rhombus of size $n$, and let $v_i$ be the cell in $\rhombus{}$ occupied by the \pHead{} module.
    In our algorithm, we maintain as an invariant that $\{v_0, ..., v_{i-1}\}$ is precisely the set of all cells occupied by \pTail{} modules.
    This ensures that the \pHead{} module can determine its position within the rhombus by observing the states of its neighbors: whether it is at the first cell ($v_i = v_0$), at an extremal cell (e.g., northernmost), on one of the layer's straight segments, or at its connector.
    For example, if there are \pTail{} neighbors in direction $\N$ and $\E$ (e.g., \cref{subfig:example-1}), then $v_i$ lies on a straight segment in the rhombus' southwest quadrant, and the target cell is in the $\NW$ direction.

    Next, we clarify how modules determine the direction in which messages are forwarded and whether they are side-adjacent to the target hole.
    We assume that whenever a module $m$ receives a message, it knows which of its neighbors $src$ has sent the message.
    This allows $m$ to compute both the direction of the target hole and the next forwarding direction along the target boundary.
    Module $m$ scans its neighbors in anti-clockwise order, starting with module $src$.
    The first empty cell following $src$ in this scan belongs to the target hole, and the first side-adjacent module following $src$ is the next module along the target boundary in anti-clockwise direction.
    Similarly, when initiating a request, module $m$ performs a scan starting with the target cell instead of $src$.
    When forwarding activation and cleanup messages, a module can directly read the forwarding direction from its stored requests.


    \paragraph*{Holes of Size One}
    \phantomsection
    \label{par:special-case}
    In this paragraph, we discuss the special case where an activation message does not reach a conditionally removable module.
    Let $P = (m_1, ..., m_k)$ be the request path, $S = (m_i, ..., m_k)$ with $i > 1$ its passive segment, and let $m$ be a non-\pPassive{} module side-adjacent to $m_k$.
    Recall that we obtain a simple cycle by concatenating $S$ with a shortest path from $m$ to $m_{i-1}$ within the set of all non-\pPassive{} modules.
    In general, any activation message traversing $P$ in reverse reaches a conditionally removable module before it reaches $m_{i-1}$.
    However, this may not be the case if the target hole has size one, i.e., it consists of only the target cell.
    There are configurations in which no \pPassive{} module on the target boundary is conditionally removable (e.g., see \cref{subfig:special-case1}).
    In this case, the activation message fully traverses the request path in reverse until it reaches the \pHead{} module.

    Although no \pPassive{} module is initially conditionally removable, we can still create a conditionally removable module using a second request.
    Let $e$ be the target cell, and assume that the target hole has size one and that the target boundary contains no conditionally removable \pPassive{} module.
    Let $Z$ be the simple cycle consisting of the eight modules in the target boundary in an anti-clockwise order starting at the \pHead{} module.
    Let $m_1, m_2$ be the first two \pPassive{} modules in $Z$ that are side-adjacent to the target hole.
    Since neither $m_1$ nor $m_2$ is conditionally removable, there exists two empty cells $e_1$ and $e_2$ corner-adjacent to $m_1$ and $m_2$, respectively, such that $e_1$ and $e_2$ are corner-adjacent to each other and thus lie in the same hole (see \cref{subfig:special-case1}).

    Once the \pHead{} module receives an activation message, it instructs $m_1$ to initiate a new request for cell $e_1$.
    At this point, $m_1$ temporarily acts as the \pHead{}, and all other modules within $Z$ temporarily act as \pTail{} modules.
    In the analysis, we show that any request results in a series of module movements that either fills the target cell or reduces the target hole to size one, without changing the position of any non-\pPassive{} module.
    This process leads to one of two outcomes (e.g., see \cref{subfig:special-case2,subfig:special-case3}):
    (1) Cell $e_1$ is filled by some module.
    Then $m_1$ becomes conditionally removable w.r.t. $Z$ and can slide into cell $e$.
    (2) All cells in the neighborhood of $e_1$ are filled by modules while $e_1$ remains empty.
    As a result, cell $e_2$ must be occupied by a module, which implies that $m_2$ is now conditionally removable w.r.t. $Z$ and can slide into cell $e$.
    In either case, a module moves into cell $e$, after which all modules adjacent to $e$ revert to their original roles and the algorithm proceeds as usual.

    In conclusion, for the purpose of the analysis, it suffices to focus on holes of size at least two:
    Once the target hole $e$ is reduced to size one, at most one additional request for some intermediate cell $e_1 \neq e$ is initiated.
    Whether this second request successfully fills the cell $e_1$ is irrelevant; in both cases, there is a conditionally removable module that can slide into $e$.

    \begin{figure}[t]
        \centering
        \begin{subfigure}[b]{0.33\linewidth}
            \centering%
            \includegraphics[width=\linewidth,page=1]{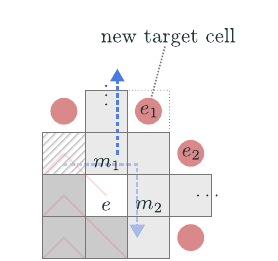}%
            \subcaption{}
            \label{subfig:special-case1}
        \end{subfigure}%
        \begin{subfigure}[b]{0.33\linewidth}
            \centering%
            \includegraphics[width=\linewidth,page=2]{special-case}%
            \subcaption{}
            \label{subfig:special-case2}
        \end{subfigure}%
        \begin{subfigure}[b]{0.33\linewidth}
            \centering%
            \includegraphics[width=\linewidth,page=3]{special-case}%
            \subcaption{}
            \label{subfig:special-case3}
        \end{subfigure}%
        \caption{Example configuration with a target hole of size one whose boundary has no conditionally removable module. Empty cells are marked by red dots; other cells may be empty or occupied. (a) Module $m_1$ initiates a request for $e_1$. (b) If $e_1$ is filled, $m_1$ becomes conditionally removable and slides into $e$. (c) If not, and the case repeats, $m_2$ becomes conditionally removable and slides into $e$.}
        \label{fig:special-case}
    \end{figure}
}{}
\section{Analysis}
\label{sec:analysis}

In the following, we prove that our distributed rhombus formation algorithm reconfigures any initially side-connected configuration $(M_0, s_0, l)$ with a leader in cell $l \in M_0$ into a rhombus centered at $l$, with its outermost layer possibly partially filled.
\iftoggle{fullversion}{}{
For clarity, we sketch the key proofs in this section and defer technical lemmas and full proofs to Appendix~\ref{sec:appendix-proofs}.
}

\begin{lemma}
    \label{lem:retired-is-rhombus}
    Let $\rhombus{} = (v_0, v_1,\ldots, v_{n-1})$ be the rhombus centered at $v_0 = l$.
    If the \pHead{} occupies cell $v_i$ where $0 \leq i < n$, then the set of all \pTail{} modules is exactly $\{v_0, \ldots, v_{i-1}\}$.
\end{lemma}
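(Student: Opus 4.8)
The plan is to prove the statement by induction on the synchronous rounds, strengthening the inductive hypothesis to the conjunction: (i)~the (true) \pHead{} occupies some rhombus cell $v_i$, (ii)~the set of \pTail{} modules is exactly $\{v_0,\ldots,v_{i-1}\}$, and (iii)~once a module enters state \pTail{} it never again moves or changes state. Statement~(iii) is needed so that cells already assigned to \pTail{} modules stay occupied by the same modules, and it follows directly from inspecting the transition rules together with the fact that every non-\pActive{} module is stationary; I would dispatch it first as a standing observation.

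For the base case, in the initial configuration the leader is the \pHead{} and occupies $v_0 = l$ (the rhombus is centered at $l$), so $i = 0$; every other module is \pPassive{}, hence the \pTail{} set is empty, which equals $\{v_0,\ldots,v_{-1}\}$. For the inductive step I would use the key structural fact that the \pHead{}/\pTail{} assignment advances in exactly one way: the \pHead{} stays put while a request propagates and while the \pActive{} module moves, and it advances only when the target cell becomes occupied, at which moment the module there becomes the new \pHead{} and the former \pHead{} becomes a \pTail{}. Thus in every round that is not an advancing round, neither the \pHead{} position nor the \pTail{} set changes (only the \pActive{} module moves, and by~(iii) no \pTail{} module is touched), so the invariant is preserved verbatim. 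In an advancing round, I first invoke the inductive hypothesis to argue that the target cell the \pHead{} computes is precisely the successor $v_{i+1}$: since by~(ii) the \pTail{} modules are exactly $\{v_0,\ldots,v_{i-1}\}$, the pattern of \pTail{} neighbors observed by the \pHead{} at $v_i$ pins down its position in the rhombus and hence the forwarding direction toward $v_{i+1}$ (this is the local position-detection argument of \cref{subsec:technical-details}, which I would cite rather than re-derive). When $v_{i+1}$ is filled and the \pHead{} advances, the new \pHead{} sits at the rhombus cell $v_{i+1}$, re-establishing~(i), and the \pTail{} set becomes $\{v_0,\ldots,v_{i-1}\}\cup\{v_i\}=\{v_0,\ldots,v_i\}$, which is exactly~(ii) for the new index $i+1$.

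The step I expect to demand the most care is the apparent circularity in the inductive step: the correctness of the target computation relies on the \pTail{} set already being $\{v_0,\ldots,v_{i-1}\}$, which is what we are proving. This is resolved cleanly inside the induction, because the target computation is carried out within a round at whose start the hypothesis is already available. The second delicate point is the holes-of-size-one case of \cref{par:special-case}, in which a boundary module $m_1$ temporarily assumes the \pHead{} role and the other modules of the cycle $Z$ temporarily act as \pTail{} modules. I would argue that this sub-process is self-contained with respect to the permanent assignment: the true \pHead{} remains at $v_i$, no module outside the standard advancing mechanism acquires a permanent \pTail{} state, and the sub-process terminates by filling $v_{i+1}$, after which every module of $Z$ reverts to its original role. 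Hence the permanent \pHead{}/\pTail{} assignment advances exactly as in the standard case, and the invariant holds before and after the sub-process with the index incremented by one.
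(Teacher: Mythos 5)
Your proof is correct and follows essentially the same route as the paper's: induction over rounds, observing that the \pHead{} and \pTail{} assignments are static except at the moment the target cell is filled, at which point the index advances by one. The extra care you take with the target-cell computation and the holes-of-size-one sub-process goes slightly beyond what the paper's proof spells out, but does not change the argument.
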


\begin{proof}
    Since the initial position of the \pHead{} module is $v_0$ and there are no \pTail{} modules, the lemma holds initially.
    Now, assume the lemma holds at the start of an arbitrary round.
    There is exactly one \pHead{} module at all times, and all \pTail{} modules are contained within $\{v_0, \ldots, v_{i-1}\}$.
    This implies that cell $v_{i}$ must be empty or occupied by a \pPassive{} or \pActive{} module.
    As the \pHead{} module does not move, and \pTail{} modules neither move nor change phase, the lemma holds until $v_i$ is occupied.
    Once $v_{i}$ is occupied by a \pPassive{} or \pActive{} module, that module becomes the new \pHead{} while the previous \pHead{} transitions to a \pTail{}.
    Thus, in the next round, the \pHead{} module occupies cell $v_{i}$ and the set of all cells occupied by a \pTail{} module expands to $\{v_0, \ldots, v_{i}\}$.
    The lemma follows inductively.
\end{proof}

The following lemma was proven by Dumitrescu and Pach in~\cite{Dumitrescu2004-PushingSquaresAround}.
\iftoggle{fullversion}{
    For ease of reference and to keep the presentation self-contained, we state and reprove it here using our terminology.
}{
    For ease of reference, we state it here using our terminology.
}

\begin{lemma}
    \label{lem:move-in-phole}
    Let $\phole{}$ be an arbitrary pseudo-hole, let $a, b \in \phole{}$ be any distinct cells side-adjacent to its boundary $B(\phole{})$.
    A module placed at $a$ can move into cell $b$ while remaining side-adjacent to cells in $B(\phole{})$ throughout its movement.
\end{lemma}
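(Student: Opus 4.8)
The plan is to prove this by a \emph{wall-following} traversal: I place the module so that a fixed cell of $B(\phole{})$ stays on one designated side (say its right), and let it walk along the boundary, using a slide when the wall continues straight ahead and a convex transition when the wall turns away. The heart of the argument is to show that at every cell of $\phole{}$ side-adjacent to $B(\phole{})$ such a boundary-following move is always available and keeps the module both inside $\phole{}$ and side-adjacent to $B(\phole{})$; chaining these moves then carries the module all the way around the boundary, passing through both $a$ and $b$.

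First I would record a structural fact that makes the traversal one-dimensional. Since $M$ is side-connected, no finite pseudo-hole can enclose an occupied cell: a side-connected ring of empty cells would corner-separate the enclosed modules from the rest of $M$, contradicting side-connectivity. Hence every finite pseudo-hole is simply connected, and for the unique infinite pseudo-hole the relevant boundary is just the single outer boundary cycle of $M$. In either case the cells of $\phole{}$ side-adjacent to $B(\phole{})$ form a single cyclic sequence $c_0, c_1, \dots, c_{m-1}, c_0$, in which consecutive cells are the positions produced by one step of the wall-following traversal. It therefore suffices to show (i) that this cyclic sequence visits every cell of $\phole{}$ side-adjacent to $B(\phole{})$, so that both $a$ and $b$ occur in it, and (ii) that a module can move from $c_i$ to $c_{i+1}$ by a single slide or convex transition while remaining side-adjacent to $B(\phole{})$.

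Part (ii) is where the real work lies, and I would handle it by a local case analysis on the wall configuration around the module's current cell $p$, classified by how the boundary behaves in the direction of travel. If the wall continues straight ahead, the cell ahead is empty and the two wall cells beyond it are occupied, so the slide move applies directly. If the wall ends at a convex corner of the obstacle, the cell ahead and the diagonal target cell are empty while the pivot cell is occupied, which is exactly the precondition for a convex transition around that corner; one checks that the landing cell is again side-adjacent to the same wall cell. The delicate case is a \emph{concave} corner, where an occupied cell blocks the direction of travel: here the traversal turns toward the next boundary wall, and I would verify that the blocking cell supplies the pivot or wall cell needed for either a slide (if the new wall continues) or a convex transition (if it immediately ends), so that a boundary-preserving move again exists. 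In every case the chosen move leaves the module inside $\phole{}$ and keeps it adjacent to a boundary cell, which is the invariant we need.

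For part (i), because the traversal follows the single boundary curve of $\phole{}$ all the way around and returns to its start, it passes every edge of that curve and hence visits every cell of $\phole{}$ sharing an edge with $B(\phole{})$; in particular it visits both $a$ and $b$. Starting the module at $a$ and iterating the traversal until it reaches $b$ then yields the desired motion. I expect the main obstacle to be the exhaustiveness of the corner case analysis in part (ii) -- in particular ruling out any local configuration in which the module is adjacent to the boundary but no boundary-preserving slide or convex transition exists -- rather than the global connectivity argument, which follows cleanly from the side-connectivity of $M$.
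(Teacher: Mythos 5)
Your proposal is correct and matches the paper's argument in essence: the paper also performs a boundary traversal of the pseudo-hole, formalized by enumerating the sides of the outline polygon $O$ along a closed curve $\mathcal{C}$ and translating consecutive sides into a stay (concave corner), slide (straight wall), or convex transition (convex corner), which is exactly your wall-following case analysis. Your preliminary observation that the boundary forms a single closed curve corresponds to the paper's assertion that $O$ is a simple polygon, so no new ideas are needed beyond what you describe.
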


\iftoggle{fullversion}{
    \begin{proof}
        Let $O$ be the point set obtained by intersecting the union of all cells in $\phole{}$ with the union of all cells in $B(\phole{})$.
        We refer to $O$ as the \emph{outline} of the pseudo-hole $\phole{}$.
        Since pseudo-holes are maximal and side-connected, $O$ is a simple polygon.
        
        Let $c, d \in B(\phole{})$ be any two occupied cells side-adjacent to $a$ and $b$, respectively.
        Again, by maximality of $\phole{}$, $c$ and $d$ must exist.

        Since $O$ is a simple polygon, we can define a closed simple curve $\mathcal{C}$ on points within $O$ that starts and ends at the midpoint of the common side of cells $a$ and $c$.
        Let $S = (s_0, s_1, \ldots, s_m)$ be the sequence of sides ordered according to their appearance in $\mathcal{C}$, and let $W = (w_0, w_1,\ldots, w_m)$ be the sequence of empty cells such that cell $w_i$ has side $s_i$ for all $0 \leq i \leq m$. 
        Note that cell $w_i$ is uniquely defined for all $i$ with $0\leq i \leq m$ since each side $s_i$ belongs to precisely two cells, one of which is occupied and lies in $B(\phole{})$, the other is the empty cell $w_i \in \phole{}$.
        Since $b \in \phole{}$ and $d \in B(\phole{})$, the common side of $b$ and $d$ must be fully contained in $\mathcal{C}$.
        Consequently, there is a $k$ with $0 \leq k \leq m$ such that $w_k = b$.

        A module placed at $a$ moves from cell $a = w_0$ to cell $b = w_k$ as follows:
        For each $i$ with $0 \leq i < k$: 
        If $w_{i+1} = w_{i}$, then the module stays.
        Otherwise, if $s_{i+1}$ is collinear to $s_i$, then the module slides to $w_{i+1}$.
        Otherwise, it performs a convex transition into cell $w_{i+1}$.
        After staying or moving $k$ times, the module is side-adjacent to $w_k = b$.
    \end{proof}
}{}

We now establish essential invariants that hold throughout the execution of the algorithm.
These invariants capture the structure of the request path (see \cref{def:request-path}) and its passive segment (see \cref{def:passive-segment}), and are essential for the algorithm's correctness.

\begin{lemma}
    \label{lem:invariants}
    Let $P = (m_1, m_2, \ldots, m_k)$ be the request path.
    In every round during the execution of the rhombus formation algorithm, the following properties hold as invariants:
    \begin{enumerate}
        \item For every module $m$ that stores a request from some module $m'$, the directed edge $(m', m)$ is contained in $P$.
        \item The request path $P$ contains at most one passive segment $S = (m_i, ..., m_j)$, and if $S$ is non-empty, then $j = k$, i.e., the passive segment ends at the last module of $P$.
        \item The passive segment $S$ is simple, i.e., no module appears more than once in $S$.
    \end{enumerate}
\end{lemma}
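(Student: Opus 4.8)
The plan is to prove all three invariants simultaneously by induction on the rounds, since they are tightly coupled: the structure of the request path determines which modules store requests, and the behavior of the active/passive transitions preserves that structure. The base case is the round in which a fresh request is initiated by the \pHead{}. At that moment, all memory has been reset (as stated in the algorithm description), so no module stores a stale request; the request path consists of a single initiated request edge, trivially satisfying invariant~(1), and the passive segment is a single \pPassive{} module at the end of $P$, satisfying~(2) and~(3). For the inductive step, I would assume the three invariants hold at the start of a round and verify that each of the possible actions a module can take preserves them.

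The case analysis tracks the events that modify $P$: (i) a \pPassive{} module forwarding a request anti-clockwise to the next module along the target boundary; (ii) a simply or conditionally removable module becoming \pActive{} and deleting its stored request; (iii) the \pActive{} module moving and deleting requests of adjacent modules; and (iv) the critical-pair handling where the \pActive{} module either sends a cleanup message (deleting requests back to $c$) or becomes \pPassive{} and resumes the request from $m_k$. For invariant~(1), the key observation is that a request is stored at $m$ exactly when the edge $(m',m)$ enters $P$ and deleted exactly when that edge leaves $P$, so the bookkeeping of stored requests stays in lockstep with $P$ by construction of the forwarding, activation, and cleanup rules. For invariant~(2), I would argue that forwarding extends the unique passive segment at its \pPassive{} end (since requests only advance into \pPassive{} modules, and a non-\pPassive{} module receiving a request ends the segment), so no second passive segment can form; when the \pActive{} module becomes \pPassive{} after a critical pair, it simply becomes the new last module $m_k$, extending the same segment rather than creating a new one. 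Invariant~(3) — simplicity of $S$ — follows because the target boundary is a cycle traversed anti-clockwise, and the first time the forwarding direction would revisit a module already in $P$ (Case~1, the request cycling back), the algorithm immediately switches to the reverse-traversal activation procedure rather than forwarding again, so a \pPassive{} module is never entered twice within $S$.

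The main obstacle I anticipate is the critical-pair handling in \cref{par:critical-pair-cases}, particularly the case where the \pActive{} module becomes \pPassive{} and acts as if it received a request from $m_k$. Here one must verify that pruning the pseudo-holes contained in $\hole_m$ from the target hole does not corrupt the invariants: the newly-\pPassive{} module sits at a bridge cell, and its reinsertion as $m_k$ must genuinely extend the existing simple passive segment rather than introducing a branch or a repeated node. This is also precisely where the case distinction of \cref{fig:active-critical-pair-algo} is used, so the correctness of the target-hole determination (which cell $c$'s successor/predecessor relationship to $m$ is read off) feeds directly into the invariant argument. I expect the bulk of the proof's technical weight to lie in carefully showing that the convention of treating the \pActive{} module's cell as empty keeps critical pairs well-defined across the transition, and that the "at most one critical pair per pseudo-hole pair" fact established earlier rules out $S$ self-intersecting when the request resumes. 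The forwarding and activation cases, by contrast, I expect to be routine bookkeeping once the invariant statements are unpacked.
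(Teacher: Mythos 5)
Your overall strategy coincides with the paper's: induction over rounds, with the same four-way case analysis (forwarding, activation, movement-with-deletion, critical-pair handling), and your treatment of the routine cases matches the paper's bookkeeping. You also correctly locate the hard part in the critical-pair transition. However, you identify that obstacle without actually resolving it, and resolving it is the substance of the paper's proof. What must be shown is that the \emph{local} case distinction of \cref{par:critical-pair-cases} --- based on whether $c \in P$, whether $m_{i+1} = m_{i-1}$, and whether $m_{i+1} \in N(m)$ --- correctly decides which of $\hole_m$, $\hole_b$ contains the target cell. If the \pActive{} module becomes \pPassive{} on the wrong side, invariant~(1) breaks immediately: the requests stored along the boundary of the other pseudo-hole would no longer lie on the target boundary, so their edges would drop out of $P$ while still being stored. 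Your proposal does not supply this argument; instead it redirects the technical weight toward keeping critical pairs well-defined under the ``\pActive{} cell counts as empty'' convention and toward the uniqueness of critical pairs, neither of which by itself yields the hole-identification claim.

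Concretely, the paper's proof of that claim has two ingredients you would need. First, an inductive anti-clockwise scan argument along $P$ establishing which empty cells adjacent to each $m_i$ belong to the target hole, which settles the easy subcases ($c \notin P$; $m_{i+1} = m_{i-1}$; $m_{i+1} \notin N(m)$). Second, for the delicate subcase $m_{i+1} \in N(m)$ and $m_{i+1} \neq m_{i-1}$, a topological argument: assuming the wrong hole assignment, one constructs a closed simple curve through the two relevant sides of $m_{i+1}$ lying entirely in that hole; depending on its orientation it either encloses all successors of $m_{i+1}$ but not $m_k$, or encloses $m_k$ but no successor of $m_{i+1}$ --- both contradicting that $m_k$ is a successor of $m_{i+1}$ on $P$. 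Without some argument of this kind, the inductive step for the critical-pair case is not established, so the proposal as written has a genuine gap at exactly the point you flagged as the anticipated obstacle.
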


\iftoggle{fullversion}{
    \begin{proof}
        The invariants are shown by induction over the rounds of the algorithm.
        
        Initially, no request has been initiated, and no module stores any request.
        Hence, all properties hold trivially.
        Now suppose that module $m_1$ initiates a request, i.e., $P = (m_1)$ and the passive segment $S$ is empty.  
        Since no module has yet stored any request, Property~(1) holds.
        Since $S$ is empty, Properties~(2) and (3) hold trivially as well.

        Now consider an arbitrary round in which $P$ is not empty and assume that the invariants hold.
        We distinguish the possible actions that may occur in the current round:

        First, assume module $m_k$ forwards a request to some module $m_{k+1}$.
        A \pPassive{} module only forwards a request if $m_{k+1}$ is a \pPassive{} module and not contained in $P$.
        Hence, $P$ and $S$ extend by a \pPassive{} module $m_{k+1}$ previously not contained in $P$ and $S$, and all properties hold.
        Whenever a non-\pPassive{} module forwards a request, then by (2), the passive segment is empty in that round.
        Hence, afterwards, the passive segment remains empty or consists only of $m_{k+1}$ and the properties hold.

        Second, a \pPassive{} module $m$ becomes \pActive{} upon receiving a request or activation message.
        In the first case, it must be the last module in $P$ by invariant (2).
        In the second case, the activation message was received from its successor in $P$ w.r.t. the previous round.
        Only \pPassive{} modules send activation messages, and after doing so, they delete their last stored request.
        By invariant (3), the passive segment $S$ is simple.
        Therefore, the successor of $m$ w.r.t. the previous round has deleted its only stored request and is no longer in $P$.
        Hence, in both cases $m$ is the last module in $P$, i.e., $m = m_k$, and it deletes its request upon becoming \pActive{}.
        It follows that the invariants carry over from the previous round, since only the last edge in $P$ is removed.

        \begin{figure}[t]
            \centering
            \includegraphics[width=0.5\linewidth,page=1]{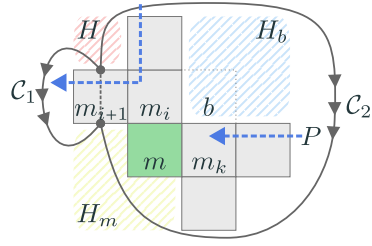}%
            \caption{An \pActive{} module $m$ side-adjacent to a critical pair $(m_k, m_i)$ with bridge cells $m$ and $b$. Module $m_i$ is a predecessor of $m_k$ on the request path, and its direct successor $m_{i+1}$ is contained in $N(m)$. In this case $\hole_b = \hole$ contains the target cell.}
            \label{fig:active-critical-pair}
        \end{figure}

        Third, consider the deletion of requests by an \pActive{} module, either directly or by its cleanup messages.
        As shown above, only the last module in $P$ can become \pActive{} and afterwards deletes its stored request, which corresponds to the last edge in $P$.
        It follows that an \pActive{} module $m$ always begins its movement side-adjacent to the last module in $P$.  
        It then moves by following $P$ in reverse.
        As long as it stays within the same pseudo-hole, by \cref{lem:move-in-phole} it can remain side-adjacent to the last module in $P$ while iteratively deleting the request corresponding to the last edge in $P$. 
        Whenever the last edge of $P$ is deleted, the invariants directly carry over from the previous round. 

        If the movement of $m$ is blocked by a critical pair, its behavior changes: the module either becomes \pPassive{} or initiates a cleanup message.  
        To prove that the invariants are still maintained in this case, we must show that the case distinction in \cref{par:critical-pair-cases} correctly determines which of the pseudo-holes adjacent to $m$ is part of the target hole.
        
        Let $(m_k, c)$ be the critical pair where $m_k$ is the last module on $P$.
        Let $b$ be the bridge cell adjacent to that pair, where $b \neq m$.
        Recall that we treat the \pActive{} module as an empty cell w.r.t. our terminology.
        Hence, $m$ is the other bridge cell adjacent to the critical pair.
        Let $\hole{}_m$ be the hole that contains cell $m$, if $b$ would be occupied, and let $\hole_b$ be the hole that contains cell $b$, if $m$ would be occupied (see \cref{fig:active-critical-pair}).
        W.l.o.g., we assume that $c$ is in direction $\N$ of $m$.
        The other four cases are analogous up to rotation.

        First, consider the case where $c$ is contained in $P$.
        Consider the last appearance of $c$ in $P$, i.e., maximize the index $i$ such that $m_i = c$.
        Consider an anti-clockwise scan around $m_1$ starting from the target cell $e$ and note that $m_2$ is the first side-adjacent module in that scan after the target cell.
        It follows that in the anti-clockwise scan around $m_1$, any empty cell between $e$ and $m_2$ belongs to the target hole, and the last of those empty cells is adjacent to $m_2$.
        Continue this process with $m_2, m_3$ and so forth until module $m_i$.
        Note that its successor $m_{i+1}$ must be distinct from $m_k$, since $m_k$ and $m_i$ are not side-adjacent, i.e., $k > i +1$.
        If $m_{i+1} \notin N(m)$ or $m_{i+1} = m_{i-1}$, then the scan around $m_i$ from $m_{i-1}$ to $m_{i+1}$ contains an empty cell side-adjacent to $m$.
        It follows that $\hole_m$ contains the target cell.
        Consider the case where $m_{i+1} \in N(m)$ and $m_{i+1} \neq m_{i-1}$.
        Since $i > 1$ by assumption and $b$ is empty, $m_{i-1}$ must lie in direction $\N$ of $m_i$.
        In this case, the only empty cell in the scan around $m_i$ from $m_{i-1}$ to $m_{i+1}$ is the cell in direction $\NW$ of $m_i$.
        Denote $\hole{}$ the hole containing that cell, and assume by contradiction that $\hole = \hole_m$.
        Consider a simple directed curve that starts and ends at the midpoints of the northern and southern side of $m_{i+1}$, respectively, and that is entirely contained within $\hole$.
        Since holes are defined w.r.t. corner-adjacency, such a curve must exist if $\hole = \hole_m$.
        Now connect the endpoints of that curve to a simple directed loop.
        Depending on the positioning of modules, that loop is either anti-clockwise or clockwise (see $\mathcal{C}_1$ and $\mathcal{C}_2$ in \cref{fig:active-critical-pair}).
        If the loop is anti-clockwise, then it encloses all successors of $m_{i+1}$, but not $m_k$.
        Otherwise, if the loop is clockwise, then it encloses no successor of $m_{i+1}$, but it encloses $m_k$.
        Both contradict that $m_k$ is a successor of $m_{i+1}$.
        Hence, it holds that $\hole{} \neq \hole_m$.
        Note that by placing a module at $m$, the target hole disconnects into precisely two distinct holes.
        Otherwise, $m_k$ would have only one side-adjacent neighbor, which implies that it is simply removable and contradicts that $m$ became \pActive{} before $m_k$.
        It follows that $\hole = \hole_b$.

        Second, consider the case where $c$ is not contained in $P$.
        Then $m_k$ is the first module in $P$ that is side-adjacent to $m$.
        This again follows since $m_k$ must have at least two neighbors in the round in which $m$ became \pActive{}, otherwise contradicting that $m$ becomes \pActive{} before $m_k$.
        Hence, $\hole_b$ must contain the target cell.
        
        We have shown that the \pActive{} module can determine whether $\hole_m$ or $\hole_b$ contains the target cell.
        If $\hole_m$ contains the target cell, then it sends a cleanup message to $m_k$ to delete all requests which followed the boundary of $\hole_b$.
        Note that if the \pActive{} module would become \pPassive{} instead in this case, then invariant (1) would be violated: by \cref{def:request-path}, $P$ is contained in the boundary of the target hole $\hole_m$, but since $\hole_b$ is distinct from $\hole_m$, they do not necessarily share the same boundary cells.
        Since $i < k$, the cleanup message must eventually arrive at $m_i$, after which the \pActive{} module is again side-adjacent to the last module of $P$.
        The cleanup message iteratively removes the last edge of $P$, such that the invariants carry over in each step.
        Otherwise, if $\hole_b$ contains the target cell, then the \pActive{} module becomes \pPassive{}.
        The boundary of $\hole_b$ contains all stored requests such that (1) holds.
        After becoming \pPassive{}, the module simulates receiving a request from $m_k$, i.e., it becomes $m_{k+1}$ in the next round, and (2) holds.
        Finally, (3) holds since modules delete all stored requests upon becoming \pActive{}.
        Once an \pActive{} module reaches the target cell and there are no cleanup messages currently being forwarded, the request path and passive segment are empty, and the invariants again hold trivially, as shown at the start of this proof.        

        This concludes that all invariants are maintained throughout execution.
    \end{proof}
}{
    \begin{proof}[Proof sketch]
        The invariants are shown by induction over the rounds of the algorithm.
        Before any request is initiated, the properties hold trivially.
        In each round, we analyze how the request path $P$ changes.
        Forwarding a request appends a single edge to $P$; a simple case distinction shows that all invariants are preserved.
        When a \pPassive{} module becomes \pActive{}, it is always the last module in $P$, and its request is deleted, which removes only the last edge of $P$.
        The \pActive{} module then moves by following $P$ in reverse.
        As long as it stays within the same pseudo-hole, by \cref{lem:move-in-phole} it can remain side-adjacent to last module in $P$ while iteratively deleting the request corresponding to the last edge in $P$. 
        Whenever the last edge of $P$ is deleted, the invariants directly carry over from the previous round. 

        Special care is needed when the \pActive{} module is blocked by a critical pair.
        The difficult case is where $m_{i+1} \neq m_{i-1}$ and $m_{i+1} \in N(m)$ (refer to \cref{subfig:proof-sketch-1} for the notation used).
        After $m$ becomes \pPassive{}, hole $\hole$ contains the target cell.
        We can show that $\hole \neq \hole_m$ by considering two closed simple curves $\mathcal{C}_1, \mathcal{C}_2$ that go once through both sides that $m_{i+1}$ shares with hole $\hole{}$ and $\hole_m$, respectively.
        One of the two curves must exist if $\hole = \hole_m$: $\mathcal{C}_1$ encloses all successors of $m_{i+1}$ but not $m_k$, and $\mathcal{C}_2$ encloses $m_k$ but no successors of $m_{i+1}$.
        Both contradict that $m_k$ is a successor of $m_{i+1}$. 
        Thus it holds that $\hole = \hole_b$.
    \end{proof}
    
    \begin{figure}[t]
        \centering
            \begin{subfigure}[b]{0.333\linewidth}
                \centering%
                \includegraphics[width=\linewidth,page=1]{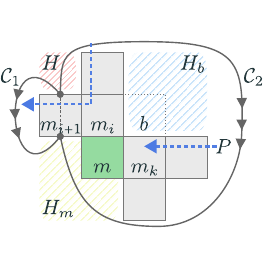}%
                \subcaption{}
                \label{subfig:proof-sketch-1}
            \end{subfigure}%
            \begin{subfigure}[b]{0.333\linewidth}
                \centering%
                \includegraphics[width=\linewidth,page=2]{proof-sketch}%
                \subcaption{}
                \label{subfig:proof-sketch-2}
            \end{subfigure}%
            \begin{subfigure}[b]{0.333\linewidth}
                \centering%
                \includegraphics[width=\linewidth,page=3]{proof-sketch}%
                \subcaption{}
                \label{subfig:proof-sketch-3}
            \end{subfigure}%
        \caption{(a) An \pActive{} module $m$ side-adjacent to a critical pair $(m_k, m_i)$ with bridge cells $m$ and $b$. Module $m_i$ is a predecessor of $m_k$ in $P$, and its direct successor $m_{i+1}$ is contained in $N(m)$. $\mathcal{C}_1$ and $\mathcal{C}_2$ are entirely within $\hole$ and contradict that $\hole = \hole_m$. (b) The simple polygon $\outline{}$ does not enclose the target hole; at least four vertices have exterior angle $\frac{\pi}{2}$. (c) The polygon $O$ encloses the target hole; at least two vertices have exterior angle $-\frac{\pi}{2}$ or there is a side of length at least two.}
        \label{fig:proof-sketch}
    \end{figure}
}

We call the request path \emph{well-formed} if it satisfies the three properties from \cref{lem:invariants}.
Assuming that the request path is well-formed, we can show that whenever a passive segment cycles back on itself, then that cycle contains a conditionally removable module.

\begin{lemma}
    \label{lem:request-loop-1}
    Let $P = (m_1, m_2, \ldots, m_k)$ be a well-formed request path with non-empty passive segment $S = (m_i, \ldots, m_k)$.
    Suppose the next anti-clockwise module along the target boundary after $m_k$ is some module $m_j$ with $i \leq j \leq k-2$.
    Then at least one of the modules $m_j, m_{j+1}, \ldots, m_k$ is conditionally removable.
\end{lemma}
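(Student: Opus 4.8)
The plan is to view $Z=(m_j,\dots,m_k)$ as a simple closed lattice polygon and to locate a conditionally removable module at one of its convex corners. The argument has three parts: checking that $Z$ is a genuine simple cycle, proving that the target hole lies in the \emph{exterior} of $Z$ (this is the crux), and then certifying \cref{def:conditionally-removable} at a convex corner.

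For the first part I would invoke \cref{lem:invariants}: the passive segment is simple, so $m_j,\dots,m_k$ are pairwise distinct. Consecutive modules are side-adjacent because requests are forwarded only between side-adjacent modules, and $m_k$ is side-adjacent to $m_j$ since, by assumption, $m_j$ is the next boundary module after $m_k$. Hence $Z$ is a simple rectilinear polygon, each of whose vertices is a \pPassive{} module on the target boundary, and throughout the traversal the target hole lies on a fixed side of the directed cycle.

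The main obstacle is the second part: ruling out that the hole lies \emph{inside} $Z$. This is what matters, because only when the hole is exterior does the traversal wind around the enclosed modules in the direction that makes its convex corners left turns, and left turns are exactly the corners at which removability is easy to certify. I would argue as follows. The initiator $m_1$ of the request is non-\pPassive{}, yet it is side-adjacent to the target hole, since it issued the request for a cell of that hole. Every vertex of $Z$ is adjacent to the hole, so the vertices of $Z$ lie on the ring of cells immediately surrounding it; a simple cycle built only from such cells can enclose the hole only by using the entire ring, which would force $m_1\in Z$ and contradict that $Z$ consists solely of \pPassive{} modules. (Equivalently, $Z$ is a proper sub-loop of the hole's boundary walk, formed at its first self-crossing, and such a sub-loop pinches off a set of modules rather than the hole.) Turning this intuition into a clean topological argument is the delicate step; once done, $Z$ bounds a finite nonempty set of modules with the hole on its outside.

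Since $Z$ then winds once around the enclosed modules with the hole outside, its turning angles sum to $+2\pi$, so it has at least four convex corners. At such a corner $m_p$, I would run the anti-clockwise neighbor scan from the predecessor $m_{p-1}$: the corner being a left turn means the successor $m_{p+1}$ is reached only after the scan has passed the two side-cells of $m_p$ other than $m_{p-1}$ and $m_{p+1}$, so those two cells are empty and belong to the target hole. Thus $m_p$ is side-adjacent to the target hole and its only side-adjacent modules are its two cycle-neighbours, so both conditions of \cref{def:conditionally-removable} hold. As at most two of the at least four convex corners can coincide with the endpoints $m_j,m_k$ (where the scan predecessor differs from the cycle predecessor), at least one convex corner lies strictly inside the segment, and that module is conditionally removable, proving the claim.
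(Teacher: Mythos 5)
Your part (1) and your endgame (part (3)) are sound and essentially reproduce the paper's first sub-case: in the non-enclosing situation the paper likewise extracts at least four convex corners (via the exterior-angle sum of the interface polygon $\outline{}$ rather than the turning angles of the cell-centre cycle) and certifies \cref{def:conditionally-removable} there; your scan-based certification even lets you skip the paper's separate treatment of critical pairs inside $Z$. The genuine gap is exactly the step you flag as delicate, and it is not merely a missing formalization: the claim is false. A cycle $Z$ of \pPassive{} modules on the target boundary \emph{can} enclose the target hole without containing $m_1$, because $m_1$ can lie strictly \emph{inside} $Z$ rather than on it. Concretely: put the leader at the origin, one \pPassive{} module at $(1,0)$, and the full ring of sixteen \pPassive{} modules at Chebyshev distance $2$; the target cell is $(0,1)$ and the target hole is the C-shaped hole consisting of the seven cells at Chebyshev distance $1$ other than $(1,0)$. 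No module on the boundary walk is simply removable, the request runs $l,(1,0),(2,0)$ and then once around the ring back to $(2,0)$, and $Z$ is the entire ring: it encloses the target hole, the leader and $(1,0)$, yet contains no non-\pPassive{} module. Your "entire ring" argument fails because the modules adjacent to a hole need not form a single one-cell-thick ring; here the boundary of the C-hole has an inner part (the leader and $(1,0)$) and an outer part (the ring), and the cycle closes on the outer part alone.

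This is not a repairable orientation issue in your part (3) either: in that configuration the request traverses $Z$ \emph{clockwise} around the enclosed region, the turning angles sum to $-2\pi$, there are no left turns at all, and the conditionally removable modules (e.g., $(0,2)$ or $(2,1)$) are \emph{straight} vertices of the cycle whose fourth side-neighbour happens to lie in a different hole. So the enclosing case must be handled by a separate argument, which is what the paper does: if $\outline{}$ encloses the target hole it must also enclose a non-\pPassive{} module adjacent to the target cell (since $Z$ is all \pPassive{}), forcing $\outline{}$ to have either two vertices of exterior angle $-\frac{\pi}{2}$ or a side of length at least two, and a short case analysis on the cells behind such a side then produces a conditionally removable module. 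You would need to add this second branch (or an equivalent one) to complete the proof.
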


\iftoggle{fullversion}{
    \begin{proof}
        Since $j \leq k-2$, there is at least one module between $m_j$ and $m_k$ in $S$.
        Note that otherwise $m_j$ would be the only side-adjacent neighbor of $m_k$, as it is both its direct predecessor in $P$, and the next anti-clockwise module along the target boundary, in which case $m_k$ would be simply removable.
        Since $S$ is simple (as we assume $P$ to be well formed), it follows that $Z \coloneqq (m_j, m_{j+1}, ..., m_k)$ is a simple cycle.

        First, assume that there is a critical pair $(c_1,c_2)$ with $c_1,c_2 \in Z$. 
        Since $Z$ is a cycle, each module in $Z$ has at least two side-adjacent neighbors within $Z$.
        By definition of a critical pair, both $c_1$ and $c_2$ have at most two side-adjacent neighbors.
        It follows directly that both $c_1$ and $c_2$ are conditionally removable.

        \begin{figure}[tbp]
            \centering
            \begin{subfigure}[b]{0.333\linewidth}
                \centering%
                \includegraphics[width=\linewidth,page=1]{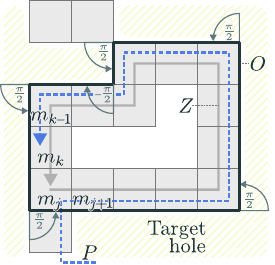}%
                \subcaption{}
                \label{subfig:request-loop-1}
            \end{subfigure}%
            \begin{subfigure}[b]{0.333\linewidth}
                \centering%
                \includegraphics[width=\linewidth,page=2]{request-loop}%
                \subcaption{}
                \label{subfig:request-loop-2}
            \end{subfigure}%
            \begin{subfigure}[b]{0.333\linewidth}
                \centering%
                \includegraphics[width=\linewidth,page=3]{request-loop}%
                \subcaption{}
                \label{subfig:request-loop-3}
            \end{subfigure}%
            \caption{Illustration of the two cases from \cref{lem:request-loop-1} and the terminology used in \cref{lem:request-loop-2}. (a) $\outline{}$ does not enclose the target hole; at least four vertices have exterior angle $\frac{\pi}{2}$. (b) $\outline{}$ encloses the target hole; at least two vertices have exterior angle $-\frac{\pi}{2}$ or there is a side of length at least two. (c) The C-shape in the common neighborhood of $m$ and its side-adjacent empty cell $e$.}
            \label{fig:request-loop}
        \end{figure}

        Second, assume the contrary.
        By \cref{def:request-path}, $Z \subseteq P$ is fully contained in the target boundary.
        Consider removing all modules side-adjacent to $m_j$ except for those contained in $Z$, i.e., except for $m_{j+1}$ and $m_k$.
        The remaining set of modules decomposes into at least two side-connected components, one of which contains $Z$ as its boundary (see \cref{subfig:request-loop-1,subfig:request-loop-2}).
        Let $\outline{}$ be the point set obtained by intersecting the union of all cells occupied by $Z$ with the target hole (including the removed neighbors of $m_j$).
        Since we assume that $Z$ contains no critical pair, $\outline{}$ is a simple polygon whose vertices are corners of the cells occupied by $Z$.

        There are two cases to distinguish.
        First, assume $\outline{}$ does not enclose the target hole (see \cref{subfig:request-loop-1}).
        Since we operate on the square grid, the exterior angle at each polygon vertex, defined as the angle between one side and the extension of the adjacent side, is $\pm \frac{\pi}{2}$.
        The sum of exterior angles of any simple polygon is $2\pi$.
        Thus, there are at least four vertices with exterior angle $\frac{\pi}{2}$, each corresponding to a cell sharing exactly two consecutive sides with the target hole.
        At most one of these cells can be occupied by $m_j$, so there are at least three conditionally removable modules in $Z$.

        Next, assume $\outline{}$ encloses the target hole (see \cref{subfig:request-loop-2}).
        The target cell is always adjacent to some non-passive module.
        Since $Z$ contains only \pPassive{} modules (as it lies entirely within the passive segment $S$), $\outline{}$ must enclose one additional cell besides the target cell.
        This implies that the width or height of $\outline{}$ is at least two cells.
        Thus, $\outline{}$ has two vertices with exterior angle $-\frac{\pi}{2}$ or a side of length at least two.
        Vertices with exterior angle $-\frac{\pi}{2}$ correspond to cells in $Z$ sharing two consecutive sides with the target hole (with $Z$ now lying on the outside of the polygon).
        At most one of these cells can be occupied by $m_j$, so the other is conditionally removable.
        If $\outline{}$ contains a side of length at least two, then there are consecutive $a, b \in Z$ both side-adjacent to the target hole. 
        Let $c$ and $d$ be the cells side-adjacent to $a$ and $b$, respectively, that are not in the target hole and not within $Z$ (see \cref{subfig:request-loop-2}).
        If $c$ is empty, then $a$ is conditionally removable.
        If $d$ is empty, then $b$ is conditionally removable.
        If neither is empty, both $a$ and $b$ are conditionally removable.

        In all cases, $Z$ contains a conditionally removable module, concluding the lemma.
    \end{proof}
}{
    \begin{proof}[Proof sketch]
        Let $Z \coloneqq (m_j, m_{j+1}, \ldots, m_k)$ be the portion of the request path closing the loop.
        Since $Z$ lies entirely within the passive segment $S$, it forms a simple cycle by \cref{lem:invariants}.

        If $Z$ contains a critical pair module, then that module is conditionally removable.
        Otherwise, we analyze the geometry of $Z$ on the grid.
        Consider the region outlined by $Z$, including the cells in $Z$ if $Z$ does not enclose the target hole (see \cref{subfig:proof-sketch-2}), and excluding them otherwise (see \cref{subfig:proof-sketch-3}).
        The boundary of this region forms a simple polygon $\outline{}$.

        If $Z$ does not enclose the target hole, then by standard angle-sum arguments for simple polygons, $\outline{}$ has at least four vertices with exterior angle $\frac{\pi}{2}$, at least one of which corresponds to a conditionally removable module in $Z$.
        If $Z$ does enclose the target hole, then it must also enclose the \pHead{} module, implying the existence of two vertices with exterior angle $-\frac{\pi}{2}$ or a side of length at least two, both of which imply a conditionally removable module in $Z$.
    \end{proof}
}

In the previous proof, we relied on the fact that the simple cycle~$Z$ consisted entirely of \pPassive{} modules. 
If the search for a conditionally removable module is triggered when the next forwarding direction of a request is not \pPassive{}, this assumption no longer holds. 
To handle this situation, we require the following property:  
A finite set of cells $V \subset C$ is \emph{weakly convex} if for any pair of cells $c_1, c_2 \in V$, there exists a shortest path from $c_1$ to $c_2$ (with respect to Manhattan distance) that lies entirely within~$V$.

If the set of all non-\pPassive{} modules is weakly convex, we can argue that a conditionally removable module exists if the target hole has size at least two.
However, this does not hold in general: there are configurations with a target hole of size one in which no module is conditionally removable.
This is precisely the edge case discussed in \iftoggle{fullversion}{\cref{par:special-case}}{Appendix~\ref{par:special-case}}.

\iftoggle{fullversion}{
    In the following, we show that the rhombus, together with the neighborhood of the next cell to be filled, is always weakly convex.
    This is sufficient for our algorithm to progress even in the edge case.
}{
    It can be shown (see Appendix~\ref{sec:appendix-proofs}) that the rhombus is weakly convex, and that the rhombus including the neighborhood of the next cell to be filled is weakly convex as well.
    This is sufficient for our algorithm to progress even in the edge case.
}

\iftoggle{fullversion}{
    \begin{lemma}
        \label{lem:extended-rhombus-weakly-convex}
        For any $n \geq 0$, the set of cells $\{v_0, \ldots v_{n}\} \cup N(v_n)$ is weakly convex, where $v_0, ..., v_n$ are the cells of a rhombus of size $n+1$.
    \end{lemma}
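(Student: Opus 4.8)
The plan is to place the leader at the origin, so that layer $L_i$ is the set of cells at $L_1$-distance $i$, and to write the rhombus of size $n+1$ as $\{v_0,\dots,v_n\} = D_{r-1}\cup A$, where $D_{r-1}=\{c : d(l,c)\le r-1\}$ is the full diamond of radius $r-1$ and $A$ is the \emph{contiguous} clockwise arc of layer $L_r$ that ends at $v_n$. I would then establish weak convexity of $W:=\{v_0,\dots,v_n\}\cup N(v_n)$ in three stages: first for $D_{r-1}$, then for the rhombus $D_{r-1}\cup A$, and finally after adjoining $N(v_n)$. The basic tool throughout is that a shortest $L_1$-path is exactly a monotone lattice staircase, so for each pair $c_1,c_2$ it suffices to exhibit one monotone staircase contained in the set. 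For the full diamond the cleanest route is the rotation $u=x+y$, $v=x-y$, under which $d$ becomes the $L_\infty$-metric and $D_{r-1}$ becomes the axis-aligned box $[-(r-1),r-1]^2$ (on the even sublattice), inside which a coordinate-monotone geodesic never leaves the box; equivalently one keeps $xy$-coordinates and routes the staircase hugging the inside of the boundary. Since $D_{r-1}\subseteq\mathcal{R}$, every pair of cells inside $D_{r-1}$ is then already handled.

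For pairs meeting layer $L_r$ I would use a radius-decreasing first step. If $c_2\in L_r$ and $c_1$ is strictly more central, then at least one of the (at most two) inward side-neighbours of $c_2$ lies on a shortest path to $c_1$: otherwise $c_1$ would be at least as far from the origin as $c_2$ in each coordinate, giving $d(l,c_1)\ge r$ and contradicting $d(l,c_1)\le r-1$. One such inward step lands in $D_{r-1}$, after which I apply its weak convexity. For two cells of $L_r$ in the same quadrant I use the staircase that alternates between the arc cells of $A$ and the $L_{r-1}$-cells beneath them; this staircase is monotone, hence shortest, and stays in $\mathcal{R}$ precisely because $A$ is a contiguous arc containing every $L_r$-cell between the two endpoints. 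For cells of $L_r$ in different quadrants the shortest path can instead be pushed through $D_{r-1}$, keeping only the two endpoints on $L_r$.

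The crux is the final stage, adjoining $N(v_n)$. Writing $v_n$ in coordinates and distinguishing whether it is a quadrant-interior, an axis, or a tip cell, the few cells of $N(v_n)\setminus\mathcal{R}$ (at most six, lying on $L_r$, $L_{r+1}$ and $L_{r+2}$) are the outward side-neighbours, the clockwise successor $v_{n+1}$, and at most one outward corner. I would verify two facts. First, every such new cell $w$ has a side-neighbour in $D_{r-1}\cup A\cup\{v_n\}$, and by the same radius argument an inward step from $w$ begins a shortest path to any target of smaller radius, reducing paths out of $w$ to the already-settled cases. Second, the new cells together with $v_n$ and its inward neighbours form a small staircase-connected patch, so pairs among the new cells, and pairs consisting of a new and an old cell, likewise admit a monotone staircase inside $W$.

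The delicate point, and the step I expect to require the most care, is to rule out that some pair in $W$ is served \emph{only} by staircases passing through a missing cell, i.e.\ an $L_r$-cell outside the arc $A$ or an $L_{r+1}$-cell not adjacent to $v_n$. I would argue this cannot happen: any monotone staircase turning at such a cell can be rerouted one step inward through the opposite corner cell, which lies in $D_{r-1}$, without changing its length; and a straight run of the staircase can meet $L_r$ only at its extreme cell, which is either the staircase's endpoint (hence a cell of $W$) or a turning point (handled by the reroute). Combined with the contiguity of $A$, the only $L_r$-cells a monotone staircase between two endpoints of $W$ can be forced to use lie on the contiguous arc between them, which is contained in $A\cup\{v_{n+1}\}\subseteq W$. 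Combining the three stages yields weak convexity of $W$ for every $n\ge 0$.
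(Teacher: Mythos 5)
Your strategy is sound and the lemma is true, but you take a genuinely different route from the paper. The paper proves weak convexity of $\{v_0,\ldots,v_{n-1}\}$ by induction on $n$: each inductive step only has to route the single new cell $v_{n-1}$ to the smaller rhombus (one monotone step inward or sideways, then the induction hypothesis), and the same one-step-then-recurse pattern is repeated once more to absorb $N(v_n)$. You instead argue directly, decomposing the set into the full diamond $D_{r-1}$, the contiguous outer arc $A$, and the neighborhood patch, and you handle the diamond with the rotation $u=x+y$, $v=x-y$ that turns it into an $L_\infty$ box --- a clean trick the paper does not use. What the induction buys is that every case reduces to one pair with a distinguished endpoint; what your decomposition buys is a non-inductive, more geometric argument, at the price of having to treat arc--arc and neighborhood pairs explicitly. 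Two points need repair. First, your claim that every new cell of $N(v_n)\setminus\{v_0,\ldots,v_n\}$ has a side-neighbour in $D_{r-1}\cup A\cup\{v_n\}$ is false for the outward corner cell: for $v_n=(a,b)$ with $a,b>0$ this is $(a+1,b+1)\in L_{r+2}$, whose only side-neighbours inside the set are the two $L_{r+1}$ cells. The descent from it therefore needs two inward steps, each justified by your radius argument, so the fix is immediate, but the claim as stated would fail. Second, the closing paragraph about ruling out pairs served \emph{only} by staircases through missing cells is largely redundant: weak convexity asks for the existence of one good shortest path, which your constructive staircases already provide, and the rerouting argument as stated is imprecise (the ``opposite corner'' of a turn at an $L_r$ cell lies inward only for turns of one orientation). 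Neither issue affects the viability of the plan.
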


    \begin{proof}
        We first prove by induction that $\rhombus_n = \{v_0, \ldots, v_{n-1}\}$ is weakly convex for all $n \geq 0$.
        
        When $n = 0$, $\rhombus_0 = \emptyset$, which is trivially weakly convex.
        Let $n > 0$, let $c_1, c_2 \in \rhombus_n$ be arbitrary, and assume that $\rhombus_{n-1}$ is weakly convex.
        We distinguish three cases:
        \emph{Case 1:}
        $c_1 = c_2 = v_{n-1}$.
        A single cell is trivially weakly convex.
        \emph{Case 2:}
        $c_1, c_2 \in \rhombus_{n-1}$.
        By the induction hypothesis, $\rhombus_{n-1}$ is weakly convex.
        Hence there is a shortest path from $c_1$ to $c_2$ fully within $\rhombus_{n-1} \subset \rhombus_{n}$.
        \emph{Case 3:}
        $c_1 = v_{n-1}, c_2 \in \rhombus_{n-1}$.
        In the square grid, all shortest paths are monotone paths, e.g., they go east and south but never west and north.
        Assume that $c_1$ is extremal within its rhombus layer, w.l.o.g. northernmost.
        Then all cells in $\rhombus_{n-1}$ lie in the southeast or southwest quadrant relative to $c_1$.
        We obtain a shortest path from $c_1$ to $c_2$ by taking one step south, and afterwards take a southeast or southwest monotone path within $\rhombus_{n-1}$, which exists by the induction hypothesis.
        Now assume that $c_1$ is not extremal within its rhombus layer.
        W.l.o.g., assume that it lies in the northwest quadrant of $\rhombus_n$.
        Then there is no cell in the northwest quadrant relative to $c_1$, and $c_1$ has neighbors in $\rhombus_{n-1}$ to the east and south.
        If $c_2$ lies in the southwest quadrant, take one step south and follow a southwest monotone path within $\rhombus_{n-1}$.
        If $c_2$ lies in the southeast quadrant, take one step south and follow a southeast monotone path within $\rhombus_{n-1}$.
        If $c_2$ lies in the northeast quadrant, take one step east and follow a northeast monotone path within $\rhombus_{n-1}$.
        In all cases, we take a single step from $c_1$ to some neighbor in $\rhombus_{n-1}$, and afterwards follow a shortest path within $\rhombus_{n-1}$ maintaining monotonicity.
        Hence, the result is a shortest path entirely within $\rhombus_{n}$, which concludes that $\rhombus_{n}$ is weakly convex for all $n \geq 0$.

        It remains to show that $R_n \coloneq \rhombus_{n+1} \cup N(v_n)$ is weakly convex.
        Let $c_1, c_2 \in R_n$ be arbitrary. 
        We distinguish three cases:
        \emph{Case 1:}
        $c_1, c_2 \in \{v_n\} \cup N(v_n)$.
        $\{v_n\} \cup N(v_n)$ is a $3 \times 3$ square of cells. 
        It is easy to see that a full square is weakly convex.
        \emph{Case 2:}
        $c_1, c_2 \in \rhombus_{n+1}$.
        Since $\rhombus_{n+1}$ is weakly convex, there is a shortest path from $c_1$ to $c_2$ within $\rhombus_{n+1} \subset R_n$.
        \emph{Case 3:}
        $c_1 \in N(v_n) \setminus \rhombus_{n+1}, c_2 \in \rhombus_{n+1}$.
        If $v_n$ is extremal within its rhombus layer, w.l.o.g. northernmost, then no cell in $\rhombus_{n}$ lies north of any cell in $N(v_n) \setminus \rhombus_n$.
        All cells in $N(v_n) \setminus \rhombus_{n+1}$ have a neighbor to the south.
        Hence, we can move from $c_1$ into $\rhombus_{n+1}$ by both southwest and southeast monotone paths, and afterwards take a path within $\rhombus_{n+1}$ that follows the same monotonicity.
        If $v_n$ is not extremal, w.l.o.g. it lies in the northeast quadrant of $\rhombus_{n+1}$, then there is no cell in the northeast quadrant relative to $c_1$ that is contained in $\rhombus_{n+1}$.
        We again distinguish the quadrant in which $c_2$ lies relative to $c_1$.
        If $c_2$ lies in the northwest quadrant, move west from $c_1$ into $\rhombus_{n+1}$ and follow a northwest-monotone shortest path to $c_2$.
        If $c_2$ lies in the southeast quadrant, move south from $c_1$ into $\rhombus_{n+1}$ and follow a southeast-monotone shortest path to $c_2$.
        If $c_2$ lies in the southwest quadrant, then: 
        Move west from $c_1$ as long as $c_2$ is not strictly south or until entering $\rhombus_{n+1}$, then move south until $c_2$ is not strictly west or until entering $\rhombus_{n+1}$, and once inside $\rhombus_{n+1}$, follow a southwest-monotone path to $c_2$.

        In all cases, we construct a shortest path from $c_1$ to $c_2$ that lies entirely within $R_n$.
        Hence, $R_n = \{v_0, ..., v_{n-1}\} \cup \{v_n\} \cup N(v_n)$ is weakly convex.    
    \end{proof}
}{}

\begin{lemma}
    \label{lem:request-loop-2}
    Let $P = (m_1, m_2, \ldots, m_k)$ be a well-formed request path with non-empty passive segment $S = (m_i, \ldots, m_k)$.
    Suppose the next anti-clockwise module along the target boundary after $m_k$ is not \pPassive{}, and that $m_k$ is not simply removable.
    If no module in $S$ is conditionally removable, then the target hole has size one.
\end{lemma}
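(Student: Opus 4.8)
The plan is to prove the contrapositive: if the target hole has size at least two, then some module of $S$ is conditionally removable. I would first build the simple cycle underlying the argument, reusing the construction from Case~2 of the algorithm. Since $m_k$ is \pPassive{} while the next anti-clockwise boundary module $m$ is non-\pPassive{}, maximality of $S$ forces its predecessor $m_{i-1}$ to be non-\pPassive{} as well; by \cref{lem:extended-rhombus-weakly-convex} the set of non-\pPassive{} modules is weakly convex, so a shortest (hence monotone) path $P'$ from $m$ to $m_{i-1}$ exists within it. As $S$ consists only of \pPassive{} modules and $P'$ only of non-\pPassive{} ones, they are disjoint, and using well-formedness of $P$ (\cref{lem:invariants}) the concatenation $Z \coloneqq (m_i, \ldots, m_k, m, \ldots, m_{i-1})$ is a simple cycle. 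Following \cref{lem:request-loop-1}, I would first dispose of the case that $Z$ contains a critical pair: each module of a critical pair has at most two side-adjacent neighbors, which on the cycle $Z$ are exactly its two cycle-neighbors, so it is conditionally removable. The two cells of a critical pair are corner-adjacent and the only two cells completing their $2\times 2$ block are the empty bridge cells; hence no monotone module-path can contain both, so $P'$ cannot, and at least one critical-pair module lies on $S$ -- giving a conditionally removable module on $S$ as required.

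Assuming now that $Z$ contains no critical pair, let $\outline{}$ be the interface polygon between the target hole and the cells of $Z$, defined exactly as in \cref{lem:request-loop-1}; it is a simple rectilinear polygon. The crux of the argument -- and the step I expect to be the main obstacle -- is to show that the vertices witnessing conditional removability can be placed on the passive segment $S$ rather than on the non-\pPassive{} detour $P'$. I would argue this through the exterior angles of $\outline{}$, which sum to $\pm 2\pi$. The portion of $\partial\outline{}$ running along $P'$ is the hole-facing silhouette of a monotone staircase, so its \emph{net} signed turning is bounded by a constant independent of the hole size, as are the two junction turns at the edges $m_k m$ and $m_{i-1} m_i$. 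Consequently the entire turning budget of $\pm 2\pi$ is concentrated along the $S$-portion of $\partial\outline{}$, which forces the convexity (or, in the enclosed case, the reflex vertices and long sides) onto $S$. Making this concentration precise -- in particular controlling how the monotone silhouette of $P'$ meets the hole boundary so that it cannot absorb the turning budget -- is the technical heart of the proof, and the place where weak convexity of the non-\pPassive{} modules is indispensable (it is exactly what fails to hold for Case~2 without it).

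With the localization in hand, it remains to run the two cases of \cref{lem:request-loop-1}. If $\outline{}$ does not enclose the target hole, the angle sum yields at least four convex vertices of exterior angle $\tfrac{\pi}{2}$; by the concentration bound all but a bounded number lie on $S$, and any convex vertex on $S$ is a cell sharing two consecutive sides with the hole, hence conditionally removable by \cref{def:conditionally-removable}. If $\outline{}$ does enclose the target hole, then since the hole has size at least two, $\outline{}$ encloses at least two cells and therefore has width or height at least two; this produces either a reflex vertex (exterior angle $-\tfrac{\pi}{2}$) or a side of length at least two, which the concentration bound again places on $S$, and the local case analysis of \cref{lem:request-loop-1} turns each of these into a conditionally removable module on $S$. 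In every case $S$ contains a conditionally removable module, contradicting the hypothesis; therefore, whenever no module of $S$ is conditionally removable, the target hole must have size one.
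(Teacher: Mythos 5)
Your critical-pair reduction and the contrapositive framing are fine, but the core of your argument --- transplanting the polygon/angle-sum machinery of \cref{lem:request-loop-1} to the mixed cycle $Z = S \circ P'$ --- has a genuine gap, and it is exactly the step you yourself flag as the ``technical heart.'' In \cref{lem:request-loop-1} the polygon $\outline{}$ exists because the entire cycle runs along the target boundary, so the intersection of the cells of $Z$ with the target hole closes up into a simple rectilinear curve. Here the detour $P'$ is a shortest path through the set of non-\pPassive{} modules (the partially built rhombus); it need not touch the target hole at all, so the ``interface polygon'' degenerates into an open arc and the exterior-angle sum of $2\pi$ has no closed curve to apply to. Even if you replace $\outline{}$ by the boundary of the region covered or enclosed by $Z$, the convex vertices you would harvest need not be side-adjacent to the target hole (condition (1) of \cref{def:conditionally-removable}), and your concentration bound controls only the \emph{net} signed turning along $P'$: a monotone staircase can contribute many $+\frac{\pi}{2}$ vertices cancelled by $-\frac{\pi}{2}$ ones, so bounding its net turning does not force any particular convex vertex onto $S$. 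Your case (b) additionally needs $\outline{}$ to enclose the entire target hole, which requires ruling out the hole spilling into further pseudo-holes behind critical pairs having only one endpoint on $Z$ --- a fact your argument never establishes.

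The paper's proof avoids the polygon entirely and is worth comparing against. It defines $D$ as the set of cells of $S$ side-adjacent to the target hole and, assuming no module of $S$ is conditionally removable, proves three local claims: $D$ is non-empty; no cell of $D$ belongs to \emph{any} critical pair $(c_1,c_2)$ with $c_2 \in M$; and no two consecutive modules of $S$ both lie in $D$. It then shows that for each $m \in D$ with empty side-neighbor $e$, the occupied cells of $N(m) \cap N(e)$ form a C-shape containing a second element of $D$; the union of two such C-shapes covers all of $N(e)$ except one corner cell, which the critical-pair claim forces to be occupied as well, so the target hole is exactly $\{e\}$. Weak convexity of the non-\pPassive{} set enters through repeated short local shortest-path contradictions, not through a global turning budget. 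To salvage your route you would have to prove the localization claim from scratch; as written, the object your angle count is performed on does not exist in general.
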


\iftoggle{fullversion}{
    \newcommand{\fixed}{\mathcal{F}}
    \begin{proof}
        The general structure of the proof is as follows.
        Let $D$ denote the set of all cells in $S$ that are side-adjacent to the target hole.
        Assuming that no module in $S$ is conditionally removable (see \cref{def:conditionally-removable}), we show the following three properties:
        (1) $D$ is non-empty, 
        (2) no critical pair $(c_1,c_2)$ exists with $c_1 \in D$ and $c_2 \in M$,
        and (3) for any consecutive $m_j, m_{j+1}$ in $S$, if $m_j \in D$, then $m_{j+1} \notin D$.
        We then consider the intersection of the neighborhood of any $m \in D$ with the neighborhood of some side-adjacent empty cell $e$.
        Using properties (1)--(3), we argue that modules within this intersection form a C-shape containing a module in $D$ that is distinct from $m$.
        The lemma then follows by considering the union of the C-shapes of two modules in $D$ together with property (2).
        We now proceed to prove each of the above claims in order.
        
        Let $m_l$ be the current position of the \pHead{} module within the rhombus $\rhombus{}$.
        Define $\fixed{}$ as the set of all \pHead{} and \pTail{} modules, i.e., the first $l+1$ cells of the rhombus.
        If there is a temporary \pHead{} module, i.e., we are in the special case handling described in \cref{par:special-case}, then include $m_{l+1} \cup N(m_{l+1})$ in $\fixed{}$.
        We have shown in \cref{lem:extended-rhombus-weakly-convex} that $\fixed{}$ is a weakly convex set.
        Denote $m'$ the next anti-clockwise along the target boundary after $m_k$, and let $P'$ be a shortest path from $m'$ to $m_{i-1}$ that is fully contained within $\fixed{}$ (which exists since $\fixed{}$ is weakly convex).
        Note that $P'$ and $S$ are disjoint since $S$ contains only \pPassive{} modules, and $S$ is simple by the assumption that $P$ is well-formed.
        It follows that the concatenation $Z = S \circ P'$ is a simple cycle.

        Claim (1): $D$ is non-empty.
        Assume by contradiction that $S$ consists only of the module $m_k$.
        Let $m' \in \fixed{}$ denote the next anti-clockwise module along the target boundary after $m_k$.
        Since every request is initiated by some module in $\fixed{}$, it follows that $m_{k-1} \in \fixed{}$.
        If $m' = m_{k-1}$, then $m_k$ has only one side-adjacent neighbor, contradicting the assumption that it is not simply removable.
        Hence, assume $m' \neq m_{k-1}$.
        If $m'$ and $m_{k-1}$ are not corner-adjacent, then the unique shortest path from $m'$ to $m_{k-1}$ must include $m_k \notin \fixed{}$, contradicting the weak convexity of $\fixed{}$.
        Therefore, assume that $m'$ and $m_{k-1}$ are corner-adjacent.
        By weak convexity, there exists a shortest path from $m'$ to $m_{k-1}$ that includes a module corner-adjacent to $m_k$, again contradicting that $m_k$ is not simply removable.
        We conclude that $S$ contains at least two modules.
        Since $S$ forms a contiguous path along the target boundary, at least one of these modules must be side-adjacent to the target hole.
        It follows that $D$ is non-empty.

        Claim (2): No critical pair $(c_1,c_2)$ exists with $c_1 \in D$ and $c_2 \in M$.
        Each module $c_1 \in D$ has a side-adjacent empty cell (a cell in the target hole), and two side-adjacent neighbors in $Z$.
        If the fourth cell side-adjacent to $c_1$ is empty, then $c_1$ would be conditionally removable, contradicting our assumption that no module in $S$ is conditionally removable.
        It follows that $c_1$ has exactly three side-adjacent neighbors and therefore cannot be part of a critical pair, since any critical pair module has at most two side-adjacent neighbors.
        This concludes the proof of the claim.

        Claim (3): For any consecutive $m_j, m_{j+1}$ in $S$, if $m_j \in D$, then $m_{j+1} \notin D$.
        Assume by contradiction that both $m_j$ and $m_{j+1}$ are in $D$.
        As shown in the previous claim's proof, each of them must have exactly three side-adjacent neighbors: two of which are their predecessor and successor in $Z$.
        The side-adjacent neighbors of $m_{j}$ are $m_{j-1}, m_{j+1}$ and some module $a \in N(m_{j+1})$.
        Similarly, the side-adjacent neighbors of $m_{j+1}$ are $m_{j}$, its successor in $Z$, which is either $m_{j+2} \in S$ if it exists, or $m' \in \fixed{}$ otherwise, and some module $b \in N(m_{j})$.
        Together, these neighbors form a path from $a$ to $m_{j+1}$ via $b$ that is fully contained in the neighborhood $N(m_j)$.
        This implies that $m_j$ is conditionally removable, which contradicts our assumption that no module in $S$ is conditionally removable.
        Hence, no two consecutive modules in $S$ are both contained in $D$. 

        Let $m \in D$ be arbitrary (which exists by Claim (1)), and let $e$ be the empty cell in the target hole that is side-adjacent to $m$.
        By definition of $D$, such a cell $e$ exists, and it is unique, since $m$ is not conditionally removable, i.e., it has exactly three side-adjacent modules.
        We continue to show that the common neighborhood of $m$ and $e$ has a C-shape and contains a module from $D$ that is distinct from $m$.
        Consider the labeling of cells in $N(m) \cap N(e)$ as illustrated in \cref{subfig:request-loop-3}.
        In words: let $a$ and $b$ denote the successor and predecessor of $m$ in the cycle $Z$, respectively.
        Let $c$ and $d$ be the cells such that $(m, a, c, e)$ and $(m, b, d, e)$ each form a 4-cycle.

        First, consider the case where $a, b \notin \fixed{}$.
        In this case, there is a module at both $c$ and $d$; otherwise, it holds that $a \in D$ or $b \in D$, which would imply that two consecutive modules in $S$ belong to $D$, contradicting Claim (3).
        The unique shortest path from $c$ to $d$ goes through the empty cell $e$.
        If both $c$ and $d$ were in $\fixed{}$, this would contradict the weak convexity of $\fixed{}$.
        Hence, at least one of $c$ or $d$ must be outside of $\fixed{}$.
        Since both are side-adjacent to $e$, at least one of them must be contained in $D$.

        Second, consider the case where exactly one of $a, b$ is contained in $\fixed{}$.
        W.l.o.g., assume $a \notin \fixed{}$ and $b \in \fixed{}$.
        As in the previous case, there must be a module in cell $c$; otherwise $a \in D$, contradicting Claim (3).
        Any shortest path from $c$ to $b$ must contain $m$ or $e$.
        Since $m, e \notin \fixed{}$, weak convexity implies that $c \notin \fixed{}$.
        If there is a module in cell $d$, then there is nothing more to show.
        So assume by contradiction that cell $d$ is empty.
        Let $f_1, f_2$ and $f_3$ be the cells in $N(e)$ side-adjacent to $c, e$ and $d$, respectively (see \cref{subfig:request-loop-3}).
        Any shortest path from a cell $f_i$ to $b \in \fixed{}$ must contain one of the cells $c, e_s$ or $d$, all of which lie outside of $\fixed{}$.
        Thus, by weak convexity, none of the $f_i$ can be in $\fixed{}$.
        Moreover, none of the $f_i$ can be empty:
        If $f_1$ were empty, then $c \in D$ would be conditionally removable, contradicting our assumption.
        If $f_2$ were empty, then $f_1 \in D$ and $f_2 \in D$ would be consecutive in $S$, contradicting Claim (3).
        If $f_3$ were empty, then $f_2 \in D$ would be conditionally removable, again a contradiction.
        Finally, if $d$ were empty, then $f_2 \in D$ and $f_3 \in D$ would be consecutive in $S$, contradicting Claim (3) once more.
        Hence, by contradiction, cell $d$ must be occupied.

        Third, the case $a, b \in \fixed{}$ cannot occur, as the shortest path from $a$ to $b$ contains $m \notin \fixed{}$.

        We have shown that for any $m \in D$, the modules in the intersection $N(m) \cap N(e)$ form a C-shape that includes a module $m^* \in D$ with $m^* \neq m$.  
        Together with Claim~(2), we now complete the proof of the lemma.

        Consider the union of the two intersections $N(m) \cap N(e)$ and $N(m^*) \cap N(e)$.  
        This union covers all cells in the neighborhood $N(e)$, except for a single cell that is corner-adjacent to $e$ (e.g., if $m^* = c$ in \cref{subfig:request-loop-2}, then the only cell not covered is $f_3$).
        By the weak convexity of $\fixed{}$, there can be no critical pair $(c_1, c_2)$ with $c_1, c_2 \in \fixed{}$.  
        Combined with Claim~(2), this implies that the target hole contains exactly one pseudo-hole.
        Consequently, the single remaining cell not contained in the union of the two C-shapes must be occupied as well. 

        We conclude that the target hole has size one.
    \end{proof}
}{}

The following lemma shows that each request successfully fills or nearly fills the target cell within a linear number of rounds, which is crucial for termination and runtime.

\begin{lemma}
    \label{lem:fill-target}
    Consider a round in which a module initiates a request for some target cell, and assume at least one \pPassive{} module exists.
    Then, within $\O(n)$ rounds, either the target cell is filled or the target hole consists of just the target cell.
\end{lemma}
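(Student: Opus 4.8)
The plan is to follow a single request from its initiation and to bound the number of rounds until the target cell is filled or the target hole shrinks to a single cell, charging every round against the length of the target boundary. First I would record two size bounds. Since each of the $n$ modules exposes at most four unit edges to empty cells, the total edge-boundary of $M$ has length at most $4n$, and the boundary of any single hole is contained in it; hence the target boundary has length $\O(n)$. Moreover, \cref{lem:invariants} guarantees that the request path $P$ is well-formed, so by \cref{def:request-path} each directed boundary edge occurs at most once in $P$, and therefore $|P| = \O(n)$ at all times.

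Next I would decompose the request's lifetime into \emph{phases}, cut at each point where the \pActive{} module is blocked by a critical pair (\cref{par:critical-pair-cases}). Within a single phase only three kinds of rounds occur: forwarding the request one edge further anti-clockwise along the current target boundary; selecting a module to activate; and moving the \pActive{} module backward along $P$. For the selection I would invoke the loop lemmas: if the passive segment cycles back on itself, \cref{lem:request-loop-1} supplies a conditionally removable module, and if instead the next forwarding direction is non-\pPassive{}, \cref{lem:request-loop-2} supplies one \emph{unless} the target hole already has size one, in which case the claim holds immediately. The backward search for this module traverses part of the (simple) passive segment while deleting the requests it passes, and by \cref{lem:move-in-phole} the subsequent backward motion of the \pActive{} module stays within one pseudo-hole and costs one round per deleted edge of $P$; both are thus bounded by the forward work already spent on those edges.

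The core of the argument is to bound the cumulative \emph{forward} work across all phases by $\O(n)$. A phase ends either when the \pActive{} module reaches the target cell, so that the cell is filled and we are done, or at a critical pair $(m_k, c)$ with bridge cells $m$ and $b$, where the holes $\hole_m$ and $\hole_b$ are disjoint and exactly one of them contains the target cell. The algorithm then abandons the other side: either a cleanup message deletes the requests stored along it, or the module becomes \pPassive{} and continues into the target side, pruning the abandoned pseudo-holes from the target hole. I would argue that, since the target hole only ever shrinks and distinct pruned pseudo-holes are pairwise disjoint, the boundary sub-walks traversed in different phases are, up to their shared critical pairs, disjoint portions of the initial target boundary of length $\O(n)$; hence the total forward work telescopes to $\O(n)$. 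Since every backward round -- the activation search, the \pActive{} module's motion, and the cleanup -- consumes a request created by some earlier forwarding round, the backward work is bounded by the forward work and is likewise $\O(n)$, giving $\O(n)$ rounds in total.

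The hard part will be making this pruning bound fully rigorous: I must show that once a pseudo-hole is abandoned, its boundary edges never reappear on the target boundary in a later phase, so that the per-phase forward costs genuinely telescope rather than accumulating an extra factor for the number of phases. This rests on the non-\pPassive{} modules forming a weakly convex region (\cref{lem:extended-rhombus-weakly-convex}), which -- exactly as exploited in \cref{lem:request-loop-2} -- partitions the target hole into pseudo-holes that, once crossed at a critical pair, cannot be re-entered. Establishing this monotone, irreversible shrinking of the target hole is precisely what turns the naive bound of $\O(n)$ per phase over possibly many phases into a single linear bound.
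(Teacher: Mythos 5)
Your proposal is correct and follows essentially the same route as the paper: the paper also splits the request's lifetime at critical-pair blockings, charges the backward motion and cleanup to previously forwarded requests, and amortizes the total forward work against the $\O(n)$ boundary of the target hole -- formalized there as a potential $\mathcal{S}$ (the number of unit sides between the target hole and its boundary) that drops by $\Omega(k-j)$ each time a pseudo-hole is pruned, which is exactly your telescoping disjointness claim. The irreversibility you flag as the hard part is handled in the paper not via weak convexity but by observing that the blocked \pActive{} module, upon becoming \pPassive{} at the bridge cell, splits the target hole into two disjoint holes, permanently separating the pruned side (whose boundary contains the suffix $m_{j+1},\dots,m_k$ of the request path) from the side containing the target cell.
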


\iftoggle{fullversion}{
    \begin{proof}
    Let $e$ be the target cell, and assume that the target hole has size larger than one, as otherwise there is nothing to show.
    Consider the first round after the request for cell $e$ is initiated in which the request is \emph{not} forwarded to the next module along the target boundary.
    Let $P = (m_1, m_2, \ldots, m_k)$ be the request path and $S$ be its passive segment in that round.
    Let $\fixed{}$ be defined as in the proof of \cref{lem:request-loop-2}.
    By weak convexity of $\fixed{}$, no \pPassive{} module is fully enclosed by modules in $\fixed{}$.
    Moreover, since we maintain that the set of all modules is side-connected (since we only ever move simply removable and conditionally removable modules), and the request path traverses the target boundary starting in $\fixed{}$, it must eventually contain a \pPassive{} module, i.e., $S$ is non-empty.
    By \cref{lem:invariants}, $m_k$ is \pPassive{}.
    This implies that in that round module $m_k$ does not forward the request because (1) it is simply removable, (2) the next anti-clockwise module along the target boundary after $m_k$ is contained in $S$, or (3) that module is contained in $\fixed{}$.
    In all three cases, some \pPassive{} module $m_i$ with $i \leq k$ becomes \pActive{}.
        
    Let $\phole_i$ be the pseudo-hole containing cell $m_i$ after removing its module, and let $\phole_0$ be the pseudo-hole containing the target cell.
    Note that both simple and conditional removability require $m_i$ to be side-adjacent to the target hole.

    First, consider the case where $\phole_i \neq \phole_0$. 
    Consider a shortest corner-path from $m_i$ to target cell $e$ entirely within the target hole $\hole$.
    Let $b_1$ be the last cell on that path within $\phole_i$, and $b_2$ the first cell on that path \emph{not} within $\phole_i$.
    Note that these two cells are uniquely defined, since for any adjacent pseudo-hole, $\phole_i$ contains exactly one bridge cell.
    Let $(c_1, c_2)$ be the critical pair adjacent to the bridge cells $b_1, b_2$.
    At least one of $c_1, c_2$ must be contained on the request path.
    Let $j$ be the maximum index w.r.t. $P$ for which $m_j = c_1$ or $m_j = c_2$.
    Since $m_i$ and $b_1$ are in the same pseudo-hole, by \cref{lem:move-in-phole}, the \pActive{} module can move from $m_i$ to $b_1$ remaining side-adjacent to the boundary of $\phole_i$.
    Moving and deleting all requests until it is side-adjacent to $m_j$ requires $\O(k-j)$ rounds.
    Let $\mathcal{S}$ be the number of square sides that are shared between an empty cell in the target hole and an occupied cell in its boundary.
    Note that while an occupied cell may belong to the boundaries of multiple holes, each such square side can be uniquely assigned to the boundary of a single hole.
    After the \pActive{} module becomes \pPassive{}, the target hole decomposes into precisely two holes, one of which contains the now empty cell $m_i$.
    Before $m_i$ became \pActive{}, all nodes on $P$ with index larger than $j$ are contained in the boundary of that hole.
    Hence, it takes $\O(k-j)$ rounds to decrease $\mathcal{S}$ by $\Omega(k-j)$.
    Since $\mathcal{S} = \O(n)$ initially, it takes $\O(n)$ rounds until $\mathcal{S} \leq 4$, in which case either the target hole has size one, or $\phole_i = \phole_0$.

    Now consider the case where $\phole_i = \phole_0$.
    In this case, by \cref{lem:move-in-phole}, the \pActive{} module can directly move into the target cell.
    It takes $\O(2k - i)$ rounds until $m_i$ is \pActive{} and another $\O(i)$ rounds until it arrives at the target cell.
    Hence, it takes $\O(n)$ in total.
\end{proof}
}{
    \begin{proof}[Proof sketch]
        One can show that whenever \pPassive{} modules exist, eventually some module becomes \pActive{}.
        If the \pActive{} module $m$ lies in the same pseudo-hole as the target cell, it can reach the target cell in $O(n)$ rounds. 
        Otherwise, it becomes \pPassive{} at a bridge cell, effectively disconnecting the target hole. 
        This reduces the number $\mathcal{S}$ of sides shared between the target hole and its boundary by an amount proportional to the length of the request path segment erased before $m$ becomes \pPassive{}. 
        Since $\mathcal{S} = O(n)$ initially and $\mathcal{S}$ decreases proportional to the number of rounds it takes for $m$ to become \pActive{}, move, and then become \pPassive{} again, after $O(n)$ rounds an \pActive{} module is in the target pseudo-hole, or $\mathcal{S} \leq 4$, in which case the target hole has size one.
    \end{proof}
}

The following corollary captures how the \pHead{} phase progresses through the configuration.

\begin{corollary} 
    \label{cor:head-progress} 
    Let $m$ be a module that becomes the \pHead{} at some round during the execution of the algorithm.
    After $\O(n)$ rounds, either $m$ becomes a \pTail{} and another module takes over as the new \pHead{}, or $m$ terminates.
\end{corollary}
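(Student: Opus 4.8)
The plan is to trace the fate of a module $m$ from the round it becomes the \pHead{}, branching on whether the target cell it identifies is occupied or empty, and to bound each branch by $\O(n)$ rounds using \cref{lem:fill-target} as the workhorse. First I would dispose of the easy branch. When $m$ becomes \pHead{} it locates itself within the rhombus and reads off its target cell in $\O(1)$ rounds; this is well-defined because \cref{lem:retired-is-rhombus} guarantees that the \pTail{} modules occupy exactly the prefix of $\rhombus{}$ preceding $m$, so $m$'s position and hence its successor cell are determined by its local neighborhood. If that target cell is already occupied, the algorithm's promotion rule immediately makes the occupying module the new \pHead{} and demotes $m$ to \pTail{} in the following round, so this branch costs only $\O(1)$ rounds.

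The substantive branch is when the target cell is empty and $m$ initiates a request, and here I would split on whether any \pPassive{} module exists. If none does, the request is forwarded anti-clockwise along the target boundary without ever meeting a \pPassive{} module; since each directed boundary edge is traversed at most once (\cref{def:request-path}) and the target boundary has length $\O(n)$, the request returns to $m$ within $\O(n)$ rounds and triggers its termination. If at least one \pPassive{} module exists, then \cref{lem:fill-target} applies directly: within $\O(n)$ rounds either the target cell is filled, in which case the occupying module becomes the new \pHead{} and $m$ becomes \pTail{}, or the target hole is reduced to size one.

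For the remaining case, in which the target hole has shrunk to size one, I would invoke the size-one handling of \cref{par:special-case}. If the target boundary already contains a conditionally removable module, it slides into the target cell in $\O(1)$ further rounds. Otherwise the special case initiates a single auxiliary request for an intermediate cell $e_1$; applying \cref{lem:fill-target} a second time bounds this request by another $\O(n)$ rounds, after which the analysis of \cref{par:special-case} guarantees that either $e_1$ is filled (making $m_1$ conditionally removable) or the neighborhood of $e_1$ is saturated while $e_1$ stays empty (making $m_2$ conditionally removable), so in both outcomes a module slides into the original target cell. Once that cell is filled, the occupying module is promoted to \pHead{} and $m$ is demoted to \pTail{}. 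Summing the at most two applications of \cref{lem:fill-target} with the constant-round role transitions yields the claimed $\O(n)$ bound.

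The main obstacle I anticipate is precisely the size-one case: I must argue that the auxiliary request does not itself cascade into a further size-one special case, i.e., that the recursion depth is a constant, and that the second $\O(n)$ cost composes additively rather than multiplicatively with the first. This is exactly where the structural guarantee of \cref{par:special-case} is essential --- namely that a single auxiliary request suffices and terminates by creating a conditionally removable module regardless of whether $e_1$ is ultimately filled --- and I would lean on it to close the argument, so that the overall count remains a constant number of $\O(n)$ phases and hence $\O(n)$ rounds in total.
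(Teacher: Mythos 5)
Your proposal is correct and follows essentially the same route as the paper: the corollary is obtained by applying \cref{lem:fill-target} (at most twice, via the size-one special case of \cref{par:special-case}, whose guarantee that a single auxiliary request always yields a conditionally removable module is exactly what caps the recursion at depth one), plus the observation that a request with no \pPassive{} modules traverses the target boundary once in $\O(n)$ rounds before the \pHead{} terminates. Your explicit handling of the occupied-target-cell and no-\pPassive{} branches only makes the paper's implicit case analysis more complete.
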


\iftoggle{fullversion}{
    This follows directly from \cref{lem:fill-target}, together with the handling of the edge case as discussed in \cref{par:special-case}.
    In that case, the target hole has size one and no module in its boundary is conditionally removable.
    To resolve this, a request for a different cell outside the target hole is sent.
    Regardless of whether this second cell is filled or becomes a new hole of size one, we have shown that the target boundary subsequently contains a conditionally removable module.
    Hence, at most two consecutive requests are necessary to fill the target cell, both of which take $\O(n)$ rounds.
}{
    This follows from \cref{lem:fill-target}, together with the handling of the edge case discussed in Appendix~\ref{par:special-case}.
    In that edge case, the target hole has size one and no module in its boundary is conditionally removable.
    To resolve this, a request for a different cell outside the target hole is sent.
    Regardless of whether this second cell is filled or becomes a new hole of size one, we have shown that the target boundary subsequently contains a conditionally removable module.
    Hence, two consecutive requests suffice to fill the target cell, both of which take $\O(n)$ rounds.
}
If no \pPassive{} modules remain to take over the \pHead{} role, the request traverses the entire target boundary in $O(n)$ rounds and the \pHead{} terminates.

\begin{theorem}
    \label{thm:conclusion}
    Let $(M_0, s, l)$ be an arbitrary, initially side-connected configuration of $n$ modules with a leader module at cell $l \in M_0$, and all modules except for the leader being in the same state.
    Let $\rhombus = (v_0, v_1, \ldots, v_{n-1})$ be a rhombus centered at cell $v_0 = l$.
    After $\O(n^2)$ rounds of the rhombus formation algorithm, all cells in $\rhombus{}$ are occupied by a terminated module, and side-connectivity is preserved in each of those rounds.
\end{theorem}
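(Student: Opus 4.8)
The plan is to assemble the final theorem from the machinery already developed, treating the rhombus formation as a sequence of $n$ "fill" phases, each advancing the \pHead{} by one cell of the rhombus. I would structure the argument around \cref{cor:head-progress}, which tells us that each \pHead{} either passes the role along (after some module occupies the next rhombus cell) or terminates, all within $\O(n)$ rounds. Combined with \cref{lem:retired-is-rhombus}, which guarantees that the set of \pTail{} modules is always exactly the prefix $\{v_0, \ldots, v_{i-1}\}$ of the rhombus whenever the \pHead{} sits at $v_i$, this gives a clean invariant: the rhombus fills up in strict sequential order $v_0, v_1, v_2, \ldots$, and the \pHead{} monotonically advances through the $v_i$.

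The correctness part I would argue as follows. By \cref{lem:retired-is-rhombus}, the only way the \pHead{} at $v_i$ can hand off its role is for cell $v_i$ to become occupied, after which the module there becomes the new \pHead{} at $v_i$ and the old \pHead{} joins the \pTail{} set, extending the occupied rhombus prefix to $\{v_0,\ldots,v_i\}$. \cref{cor:head-progress} guarantees this happens within $\O(n)$ rounds, \emph{provided} a \pPassive{} module still exists to be pulled in; the edge-case handling for size-one target holes (via \cref{lem:request-loop-2} and the second-request argument) ensures a conditionally removable module is always eventually found, so progress never stalls. When no \pPassive{} modules remain, the request traverses the whole target boundary in $\O(n)$ rounds without finding one, and the \pHead{} terminates; all remaining modules then terminate upon observing a terminated neighbor. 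At termination, every module is a \pTail{} or terminated, so by \cref{lem:retired-is-rhombus} the occupied cells are exactly $\{v_0,\ldots,v_{n-1}\} = \rhombus{}$, with the last layer possibly partially filled.

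For the runtime, I would telescope: there are at most $n$ handoffs of the \pHead{} role (one per rhombus cell), each taking $\O(n)$ rounds by \cref{cor:head-progress}, plus a final $\O(n)$-round termination sweep, giving $\O(n^2)$ rounds total. The connectivity claim is the one point requiring care throughout rather than a one-line appeal: I must verify that the set of non-moving modules stays side-connected in every round. The key observation is that the algorithm only ever moves a \emph{simply removable} or \emph{conditionally removable} module, and by \cref{def:simply-removable} and \cref{def:conditionally-removable} removing such a module preserves side-connectivity of $M$; moreover, by \cref{lem:move-in-phole} the \pActive{} module traverses its pseudo-hole while remaining side-adjacent to the boundary at every intermediate step, so connectivity is never broken mid-motion either.

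I expect the main obstacle to be the connectivity-preservation argument under the \emph{synchronous, conflict-prone} execution model rather than the counting. The theorem asserts side-connectivity in \emph{each} round, but the model permits simultaneous state-change and movement conflicts resolved arbitrarily. I would need to argue that, because there is at most one \pActive{} module at any time (as stated in the algorithm overview) and all other modules merely relay messages, the only genuine motion in a round is that single \pActive{} module following a removability guarantee — so the delicate invariants of \cref{lem:invariants} are precisely what rule out the livelock and connectivity-breaking scenarios. Tying the per-round connectivity cleanly to the "treat the \pActive{} cell as empty" convention used in \cref{lem:invariants} is the subtle step I would spend the most words on.
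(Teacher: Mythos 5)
Your proposal follows essentially the same route as the paper's proof: it decomposes the execution into at most $n$ \pHead{} handoffs, each bounded by $\O(n)$ rounds via \cref{cor:head-progress} and ordered by \cref{lem:retired-is-rhombus}, adds an $\O(n)$ termination sweep, and argues connectivity from the fact that only simply or conditionally removable modules ever move (with \cref{lem:move-in-phole} covering the motion itself). The extra attention you give to the single-\pActive{}-module discipline and conflict resolution is a reasonable elaboration of what the paper leaves implicit, but it is not a different argument.
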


\begin{proof}
    By \cref{lem:retired-is-rhombus}, the rhombus is filled incrementally with each transition of some module into the \pHead{} phase.
    Since \pTail{} modules never change phase again, there are at most $n$ \pHead{} transitions in total.
    By \cref{cor:head-progress}, each transition happens within $\O(n)$ rounds of each other.
    Once there is no more \pPassive{} module, i.e., the \pHead{} is positioned at cell $v_{n-1}$, it takes another $\O(n)$ rounds until the \pHead{} terminates.
    It follows that the total number of rounds until the \pHead{} module terminates is $\O(n^2)$.
    Once this occurs, termination propagates through all remaining \pTail{} modules in at another $\O(n)$ rounds.
    Since \pTail{} modules never move, all rhombus cells eventually contain terminated modules.

    A module becomes \pActive{} only if it is simply removable (see \cref{def:simply-removable}) or conditionally removable as part of a simple cycle $Z$ within $M$ (see \cref{def:conditionally-removable}), both of which preserve connectivity.  
    Moreover, whenever $Z$ consists only of \pPassive{} modules, we have shown in \cref{lem:request-loop-1} that such a conditionally removable module is always found before leaving $Z$.
    It follows that all module movements maintain side-connectivity in every round.
\end{proof}

\section{Utilizing Parallelization}
\label{sec:parallelization}

\newcommand{\tree}{\ensuremath\mathcal{T}}

In the previous sections, we presented and analyzed a distributed rhombus formation algorithm in which modules fill positions within the rhombus sequentially in $\mathcal{O}(n^2)$ rounds.
\iftoggle{fullversion}{
    No sequential algorithm can be faster in the worst case, as illustrated in the following:

    Consider a rhombus consisting of $j + 1$ layers for some $j \geq 0$.
    Layer $L_0$ contains a single cell, and each layer $L_i$ for $1 \leq i \leq j$ contains exactly $4i$ cells.
    Therefore, the total number of rhombus cells is $1 + \sum_{i=1}^j 4i = 2j^2 + 2j + 1$.
    Now consider a straight line of $n = 2j^2 + 2j + 1$ cells that originates at the center of the rhombus.
    Since the rhombus contains $j + 1$ layers, exactly $2j^2 + j$ cells of the line lie outside the rhombus.
    It follows that at least
    $\sum_{i = 1}^{2j^2 + j} i = \Omega(j^4)$
    rounds are required to move all outer modules into the rhombus sequentially.
    Since $n = \Omega(j^2)$, this implies a lower bound of $\Omega(n^2)$ rounds.
}{
    It is easy to show that no sequential algorithm can be faster in the worst case, not even centralized algorithms.
}

To improve upon the runtime of the sequential algorithm, we must enable modules to move in parallel.
However, preserving connectivity during parallel movement is non-trivial.
While the property of simple removability can be detected locally, identifying conditionally removable modules requires extensive communication.
Additionally, not all of them can move at the same time without disconnecting the configuration. 
Coordinating mutual exclusive movement would require some form of leader election, which in itself takes $\Omega(n^2)$ rounds under constant memory constraints.
Instead, we follow a different strategy:

We begin with a preprocessing phase in which a spanning tree $\tree{}$, rooted at the initial leader module, is constructed. 
This can be done via a distributed breadth-first search in $\mathcal{O}(n)$ rounds, using only constant memory per module.
As long as $\tree{}$ remains a valid spanning tree, any leaf in $\tree{}$ can move freely without violating connectivity. 
Our parallel algorithm utilizes this by letting leaves move clockwise along the boundary of holes whenever the movement reduces their depth in $\tree{}$.
Notably, we do not compute or store depth values.
Instead, each module maintains a pointer to its parent (i.e., the neighbor closer to the root), and movement decisions are made locally based on the parent pointers within a module’s neighborhood.

\begin{figure}[tbp]
    \centering
    \begin{subfigure}[b]{0.333\linewidth}
        \centering%
        \includegraphics[width=\linewidth,page=1]{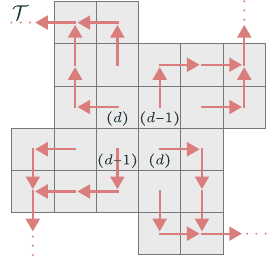}%
        \subcaption{}
        \label{subfig:locked-log}
    \end{subfigure}%
    \hspace{2cm} 
    \begin{subfigure}[b]{0.333\linewidth}
        \centering%
        \includegraphics[width=\linewidth,page=2]{locked-configs}%
        \subcaption{}
        \label{subfig:locked-simple}
    \end{subfigure}%
    \label{fig:locked-configs}
    \caption{(a) A configuration that remains locked even when modules have logarithmic memory. The configuration could be unlocked if any of the four central leaves (labeled by their depth in $\tree{}$) changed its parent pointer. However, all modules have minimum depth. (b) The movement of $m$ is blocked by $m'$ and vice versa. Both $m$ sliding to $b$ and setting its parent to $a$, or $m'$ sliding to $b'$ and settings its parent to $a'$, would unlock the configuration. However, neither module can locally determine which of the two moves improves their depth. }
\end{figure}

Following this strategy, there are two cases in which configurations may still become locked, meaning no viable move exists for any leaf.
In the first case, no leaf is side-adjacent to any hole.
Even if modules have logarithmic memory to compute and store their depths, such configurations cannot be resolved entirely locally (e.g., see \cref{subfig:locked-log}).
In the second case, no leaf can locally determine whether a move decreases its depth in $\tree{}$ (e.g., see \cref{subfig:locked-simple}).

To ensure that progress is eventually made, we run our sequential algorithm on top of the spanning tree strategy.
The sequential algorithm can also take advantage of the additional information provided by the spanning tree $\tree{}$.
In particular, whenever a \pPassive{} module checks whether it is simply removable, it also checks whether it is a leaf in $\tree{}$.
In both cases, its removal does not violate connectivity, and the module becomes \pActive{}.
However, to preserve the correctness of the sequential algorithm, the following issues must be addressed:

\begin{enumerate}
    \item A request or activation message can activate modules that are not necessarily leaves in $\tree{}$.  
    In this case, we locally reconfigure parent pointers to restore $\tree{}$ as a valid spanning tree.

    \item The movement of an \pActive{} module $m$ can be blocked by a leaf that arrives after $m$ becomes \pActive{}.

    \item Leaves can fill positions in the target hole side-adjacent to the request path.  
    In this case, modules on the request path are not necessarily contained within the target boundary.
\end{enumerate}

\subsection{Maintaining the Spanning Tree}
\label{subsec:parallel-issue-1}

In this subsection, we address the case where a request or activation message reaches and activates a module that is not a leaf in the spanning tree $\tree{}$.
If we consider arbitrary spanning trees over all modules, then removing a simply removable module (see \cref{def:simply-removable}) or a conditionally removable module (see \cref{def:conditionally-removable}) may locally break the tree structure in a way that cannot be repaired without global coordination. 
In particular, attempting to reroot the children of the removed module may disconnect subtrees or introduce cycles.

To preserve the tree structure, we must reroot all children of the removed module $m$ to neighbors that are not descendants of $m$ in $\tree{}$.
We say that a module $v$ is a \emph{descendant} of a module $u$ if the unique path from $v$ to the root of $\tree{}$ contains $u$; conversely, $u$ is then called an \emph{ancestor} of $v$.
However, since determining whether a neighbor is a descendant may require global information, we impose additional structure on $\tree{}$.

Specifically, we maintain the invariant that the subtree of $\tree{}$ induced by all \pHead{} and \pTail{} modules is a shortest-path tree rooted at the center of the rhombus. 
To preserve this invariant throughout the execution, whenever the \pHead{} role is delegated to a new module, the new \pHead{} sets its parent pointer toward the center of the rhombus; this direction can be inferred from the relative position of the previous \pHead{} module.

\begin{figure}[tbp]
    \centering
    \begin{subfigure}[b]{0.4271\linewidth}
        \centering%
        \includegraphics[width=\linewidth,page=1]{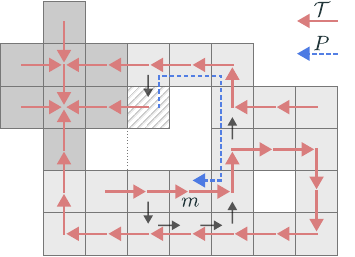}%
        \subcaption{}
        \label{subfig:rerooting-1}
    \end{subfigure}%
    \hspace{1cm}
    \begin{subfigure}[b]{0.4271\linewidth}
        \centering%
        \includegraphics[width=\linewidth,page=2]{rerooting.pdf}%
        \subcaption{}
        \label{subfig:rerooting-2}
    \end{subfigure}%
    \caption{Illustration of how tree edges change during forwarding of requests and activation messages, and when a module becomes \pActive{}. In (a), $m$ becomes \pActive{} upon receiving a request; in (b), $m_{k-2}$ becomes \pActive{} upon receiving an activation message initiated by $m_k$. Solid red arrows show tree edges before the request; solid black (smaller) arrows show redirected edges.}
    \label{fig:rerooting}
\end{figure}

\paragraph*{Rerooting Children of Simply Removable Modules}

As a request is forwarded along the target boundary, \pPassive{} modules that receive the request set their parent pointer in the opposite direction of the request path.
Denote $\tree{}'$ as the subgraph of $\tree{}$ induced by the set of all \pHead{} and \pTail{} modules, together with any module on the request path.
Within the subset of non-\pPassive{} modules, $\tree{}'$ is a shortest-path tree. 
Since the passive segment is simple by \cref{lem:invariants}, it follows that $\tree{}'$ is itself a tree.

What makes this structure useful is that the ancestors of the last module on the request path can now be identified locally, since they all lie on the request path.
All modules side-adjacent to some simply removable module $m$ also lie in the same connected component within the neighborhood of $m$.
Consequently, when $m$ becomes \pActive{}, its children in $\tree{}$ can be locally rerooted to one of its ancestors.
See \cref{subfig:rerooting-1} for an illustration of this procedure.

\paragraph*{Rerooting Children of Conditionally Removable Modules}

In contrast to simply removable modules, modules side-adjacent to a conditionally removable module $m$ lie in two distinct connected components within $N(m)$.
One of these components contains the predecessor of $m$ on the request path, which, as argued in the previous paragraph, is not a descendant of $m$.
To safely reroot all children of $m$, we must identify non-descendants of $m$ within the other component.
This is achieved using the following strategy:

Let $m_k$ be the last module on the request path, $m_{k+1}$ be the next module along the target boundary in anti-clockwise order, and $m_{k-1}$ be the predecessor of $m_k$ on the request path.
Suppose that $m_k$ would normally forward a request to $m_{k+1}$, but $m_{k+1}$ is either a \pHead{} or \pTail{} module, or a \pPassive{} module already on the request path.
In either case, $m_{k+1}$ is not a descendant of $m_k$.

Now suppose that $m_k$ is not conditionally removable.
Following our algorithm, it deletes its stored request and sends an activation message to $m_{k-1}$. 
In addition, it now sets its parent pointer to $m_{k+1}$.
As a result, when $m_{k-1}$ checks in the next round whether it is conditionally removable, $m_{k}$ is no longer its descendant.
This process repeats until eventually a conditionally removable module becomes \pActive{}.
At that point, we reroot all its children to neighboring non-descendants, which we can now identify within all connected components in its neighborhood.
See \cref{subfig:rerooting-2} for an illustration of this procedure.

\subsection{Moving Through Leaves}
\label{subsec:parallel-issue-2}

In this subsection, we consider all cases in which the movement of an \pActive{} module is blocked by a leaf module that does not belong to the request path.
For each such case, we propose a strategy that allows the \pActive{} module to make progress via local state swaps: rather than moving into the blocked position, the \pActive{} module delegates its role to the blocking leaf module.

\begin{figure}[tbp]
    \centering
    \begin{subfigure}[b]{0.333\linewidth}
        \centering%
        \includegraphics[width=\linewidth,page=1]{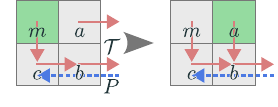}%
        \subcaption{}
        \label{subfig:parallel-issue-2-1}
    \end{subfigure}%
    \hfill
    \begin{subfigure}[b]{0.333\linewidth}
        \centering%
        \includegraphics[width=\linewidth,page=2]{parallel-issue-2}%
        \subcaption{}
        \label{subfig:parallel-issue-2-2}
    \end{subfigure}%
    \hfill
    \begin{subfigure}[b]{0.333\linewidth}
        \centering%
        \includegraphics[width=\linewidth,page=3]{parallel-issue-2}%
        \subcaption{}
        \label{subfig:parallel-issue-2-3}
    \end{subfigure}%
    \caption{Illustration of how an \pActive{} module at $m$ can proceed when blocked by a leaf not on the request path $P$. Cases shown: (a) slide blocked at $a$; (b) convex transition blocked at $b$; (c) convex transition blocked at $a$ with $b$ empty. The case where $b$ is not empty is analogous to (b). All possible cases are shown except for variations in the parent pointer of the blocking leaf.}
    \label{fig:parallel-issue-2}
\end{figure}

First, consider a slide from cell $m$ to cell $a$, where $m$ is the \pActive{} module, and let $(m,a,b,c)$ form a clockwise 4-cycle in which $c$ is the last, and $b$ is the second last module on the request path.
If $a$ is a leaf of $\tree{}$ with $a \notin  P$, then $m$ becomes \pPassive{} and sets its parent pointer to $c$, $a$ becomes \pActive{} and sets its parent pointer to $b$, and $c$ deletes its stored request from $b$; see \cref{subfig:parallel-issue-2-1}.

Second, consider a convex transition from cell $m$ to cell $b$, and let $(m, a, b, c)$ be a clockwise 4-cycle in which $c$ is the last module on the request path.
If $b$ is a leaf module with $b \notin P$, we can apply the same strategy: $m$ becomes \pPassive{} while $b$ becomes \pActive{}, and both $m$ and $b$ set their parent pointers to $c$; see \cref{subfig:parallel-issue-2-2}.
If $b$ is empty and $a$ is a leaf module with $b \notin P$, then $m$ becomes \pPassive{} while $a$ slides to $b$ and becomes \pActive{}, and both set their parent pointers to $c$ again; see \cref{subfig:parallel-issue-2-3}.

We have considered a blocking leaf at every position that must be empty for a slide or convex transition to be viable. 
In all cases, the \pActive{} module can make progress while only manipulating the parent pointers of leaves, ensuring that the tree structure is preserved.

\subsection{Ignoring Unattached Leaves}
\label{subsec:parallel-issue-3}

To preserve the properties stated in \cref{lem:invariants}, we treat certain leaf modules analogously to how we treat the \pActive{} module, that is, as if they were an empty cell.
Specifically, we ignore any leaf module that is not part of the request path and whose parent pointer does not point to a module on the path.
This is justified for two reasons: first, we have already shown that the \pActive{} module can always move through such leaves; second, when considering the removal of a module $m$ on the request path, any leaf whose parent is not $m$ still has a path to the root within $\tree{}$.
If that path does not include $m$, then the removal of $m$ does not affect connectivity.
If it does include $m$, then it must also include a child and the parent of $m$.
Otherwise, that path must also contain a child and the parent of $m$, and the rerooting strategy described in \cref{subsec:parallel-issue-1} ensures that connectivity is preserved.

\begin{figure}[tbp]
    \centering
    \includegraphics[width=0.55\linewidth]{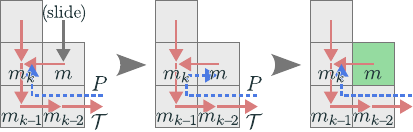}%
    \caption{A module $m$ sliding adjacent to $m_k$ in the same round that $m_k$ receives a request violates Invariant~1. Activating $m$ in the next round restores the invariant.}
    \label{fig:parallel-issue-3}
\end{figure}

The only remaining issue arises when a leaf module attaches itself to a module on the request path \emph{after} that module has forwarded its request.
To prevent this, we do not allow leaves to attach to the request path, and any leaf that is already attached to the request path is not allowed to move.
However, since a leaf may attach in the same round that a module receives a request and becomes part of the request path, some edge cases must be considered.

Consider an anti-clockwise 4-cycle $(m_k, m_{k-1}, m_{k-2}, m)$ where $m_k, m_{k-1}, m_{k-2} \in P$ are the last three modules on the request path, and $m \notin P$ is some leaf in $\tree{}$.
Note that the leaf module $m$ must have moved to its position after $m_{k-2}$ has forwarded the request to $m_{k-1}$, as otherwise it would have been forwarded directly to $m$.
As an example, in \cref{fig:parallel-issue-3}, $m$ slides into its position in the same round in which $m_k$ receives a request.
The case where $m$ moves via a convex transition is analogous.
In this round, $m_{k-1}$ may not be contained in the target boundary.
By \cref{def:request-path}, this implies that the request path ends at $m_{k-2}$, and Invariant 1 from \cref{lem:invariants} is temporarily violated.
However, in the next round $m$ receives the request from $m_k$ and becomes \pActive{} (since it is a leaf), and the invariant is immediately restored.

This edge case is unproblematic because the resulting configuration is the same as if module $m$ had first become \pActive{} and then moved into its current position.
Moreover, if the leaf $m$ does not form the above 4-cycle with three consecutive modules on the request path, then it would receive the request in exactly the same round, regardless of whether it had just moved into its position or was already present when the request was initiated.
Hence, all other cases are unproblematic as well.

\subsection{Theoretical Guarantees}

In summary, our parallelization strategy preserves correctness: we have addressed all issues that arise from parallel execution, including rerooting to maintain a spanning tree structure (\cref{subsec:parallel-issue-1}), interactions of the \pActive{} module with \pPassive{} leaves (\cref{subsec:parallel-issue-2}), and maintaining the invariants from \cref{lem:invariants} in presence of such leaves (\cref{subsec:parallel-issue-3}).
Combined with the correctness of the underlying sequential algorithm (\cref{thm:conclusion}), we conclude that our parallel extension reconfigures any initially connected configuration into a rhombus in $\O(n^2)$ rounds.
In the following section, we introduce two variants designed to further improve performance and analyze their runtime through simulations.
\section{Parallel Variants and Experimental Evaluation}
\label{sec:variants}

To further improve performance, we developed two variants, V1 and V2, that increase the degree of parallelism.
V1 allows leaf modules to perform state swaps similar to those of the \pActive{} module whenever their movement trajectories intersect.
This enables progress in configurations like the one in \cref{subfig:locked-simple} without waiting for requests.
Combined with the parallelization strategy from \cref{sec:parallelization}, V1 already yields a significant speedup over the purely sequential algorithm in most cases.
Importantly, the correctness of the algorithm does not depend on the specific movement choices of the leaves, as long as leaves are not allowed to attach to the request path.
However, since leaf modules are restricted to moves that improve their depth in $\tree{}$, certain configurations still become locked, in which case the runtime is asymptotically not better than the sequential algorithm.

V2 addresses this by relaxing the depth-improvement requirement.
Leaves move as in V1, with one addition: a leaf proceeds with an available move even if it cannot locally verify that the move improves its depth in $\tree{}$.
Such moves may temporarily increase a leaf's depth and delay its progress, but they can also turn its parent into a leaf, unlocking further parallel movement.
To prevent unproductive cycles in which leaves repeatedly step over each other, we do not allow leaves to attach themselves to another \pPassive{} leaf.
In our experiments, V2 consistently outperforms V1 and appears to achieve linear runtime on average.

\paragraph*{Simulation Setup and Evaluation}

We evaluated the performance of the sequential algorithm and both parallel variants (V1 and V2) on structured and random configurations using a custom-built simulator.
For each configuration type and size, we ran multiple independent simulations.

The structured input configurations considered are \emph{lines}, where modules are arranged in a straight horizontal segment; \emph{spirals}, generated by placing modules in a square spiral pattern; and \emph{chains}, constructed as cascaded sequences of critical pairs.
In all structured configurations, the initial leader is explicitly placed at a degree-one module.
The top row in \cref{fig:plots} shows the number of rounds required to reconfigure a line and a chain into a rhombus, for configurations of size 10 to 120 in increments of 10.
Spiral configurations are excluded from the plots since they follow the same trend as line configurations.
We also include the Earth Mover’s Distance (EMD) as a lower bound for any sequential solution, and the number of modules $n$ as a lower bound for V1 and V2.

\begin{figure}[htbp]
    \centering
    \includegraphics[width=0.5\linewidth,page=2]{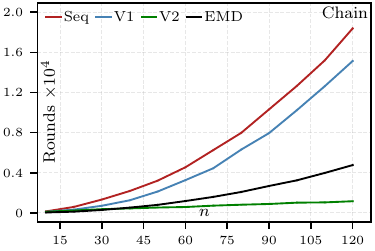}%
    \includegraphics[width=0.5\linewidth,page=1]{spiral_algorithm_plots.pdf}\\
    \includegraphics[width=0.5\linewidth,page=4]{spiral_algorithm_plots.pdf}%
    \includegraphics[width=0.5\linewidth,page=5]{spiral_algorithm_plots.pdf}\\
    \includegraphics[width=0.5\linewidth,page=10]{spiral_algorithm_plots.pdf}%
    \includegraphics[width=0.5\linewidth,page=11]{spiral_algorithm_plots.pdf}\\
    \includegraphics[width=0.5\linewidth,page=6]{spiral_algorithm_plots.pdf}%
    \includegraphics[width=0.5\linewidth,page=7]{spiral_algorithm_plots.pdf}%
    \caption{
    Comparison of the runtime of the sequential rhombus formation algorithm (Seq) with its two parallel variants (V1 and V2).
    The top row shows the runtime on line and chain configurations; the second row shows the runtime on random percentage configurations.
    In all but the top row, the right plot displays the same data but excludes the sequential runtime for better resolution.
    The third and fourth rows compare runtime against the Earth Mover's Distance (EMD) and the leader's eccentricity for random configurations (both percentage-based and Perlin noise-based) of fixed size $n = 200$.
    Scatter points indicate individual runs, while lines are smoothed using LOWESS (locally weighted regression over nearby points).
    }

    \label{fig:plots}
\end{figure}

As expected, the runtime of the sequential algorithm grows quadratically on both line and chain configurations, which are worst-case inputs in the sequential setting.
In contrast, the runtimes of V1 and V2 coincide on line and spiral configurations, where leaves are not blocked during reconfiguration.
However, for chain configurations, V1 yields only a constant-factor improvement over the sequential algorithm, while V2 appears to achieve linear runtime.

We used two different types of random input configurations:
\begin{enumerate}
    \item \emph{Percentage-based Configurations:}
    Starting from a filled rectangle, modules were removed uniformly at random while preserving connectivity, until a specified percentage remained.
    \item \emph{Growth-based Configurations (Perlin Noise):}  
    These configurations were grown from a single seed module.  
    In each step, all empty cells side-adjacent to existing modules were assigned a weight based on a fixed 2D Perlin noise map (a type of gradient noise).
    Among the 10 cells with the highest weights, one was selected uniformly at random and added to the configuration.  
    This method produces irregular, organic shapes with many holes.
\end{enumerate}

For each rectangle of dimension $(x,y)$ with $x$ ranging from 6 to 20 in steps of 2, and $y$ from $x$ to 20 in steps of 2, we generated 10 configurations with 10\%, 30\%, 50\%, 70\%, and 90\% of the original modules retained, resulting in a total of 1800 simulations.
The second row in \cref{fig:plots} shows the results of all those runs.

Both variants significantly improve runtime over the purely sequential algorithm and even outperform the theoretical lower bound given by the Earth Mover’s Distance (EMD).
This is expected, as the parallel algorithm allows multiple modules to move simultaneously, while the EMD assumes only one module moves per round.
Additionally, V2 consistently outperforms V1 across all simulations.
In conclusion, the speedup from increased parallelism in V2 outweighs the benefit of shorter paths in V1.

To analyze which configuration parameters influence runtime, we generated over 1000 random configurations, both percentage-based and growth-based, each with 200 modules.
For each configuration, we computed several metrics: its diameter, the number of holes and pseudo-holes, the leader module's eccentricity (i.e., the largest distance to any non-leader module), its closeness (i.e., the mean eccentricity of all modules), the EMD towards the rhombus, and the perimeter (i.e., the number of edges between empty and occupied cells).

Analyzing the plots, we find that only the leader's eccentricity, closeness, and EMD correlate with runtime. 
As eccentricity and closeness show nearly identical trends and other parameters appear uncorrelated, we include only the plots for eccentricity and EMD, shown in the third and fourth rows of \cref{fig:plots}, respectively.

The runtime increases roughly linearly with EMD, as expected: larger differences between the initial and target configurations require more total module movement. 
In contrast, the runtime grows sublinearly, approximately following a square-root curve, with the leader's eccentricity. 
This is likely because distant modules rarely move directly to the target cell.
Instead, they often fill critical pairs along the way, enabling closer modules to move first. 
As a result, the rhombus grows, reducing the distance that distant modules must traverse in later rounds.
In the EMD plot, V2 shows a smooth, steady linear runtime with minimal outliers, while V1 shows higher variance.
This suggests that locked leaves, as constructed manually in earlier examples, frequently occur even in random configurations, causing delays under V1 but not under V2.

\section{Conclusion}
\label{sec:conclusion}

We presented a distributed reconfiguration algorithm for the sliding square model that transforms side-connected initial configurations into a rhombus.
The algorithm operates under strict local communication and constant memory constraints, while maintaining connectivity throughout execution.
It runs sequentially with a worst-case runtime of~$\O(n^2)$, which is optimal for sequential solutions under these constraints.
While our focus was on the rhombus, the algorithm generalizes to any target shape that is sequentially constructible and weakly convex at each construction step, including, for example, lines and filled rectangles.

To improve runtime, we introduced two parallel variants V1 and V2.
Simulations show that both variants outperform the sequential algorithm and even surpass the theoretical lower bound for sequential runtimes given by the Earth Mover’s Distance, with V2 achieving the greater speedup.
We identified key factors influencing runtime: it scales linearly with EMD and follows a square-root curve with respect to the leader’s eccentricity and closeness.
V2, in particular, showed a smooth runtime curve with low variance, suggesting it can handle configurations that tend to become locked under V1.

Future work includes extending the algorithm to non-weakly convex shapes, exploring transformations from the rhombus to arbitrary configurations, and explore lower bounds to clarify the complexity of distributed reconfiguration in the sliding square model.

%
%
%
\bibliographystyle{splncs04}
\bibliography{bibliography}

\begin{thebibliography}{10}
\providecommand{\url}[1]{\texttt{#1}}
\providecommand{\urlprefix}{URL }
\providecommand{\doi}[1]{https://doi.org/#1}

\bibitem{Akitaya2020-CharacterizingUniversalReconfigurability}
A.~Akitaya, H., Demaine, E.D., Gonczi, A., Hendrickson, D.H., Hesterberg, A., Korman, M., Korten, O., Lynch, J., Parada, I., Sacrist\'{a}n, V.: {Characterizing Universal Reconfigurability of Modular Pivoting Robots}. In: Buchin, K., Colin~de Verdi\`{e}re, E. (eds.) 37th International Symposium on Computational Geometry (SoCG 2021). Leibniz International Proceedings in Informatics (LIPIcs), vol.~189, pp. 10:1--10:20. Schloss Dagstuhl -- Leibniz-Zentrum f{\"u}r Informatik, Dagstuhl, Germany (2021). \doi{10.4230/LIPIcs.SoCG.2021.10}

\bibitem{Akitaya2021-CompactingSquares}
A.~Akitaya, H., Demaine, E.D., Korman, M., Kostitsyna, I., Parada, I., Sonke, W., Speckmann, B., Uehara, R., Wulms, J.: {Compacting Squares: Input-Sensitive In-Place Reconfiguration of Sliding Squares}. In: Czumaj, A., Xin, Q. (eds.) 18th Scandinavian Symposium and Workshops on Algorithm Theory (SWAT 2022). Leibniz International Proceedings in Informatics (LIPIcs), vol.~227, pp. 4:1--4:19. Schloss Dagstuhl -- Leibniz-Zentrum f{\"u}r Informatik, Dagstuhl, Germany (2022). \doi{10.4230/LIPIcs.SWAT.2022.4}

\bibitem{Abel2024-UniversalInPlaceReconfiguration}
Abel, Z., A.~Akitaya, H., Kominers, S.D., Korman, M., Stock, F.: {A Universal In-Place Reconfiguration Algorithm for Sliding Cube-Shaped Robots in a Quadratic Number of Moves}. In: 40th International Symposium on Computational Geometry (SoCG 2024). Leibniz International Proceedings in Informatics (LIPIcs), vol.~293, pp. 1:1--1:14. Schloss Dagstuhl -- Leibniz-Zentrum f{\"u}r Informatik, Dagstuhl, Germany (2024). \doi{10.4230/LIPIcs.SoCG.2024.1}

\bibitem{Akitaya2021-UniversalReconfigurationOfFacet-ConnectedModularRobotsByPivots}
Akitaya, H.A., Arkin, E.M., Damian, M., Demaine, E.D., Dujmovi{\'{c}}, V., Flatland, R., Korman, M., Palop, B., Parada, I., Renssen, A.v., Sacrist{\'a}n, V.: Universal reconfiguration of facet-connected modular robots by pivots: The o(1) musketeers. Algorithmica  \textbf{83}(5),  1316--1351 (May 2021). \doi{10.1007/s00453-020-00784-6}

\bibitem{Akitaya2024-SlidingSquaresInParallel}
Akitaya, H.A., Fekete, S.P., Kramer, P., Molaei, S., Rieck, C., Stock, F., Wallner, T.: Sliding squares in parallel (2024). \doi{10.48550/arXiv.2412.05523}

\bibitem{Akitaya2023-Reconfiguration}
Akitaya, H.A., Stock, F.: Reconfiguration of 3d pivoting modular robots  (2023). \doi{10.48550/arXiv.2304.09990}

\bibitem{Czyzowicz2021-BuildingANestByAnAutomaton}
Czyzowicz, J., Dereniowski, D., Pelc, A.: Building a nest by an automaton. Algorithmica  \textbf{83}(1),  144–176 (jan 2021). \doi{10.1007/s00453-020-00752-0}

\bibitem{Derakhshandeh2015-AmoebotShapeFormation}
Derakhshandeh, Z., Gmyr, R., Richa, A.W., Scheideler, C., Strothmann, T.: An algorithmic framework for shape formation problems in self-organizing particle systems. In: Proceedings of the Second Annual International Conference on Nanoscale Computing and Communication. pp. 21:1--21:2 (2015). \doi{10.1145/2800795.2800829}

\bibitem{Derakhshandeh2016-AmoebotShapeFormation}
Derakhshandeh, Z., Gmyr, R., Richa, A.W., Scheideler, C., Strothmann, T.: Universal shape formation for programmable matter. In: Proceedings of the 28th ACM Symposium on Parallelism in Algorithms and Architectures. pp. 289--299 (2016). \doi{10.1145/2935764.2935784}

\bibitem{DiLuna2020-AmoebotShapeFormation}
Di~Luna, G.A., Flocchini, P., Santoro, N., Viglietta, G., Yamauchi, Y.: Shape formation by programmable particles. Distrib. Comput.  \textbf{33}(1),  69–--101 (Feb 2020). \doi{10.1007/s00446-019-00350-6}

\bibitem{Dumitrescu2004-MotionPlanningForMetamorphicSystems}
Dumitrescu, A., Suzuki, I., Yamashita, M.: Motion planning for metamorphic systems: feasibility, decidability, and distributed reconfiguration. IEEE Transactions on Robotics and Automation  \textbf{20}(3),  409--418 (2004). \doi{10.1109/TRA.2004.824936}

\bibitem{Dumitrescu2004-PushingSquaresAround}
Dumitrescu, A., Pach, J.: Pushing squares around. In: Proceedings of the twentieth annual symposium on Computational geometry. pp. 116--123 (2004). \doi{10.1145/997817.997838}

\bibitem{Fekete2021-ConnectedCoordinatedMotionPlanningWithBoundedStretch}
Fekete, S.P., Keldenich, P., Kosfeld, R., Rieck, C., Scheffer, C.: {Connected Coordinated Motion Planning with Bounded Stretch}. In: Ahn, H.K., Sadakane, K. (eds.) 32nd International Symposium on Algorithms and Computation (ISAAC 2021). Leibniz International Proceedings in Informatics (LIPIcs), vol.~212, pp. 9:1--9:16. Schloss Dagstuhl -- Leibniz-Zentrum f{\"u}r Informatik, Dagstuhl, Germany (2021). \doi{10.4230/LIPIcs.ISAAC.2021.9}

\bibitem{Feshbach2021-ReconfiguringNon-ConvexHoles}
Feshbach, D., Sung, C.: Reconfiguring non-convex holes in pivoting modular cube robots. IEEE Robotics and Automation Letters  \textbf{6}(4),  6701--6708 (2021). \doi{10.1109/LRA.2021.3095030}

\bibitem{Fitch2003-ReconfigurationPlanning}
Fitch, R., Butler, Z., Rus, D.: Reconfiguration planning for heterogeneous self-reconfiguring robots. In: Proceedings 2003 IEEE/RSJ International Conference on Intelligent Robots and Systems (IROS 2003) (Cat. No.03CH37453). vol.~3, pp. 2460--2467 vol.3 (2003). \doi{10.1109/IROS.2003.1249239}

\bibitem{Friemel2025-ShapeReconfiguration}
Friemel, J., Liedtke, D., Scheffer, C.: Efficient shape reconfiguration by hybrid programmable matter (2025). \doi{10.48550/arXiv.2501.08663}

\bibitem{Gmyr2020-FormingTileShapesWithSimpleRobots}
Gmyr, R., Hinnenthal, K., Kostitsyna, I., Kuhn, F., Rudolph, D., Scheideler, C., Strothmann, T.: Forming tile shapes with simple robots. Natural Computing  \textbf{19}(2),  375--390 (Jun 2020). \doi{10.1007/s11047-019-09774-2}

\bibitem{Gregg2024-Ultralight}
Gregg, C.E., Catanoso, D., Formoso, O.I.B., Kostitsyna, I., Ochalek, M.E., Olatunde, T.J., Park, I.W., Sebastianelli, F.M., Taylor, E.M., Trinh, G.T., et~al.: Ultralight, strong, and self-reprogrammable mechanical metamaterials. Science Robotics  \textbf{9}(86),  eadi2746 (2024). \doi{10.1126/scirobotics.adi2746}

\bibitem{Hinnenthal2020-Hybrid3D}
Hinnenthal, K., Rudolph, D., Scheideler, C.: Shape formation in a three-dimensional model for hybrid programmable matter. In: Proc. of the 36th European Workshop on Computational Geometry (EuroCG 2020) (2020)

\bibitem{Hinnenthal2024-ShapeFormation}
Hinnenthal, K., Liedtke, D., Scheideler, C.: {Efficient Shape Formation by 3D Hybrid Programmable Matter: An Algorithm for Low Diameter Intermediate Structures}. In: Casteigts, A., Kuhn, F. (eds.) 3rd Symposium on Algorithmic Foundations of Dynamic Networks (SAND 2024). Leibniz International Proceedings in Informatics (LIPIcs), vol.~292, pp. 15:1--15:20. Schloss Dagstuhl -- Leibniz-Zentrum f{\"u}r Informatik, Dagstuhl, Germany (2024). \doi{10.4230/LIPIcs.SAND.2024.15}

\bibitem{Hurtado2015-DistributedReconfiguration}
Hurtado, F., Molina, E., Ramaswami, S., Sacrist{\'a}n, V.: Distributed reconfiguration of 2d lattice-based modular robotic systems. Autonomous Robots  \textbf{38}(4),  383--413 (2015). \doi{10.1007/s10514-015-9421-8}

\bibitem{Jenett2019-MaterialRobotSystemforAssemblyOfDiscreteCellularStructure}
Jenett, B., Abdel-Rahman, A., Cheung, K., Gershenfeld, N.: Material–robot system for assembly of discrete cellular structures. IEEE Robotics and Automation Letters  \textbf{4}(4),  4019--4026 (2019). \doi{10.1109/LRA.2019.2930486}

\bibitem{Kawano2015-CompleteReconfiguration}
Kawano, H.: Complete reconfiguration algorithm for sliding cube-shaped modular robots with only sliding motion primitive. In: 2015 IEEE/RSJ International Conference on Intelligent Robots and Systems (IROS). pp. 3276--3283. IEEE (2015). \doi{10.1109/IROS.2015.7353832}

\bibitem{Kawano2020-Distributed}
Kawano, H.: Distributed tunneling reconfiguration of cubic modular robots without meta-module’s disassembling in severe space requirement. Robotics and Autonomous Systems  \textbf{124},  103369 (2020). \doi{10.1016/j.robot.2019.103369}

\bibitem{Kawano2023-Linear-TimeReconfiguration}
Kawano, H.: Linear-time reconfiguration of sliding-only cubic modular robots under severe space requirements. IEEE Robotics and Automation Letters  \textbf{8}(12),  8176--8183 (2023). \doi{10.1109/LRA.2023.3315208}

\bibitem{Kostitsyna2023-OptimalIn-PlaceCompactionOfSlidingCubes}
Kostitsyna, I., Ophelders, T., Parada, I., Peters, T., Sonke, W., Speckmann, B.: {Optimal In-Place Compaction of Sliding Cubes}. In: Bodlaender, H.L. (ed.) 19th Scandinavian Symposium and Workshops on Algorithm Theory (SWAT 2024). Leibniz International Proceedings in Informatics (LIPIcs), vol.~294, pp. 31:1--31:14. Schloss Dagstuhl -- Leibniz-Zentrum f{\"u}r Informatik, Dagstuhl, Germany (2024). \doi{10.4230/LIPIcs.SWAT.2024.31}

\bibitem{Moreno2020-ReconfiguringSlidingSquares}
Moreno, J., Sacrist{\'a}n, V.: Reconfiguring sliding squares in-place by flooding. In: Proc. 36th European Workshop on Computational Geometry (EuroCG). p.~32 (2020)

\bibitem{Fekete2020-RecognitionAndReconfigurationOfLatticeBasedCellularStructures}
Niehs, E., Schmidt, A., Scheffer, C., Biediger, D.E., Yannuzzi, M., Jenett, B., Abdel-Rahman, A., Cheung, K.C., Becker, A.T., Fekete, S.P.: Recognition and reconfiguration of lattice-based cellular structures by simple robots. In: 2020 IEEE International Conference on Robotics and Automation (ICRA). pp. 8252--8259 (2020). \doi{10.1109/ICRA40945.2020.9196700}

\bibitem{Sung2015-ReconfigurationPlanning}
Sung, C., Bern, J., Romanishin, J., Rus, D.: Reconfiguration planning for pivoting cube modular robots. In: 2015 IEEE International Conference on Robotics and Automation (ICRA). pp. 1933--1940 (2015). \doi{10.1109/ICRA.2015.7139451}

\bibitem{Walter2005-AlgorithmsForFastConcurrentReconfiguration}
Walter, J., Tsai, E., Amato, N.: Algorithms for fast concurrent reconfiguration of hexagonal metamorphic robots. IEEE Transactions on Robotics  \textbf{21}(4),  621--631 (2005). \doi{10.1109/TRO.2004.842325}

\bibitem{Wolters2024-ParallelAlgorithms}
Wolters, M.: Parallel Algorithms for Sliding Squares. Master's thesis, Utrecht University (2024), \url{https://studenttheses.uu.nl/handle/20.500.12932/45768}

\end{thebibliography}

\iftoggle{fullversion}{}{
\appendix
\section{Technical Details of the Algorithm}
\label{sec:appendix-technical-details}
\setcounter{lemma}{0}
\setcounter{figure}{0}
\renewcommand{\thefigure}{A.\arabic{figure}}

In the following, we describe the technical details omitted from \cref{sec:algorithm} to complete the description of our rhombus formation algorithm.

We first describe how a \pHead{} module determines the next target cell.
Let $\rhombus{} = (v_0,...,v_{n-1})$ be a rhombus of size $n$, and let $v_i$ be the cell in $\rhombus{}$ occupied by the \pHead{} module.
In our algorithm, we maintain as an invariant that $\{v_0, ..., v_{i-1}\}$ is precisely the set of all cells occupied by \pTail{} modules.
This ensures that the \pHead{} module can determine its position within the rhombus by observing the states of its neighbors: whether it is at the first cell ($v_i = v_0$), at an extremal cell (e.g., northernmost), on one of the layer's straight segments, or at its connector.
For example, if there are \pTail{} neighbors in direction $\N$ and $\E$ (e.g., \cref{subfig:example-1}), then $v_i$ lies on a straight segment in the rhombus' southwest quadrant, and the target cell is in the $\NW$ direction.

Next, we clarify how modules determine the direction in which messages are forwarded and whether they are side-adjacent to the target hole.
We assume that whenever a module $m$ receives a message, it knows which of its neighbors $src$ has sent the message.
This allows $m$ to compute both the direction of the target hole and the next forwarding direction along the target boundary.
Module $m$ scans its neighbors in anti-clockwise order, starting with module $src$.
The first empty cell following $src$ in this scan belongs to the target hole, and the first side-adjacent module following $src$ is the next module along the target boundary in anti-clockwise direction.
Similarly, when initiating a request, module $m$ performs a scan starting with the target cell instead of $src$.
When forwarding activation and cleanup messages, a module can directly read the forwarding direction from its stored requests.


\paragraph*{Holes of Size One}
\phantomsection
\label{par:special-case}
In this paragraph, we discuss the special case where an activation message does not reach a conditionally removable module.
Let $P = (m_1, ..., m_k)$ be the request path, $S = (m_i, ..., m_k)$ with $i > 1$ its passive segment, and let $m$ be a non-\pPassive{} module side-adjacent to $m_k$.
Recall that we obtain a simple cycle by concatenating $S$ with a shortest path from $m$ to $m_{i-1}$ within the set of all non-\pPassive{} modules.
In general, any activation message traversing $P$ in reverse reaches a conditionally removable module before it reaches $m_{i-1}$.
However, this may not the case if the target hole has size one, i.e., it consists of only the target cell.
There are configurations in which no \pPassive{} module on the target boundary is conditionally removable (e.g., see \cref{subfig-appendix:special-case1}).
In this case, the activation message fully traverses the request path in reverse until it reaches the \pHead{} module.

Although no \pPassive{} module is initially conditionally removable, we can still create a conditionally removable module using a second request.
Let $e$ be the target cell, and assume that the target hole has size one and that the target boundary contains no conditionally removable \pPassive{} module.
Let $Z$ be the simple cycle consisting of the eight modules in the target boundary in an anti-clockwise order starting at the \pHead{} module.
Let $m_1, m_2$ be the first two \pPassive{} modules in $Z$ that are side-adjacent to the target hole.
Since neither $m_1$ nor $m_2$ are conditionally removable, there exists two empty cells $e_1$ and $e_2$ corner-adjacent to $m_1$ and $m_2$, respectively, such that $e_1$ and $e_2$ are corner-adjacent to each other and thus lie in the same hole (see \cref{subfig-appendix:special-case1}).

Once the \pHead{} module receives an activation message, it instructs $m_1$ to initiate a new request for cell $e_1$.
At this point, $m_1$ temporarily acts as the \pHead{}, and all other modules within $Z$ temporarily act as \pTail{} modules.
By \cref{lem:fill-target}, this process leads to one of two outcomes (e.g., see \cref{subfig-appendix:special-case2,subfig-appendix:special-case3}):
(1) Cell $e_1$ is filled by some module.
Then $m_1$ becomes conditionally removable w.r.t. $Z$ and can slide into cell $e$.
(2) All cells in the neighborhood of $e_1$ are filled by modules while $e_1$ remains empty.
As a result, cell $e_2$ must be occupied by a module, which implies that $m_2$ is now conditionally removable w.r.t. $Z$ and can slide into cell $e$.
In either case, a module moves into cell $e$, after which all modules adjacent to $e$ revert to their original roles and the algorithm proceeds as usual.

In conclusion, for the purpose of the analysis, it suffices to focus on holes of size at least two:
Once the target hole $e$ is reduced to size one, at most one additional request for some intermediate cell $e_1 \neq e$ is initiated.
Whether this second request successfully fills the cell $e_1$ is irrelevant; in both cases, there is a conditionally removable module that can slide into $e$.

\begin{figure}[t]
    \centering
    \begin{subfigure}[b]{0.33\linewidth}
        \centering%
        \includegraphics[width=\linewidth,page=1]{special-case}%
        \subcaption{}
        \label{subfig-appendix:special-case1}
    \end{subfigure}%
    \begin{subfigure}[b]{0.33\linewidth}
        \centering%
        \includegraphics[width=\linewidth,page=2]{special-case}%
        \subcaption{}
        \label{subfig-appendix:special-case2}
    \end{subfigure}%
    \begin{subfigure}[b]{0.33\linewidth}
        \centering%
        \includegraphics[width=\linewidth,page=3]{special-case}%
        \subcaption{}
        \label{subfig-appendix:special-case3}
    \end{subfigure}%
    \caption{Configuration with a target hole of size one and no conditionally removable module in its boundary. Empty cells are marked by red dots; other cells may be empty or occupied. (a) Module $m_1$ initiates a request for $e_1$. (b) If $e_1$ is filled, $m_1$ becomes conditionally removable and slides into $e$. (c) If not, and the case repeats, $m_2$ becomes conditionally removable and slides into $e$.}
    \label{fig-appendix:special-case}
\end{figure}

\section{Deferred Proofs}
\label{sec:appendix-proofs}
\setcounter{lemma}{0}
\renewcommand{\thelemma}{B.\arabic{lemma}}
\renewcommand{\thefigure}{B.\arabic{figure}}

\begin{lemma}[Restatement of \cref{lem:invariants}]
    \label{lem-appendix:invariants}
    Let $P = (m_1, m_2, \ldots, m_k)$ be the request path.
    In every round during the execution of the rhombus formation algorithm, the following properties hold as invariants:
    \begin{enumerate}
        \item For every module $m$ that stores a request from some module $m'$, the directed edge $(m', m)$ is contained in $P$.
        \item The request path $P$ contains at most one passive segment $S = (m_i, ..., m_j)$, and if $S$ is non-empty, then $j = k$, i.e., the passive segment ends at the last module of $P$.
        \item The passive segment $S$ is simple, i.e., no module appears more than once in $S$.
    \end{enumerate}
\end{lemma}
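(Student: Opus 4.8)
The plan is to prove all three invariants simultaneously by induction over the synchronous rounds of the algorithm, tracking how the request path $P$ and its passive segment $S$ evolve under each possible action. For the base case, before any request is initiated no module stores a request and $S$ is empty, so all three properties hold vacuously; likewise, in the round a \pHead{} first initiates a request we have $P = (m_1)$ with empty $S$, so Properties (2) and (3) are trivial and (1) holds since nothing is yet stored. For the inductive step I would assume the invariants hold at the start of an arbitrary round and verify that every action the algorithm may perform preserves them.

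First I would dispatch the routine actions, each of which changes $P$ by a single edge. When $m_k$ forwards a request, the algorithm only forwards to a \pPassive{} neighbor not already in $P$, so $P$ and $S$ each grow by one fresh node and all properties persist; when a non-\pPassive{} module forwards, Property (2) guarantees $S$ was empty, so afterwards $S$ is still empty or a singleton. When a \pPassive{} module becomes \pActive{}, Properties (2) and (3) force it to be the last node $m_k$ of $P$ — in the activation-message case the simplicity of $S$ ensures the previous-round successor has already deleted its unique stored request and left $P$ — and since an activating module deletes its own request, only the trailing edge of $P$ is removed, so the invariants carry over. The same holds while the \pActive{} module travels along $P$ in reverse: by \cref{lem:move-in-phole} it stays side-adjacent to the current last node of $P$ and iteratively deletes the trailing edge, each deletion preserving the invariants.

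The hard part will be the case where the \pActive{} module $m$ is blocked by a critical pair $(m_k, c)$ and must decide \emph{locally}, via the case distinction of \cref{par:critical-pair-cases}, whether the target cell lies in the hole $\hole_m$ on its own side or in the hole $\hole_b$ across the opposite bridge cell $b$; choosing wrongly would either make $m$ become \pPassive{} along a boundary disjoint from $P$ (violating Property (1), since $P$ must lie on the target boundary) or erase the wrong requests. I would verify the rule by examining the last occurrence $m_i = c$ on $P$: if $m_{i+1} \notin N(m)$ or $m_{i+1} = m_{i-1}$ then an empty cell of $\hole_m$ is side-adjacent to $m$, identifying $\hole_m$ as the target hole, and the only delicate sub-case is $m_{i+1} \in N(m)$ with $m_{i+1} \neq m_{i-1}$, where the sole candidate empty cell lies $\NW$ of $m_i$.

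For that sub-case the key step is a topological contradiction. Assuming the hole $\hole$ containing the $\NW$ cell equals $\hole_m$, one can draw a simple directed curve entirely inside $\hole$ through both the northern and southern sides of $m_{i+1}$ and close it into a loop; depending on the local placement of modules this loop is either anti-clockwise, enclosing every successor of $m_{i+1}$ but not $m_k$, or clockwise, enclosing $m_k$ but no successor of $m_{i+1}$. Both outcomes contradict that $m_k$ is a successor of $m_{i+1}$ on $P$, so $\hole \neq \hole_m$, and since placing a module at $m$ splits the target hole into exactly two holes (otherwise $m_k$ would have a single neighbor and be simply removable, contradicting that $m$ activated first), we conclude $\hole = \hole_b$. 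With the rule verified, both branches are benign: if $\hole_m$ is the target hole the cleanup message deletes exactly the trailing edges of $P$ down to $m_i$, preserving the invariants stepwise; if $\hole_b$ is the target hole, $m$ becomes \pPassive{} and simulates a request from $m_k$, becoming the new $m_{k+1}$ so that $S$ stays a single simple segment ending at the last node with all stored requests lying on the boundary of $\hole_b$. Finally, once the \pActive{} module reaches the target cell with no cleanup in flight, $P$ and $S$ are empty and the invariants hold trivially, closing the induction.
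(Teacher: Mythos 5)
Your proposal is correct and follows essentially the same route as the paper's proof: induction over rounds, the same case analysis of forwarding, activation, and trailing-edge deletion, and the identical topological loop argument (anti-clockwise versus clockwise enclosure of $m_k$ versus the successors of $m_{i+1}$) to validate the critical-pair case distinction. The only detail you gloss over is the sub-case $m_i = m_1$, where the blocked module reads the target cell's position directly from the initiator's memory, but this is handled the same way in the algorithm description and does not affect the argument.
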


\begin{proof}
    The invariants are shown by induction over the rounds of the algorithm.
    
    Initially, no request has been initiated, and no module stores any request.
    Hence, all properties hold trivially.
    Now suppose that module $m_1$ initiates a request, i.e., $P = (m_1)$ and the passive segment $S$ is empty.  
    Since no module has yet stored any request, Property~(1) holds.
    Since $S$ is empty, Properties~(2) and (3) hold trivially as well.

    Now consider an arbitrary round in which $P$ is not empty and assume that the invariants hold.
    We distinguish the possible actions that may occur in the current round:

    First, assume module $m_k$ forwards a request to some module $m_{k+1}$.
    A \pPassive{} module only forwards a request if $m_{k+1}$ is a \pPassive{} module and not contained in $P$.
    Hence, $P$ and $S$ extend by a \pPassive{} module $m_{k+1}$ previously not contained in $P$ and $S$, and all properties hold.
    Whenever a non-\pPassive{} module forwards a request, then by (2), the passive segment is empty in that round.
    Hence, afterwards, the passive segment remains empty or consists only of $m_{k+1}$ and the properties hold.

    Second, a \pPassive{} module $m$ becomes \pActive{} upon receiving a request or activation message.
    In the first case, it must be the last module in $P$ by invariant (2).
    In the second case, the activation message was received from its successor in $P$ w.r.t. the previous round.
    Only \pPassive{} modules send activation messages, and after doing so, they delete their last stored request.
    By invariant (3), the passive segment $S$ is simple.
    Therefore, the successor of $m$ w.r.t. the previous round has deleted its only stored request and is no longer in $P$.
    Hence, in both cases $m$ is the last module in $P$, i.e., $m = m_k$, and it deletes its request upon becoming \pActive{}.
    It follows, that the invariants carry over from the previous round, since only the last edge in $P$ is removed.

    \begin{figure}[t]
        \centering
        \includegraphics[width=0.5\linewidth,page=1]{active-critical-pair}%
        \caption{(Restated.) An \pActive{} module $m$ side-adjacent to a critical pair $(m_k, m_i)$ with bridge cells $m$ and $b$. Module $m_i$ is a predecessor of $m_k$ on the request path, and its direct successor $m_{i+1}$ is contained in $N(m)$. In this case $\hole_b = \hole$ contains the target cell.}
        \label{fig-appendix:active-critical-pair}
    \end{figure}

    Third, consider the deletion of requests by an \pActive{} module, either directly or by its cleanup messages.
    As shown above, only the last module in $P$ can become \pActive{} and afterwards deletes its stored request, which corresponds to the last edge in $P$.
    It follows that an \pActive{} module $m$ always begins its movement side-adjacent to the last module in $P$.  
    It then moves by following $P$ in reverse.
    As long as it stays within the same pseudo-hole, by \cref{lem:move-in-phole} it can remain side-adjacent to last module in $P$ while iteratively deleting the request corresponding to the last edge in $P$. 
    Whenever the last edge of $P$ is deleted, the invariants directly carry over from the previous round. 

    If the movement of $m$ is blocked by a critical pair, its behavior changes: the module either becomes \pPassive{} or initiates a cleanup message.  
    To prove that the invariants are still maintained in this case, we must show that the case distinction in \cref{par:critical-pair-cases} correctly determines which of the pseudo-holes adjacent to $m$ is part of the target hole.
    
    Let $(m_k, c)$ be the critical pair where $m_k$ is the last module on $P$.
    Let $b$ be the bridge cell adjacent to that pair, where $b \neq m$.
    Recall, that we treat the \pActive{} module as an empty cell w.r.t. our terminology.
    Hence, $m$ is the other bridge cell adjacent to the critical pair.
    Let $\hole{}_m$ be the hole that contains cell $m$, if $b$ would be occupied, and let $\hole_b$ be the hole that contains cell $b$, if $m$ would be occupied (see \cref{fig-appendix:active-critical-pair}).
    W.l.o.g., we assume that $c$ is in direction $\N$ of $m$.
    The other four cases are analogous up to rotation.

    First, consider the case where $c$ is contained in $P$.
    Consider the last appearance of $c$ in $P$, i.e., maximize the index $i$ such that $m_i = c$.
    Consider an anti-clockwise scan around $m_1$ starting from the target cell $e$ and note that $m_2$ is the first side-adjacent module in that scan after the target cell.
    It follows that in the anti-clockwise scan around $m_1$, any empty cell between $e$ and $m_2$ belongs to the target hole, and the last of those empty cell is adjacent to $m_2$.
    Continue this process with $m_2, m_3$ and so forth until module $m_i$.
    Note that its successor $m_{i+1}$ must be distinct from $m_k$, since $m_k$ and $m_i$ are not side-adjacent, i.e., $k > i +1$.
    If $m_{i+1} \notin N(m)$ or $m_{i+1} = m_{i-1}$, then the scan around $m_i$ from $m_{i-1}$ to $m_{i+1}$ contains an empty cell side-adjacent to $m$.
    It follows that $\hole_m$ contains the target cell.
    Consider the case where $m_{i+1} \in N(m)$ and $m_{i+1} \neq m_{i-1}$.
    Since $i > 1$ by assumption and $b$ is empty, $m_{i-1}$ must lie in direction $\N$ of $m_i$.
    In this case, the only empty cell in the scan around $m_i$ from $m_{i-1}$ to $m_{i+1}$ is the cell in direction $\NW$ of $m_i$.
    Denote $\hole{}$ the hole containing that cell, and assume by contradiction that $\hole = \hole_m$.
    Consider a simple directed curve that starts and ends at the midpoints of the northern and southern side of $m_{i+1}$, respectively, and is entirely contained within $\hole$.
    Since holes are defined w.r.t. corner-adjacency, such a curve must exist, if $\hole = \hole_m$.
    Now connect the endpoints of that curve to a simple directed loop.
    Depending on the positioning of modules, that loop is either anti-clockwise or clockwise (see $\mathcal{C}_1$ and $\mathcal{C}_2$ in \cref{fig-appendix:active-critical-pair}).
    If the loop is anti-clockwise, then it encloses all successors of $m_{i+1}$, but not $m_k$.
    Otherwise, if the loop is clockwise, then it encloses no successor of $m_{i+1}$, but it encloses $m_k$.
    Both contradict that $m_k$ is a successor of $m_{i+1}$.
    Hence, it holds that $\hole{} \neq \hole_m$.
    Note that by placing a module at $m$, the target hole disconnects into precisely two distinct holes.
    Otherwise, $m_k$ would have only one side-adjacent neighbor, which implies that it is simply removable and contradicts that $m$ became \pActive{} before $m_k$.
    It follows that $\hole = \hole_b$.

    Second, consider the case where $c$ is not contained in $P$.
    Then $m_k$ is the first module in $P$ that is side-adjacent to $m$.
    This again follows since $m_k$ must have at least two neighbors in the round in which $m$ became \pActive{}, otherwise contradicting that $m$ becomes \pActive{} before $m_k$.
    Hence, $\hole_b$ must contain the target cell.
    
    We have shown that the \pActive{} module can determine whether $\hole_m$ or $\hole_b$ contains the target cell.
    If $\hole_m$ contains the target cell, then it sends a cleanup message to $m_k$ to delete all requests which followed the boundary of $\hole_b$.
    Note that if the \pActive{} module would become \pPassive{} instead in this case, then invariant (1) would be violated: by \cref{def:request-path}, $P$ is contained in the boundary of the target hole $\hole_m$, but since $\hole_b$ is distinct from $\hole_m$, they do not necessarily share the same boundary cells.
    Since $i < k$, the cleanup message must eventually arrive at $m_i$, after which the \pActive{} module is again side-adjacent to the last module of $P$.
    The cleanup message iteratively removes the last edge of $P$, such that the invariants carry over in each step.
    Otherwise, if $\hole_b$ contains the target cell, then the \pActive{} module becomes \pPassive{}.
    The boundary of $\hole_b$ contains all stored requests such that (1) holds.
    After becoming \pPassive{}, the module simulates receiving a request from $m_k$, i.e., it becomes $m_{k+1}$ in the next round, and (2) holds.
    Finally, (3) holds since modules delete all stored requests upon becoming \pActive{}.
    Once a \pActive{} module reaches the target cell and there are no cleanup messages currently being forwarded, the request path and passive segment are empty, and the invariants again hold trivially, as shown at the start of this proof.        

    This concludes that all invariants are maintained throughout execution.
\end{proof}

\begin{lemma}[Restatement of \cref{lem:request-loop-1}]
    \label{lem-appendix:request-loop-1}
    Let $P = (m_1, m_2, \ldots, m_k)$ be a well-formed request path with non-empty passive segment $S = (m_i, \ldots, m_k)$.
    Suppose the next anti-clockwise module along the target boundary after $m_k$ is some module $m_j$ with $i \leq j \leq k-2$.
    Then at least one of the modules $m_j, m_{j+1}, \ldots, m_k$ is conditionally removable.
\end{lemma}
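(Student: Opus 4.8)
The plan is to extract a simple cycle from the passive segment and then read a conditionally removable module off its grid geometry. First I would note that, since $m_j$ is the next anti-clockwise module along the target boundary after $m_k$, the two cells $m_k$ and $m_j$ are side-adjacent, so the sub-path $Z := (m_j, m_{j+1}, \ldots, m_k)$ closes into a cycle. Because $P$ is well-formed, its passive segment $S \supseteq Z$ is simple by \cref{lem:invariants}, hence $Z$ is a \emph{simple} cycle; and the hypothesis $j \le k-2$ guarantees $Z$ has at least three vertices, ruling out the degenerate case in which $m_j$ would be the unique side-adjacent neighbor of $m_k$ (which would make $m_k$ simply removable). Every module of $Z$ lies on the target boundary, and I will use that each vertex of a cycle has at least its two cycle-neighbors as side-adjacent modules.

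Next I would split on whether $Z$ contains a critical pair. If some critical pair $(c_1,c_2)$ has both endpoints in $Z$, then each of $c_1,c_2$ has at least two side-adjacent neighbors inside $Z$, while a critical-pair module has at most two side-adjacent neighbors in total; hence their only side-adjacent modules are their two cycle-neighbors. Condition~(2) of \cref{def:conditionally-removable} is then immediate, and being critical-pair bridges they are side-adjacent to the target hole, so both $c_1$ and $c_2$ are conditionally removable.

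If $Z$ contains no critical pair, I would pass to the rectilinear closed curve $\outline{}$ separating the cells of $Z$ from the target hole (the analogue of the outline used in \cref{lem:move-in-phole}); the absence of a critical pair is exactly what makes $\outline{}$ a simple polygon. Here the argument bifurcates. If $\outline{}$ does not enclose the target hole, then since the exterior angles of a simple rectilinear polygon sum to $2\pi$, there are at least four convex corners, each a cell of $Z$ sharing two consecutive sides with the hole and therefore having exactly its two cycle-neighbors as side-adjacent modules; at most one such cell is the branch vertex $m_j$, leaving a conditionally removable module. If instead $\outline{}$ encloses the hole, it must enclose strictly more than the single target cell, because the target cell is always side-adjacent to a non-\pPassive{} module (the \pHead{} that issued the request) whereas $Z$ consists only of \pPassive{} modules; thus $\outline{}$ has width or height at least two, forcing either two reflex corners (again cells with exactly two side-adjacent cycle-neighbors, so away from $m_j$ one is conditionally removable) or a straight side of length at least two. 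In the latter case two consecutive cells $a,b \in Z$ are side-adjacent to the hole, and inspecting the two outward cells $c,d$ opposite the hole shows that whichever of $c,d$ is empty yields a conditionally removable module, while if both are occupied the common-neighborhood path $c \to d \to b$ (respectively $d \to c \to a$) certifies that both $a$ and $b$ are conditionally removable.

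The step I expect to be the main obstacle is the local bookkeeping in this last geometric case: for each corner type and for the long-side configuration one must verify precisely the neighborhood condition of \cref{def:conditionally-removable} — that every occupied side-neighbor of the candidate reaches a cycle-neighbor through occupied cells lying in the candidate's own neighborhood — and then argue that excluding the single branch vertex $m_j$ still leaves at least one valid candidate. The auxiliary facts that carry this geometry, namely that $\outline{}$ is a genuinely simple (non-self-touching) polygon given the no-critical-pair assumption, and that enclosing the target hole forces enclosing at least one additional cell, are the points where I would be most careful.
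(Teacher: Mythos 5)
Your proposal is correct and follows essentially the same route as the paper's proof: the same reduction to the simple cycle $Z$, the same split on whether $Z$ contains a critical pair, and the same exterior-angle argument on the outline polygon $\outline{}$ with the identical sub-cases (not enclosing the hole yields four convex corners; enclosing it forces an extra enclosed cell, hence two reflex corners or a long side analyzed via the outward cells $c,d$). The points you flag as needing care are exactly the ones the paper's proof also treats only briefly, so there is no substantive gap or divergence.
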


\begin{proof}
    Since $j \leq k-2$, there is at least one module between $m_j$ and $m_k$ in $S$.
    Note that otherwise $m_j$ would be the only side-adjacent neighbor of $m_k$, as it is both its direct predecessor in $P$, and the next anti-clockwise module along the target boundary, in which case $m_k$ would be simply removable.
    Since $S$ is simple (as we assume $P$ to be well formed), it follows, that $Z \coloneqq (m_j, m_{j+1}, ..., m_k)$ is a simple cycle.

    First, assume that there is a critical pair $(c_1,c_2)$ with $c_1,c_2 \in Z$. 
    Since $Z$ is a cycle, each module in $Z$ has at least two side-adjacent neighbors within $Z$.
    By definition of a critical pair, both $c_1$ and $c_2$ have at most two side-adjacent neighbors.
    It follows directly that both $c_1$ and $c_2$ are conditionally removable.

    Second, assume the contrary.
    By \cref{def:request-path}, $Z \subseteq P$ is fully contained in the target boundary.
    Consider removing all modules side-adjacent to $m_j$ except for those contained in $Z$, i.e., except for $m_{j+1}$ and $m_k$.
    The remaining set of modules decomposes in at least two side-connected components, one of which contains $Z$ as its boundary (see \cref{subfig-appendix:request-loop-1,subfig-appendix:request-loop-2}).
    Let $\outline{}$ be the point set obtained by intersecting the union of all cells occupied by $Z$ with the target hole (including the removed neighbors of $m_j$).
    Since we assume that $Z$ contains no critical pair, $\outline{}$ is a simple polygon whose vertices are corners of the cells occupied by $Z$.

    There are two cases to distinguish.
    First, assume $\outline{}$ does not enclose the target hole (see \cref{subfig-appendix:request-loop-1}).
    Since we operate on the square grid, the exterior angle at each polygon vertex, defined as the angle between one side and the extension of the adjacent side, is $\pm \frac{\pi}{2}$.
    The sum of exterior angles of any simple polygon is $2\pi$.
    Thus, there are at least four vertices with exterior angle $\frac{\pi}{2}$, each corresponding to a cell sharing exactly two consecutive sides with the target hole.
    At most one of these cells can be occupied by $m_j$, so there are at least three conditionally removable modules in $Z$.

    Next, assume $\outline{}$ encloses the target hole (see \cref{subfig-appendix:request-loop-2}).
    The target cell is always adjacent to some non-passive module.
    Since $Z$ contains only \pPassive{} modules (as it lies entirely within the passive segment $S$), $\outline{}$ must enclose one additional cell besides the target cell.
    This implies that the width or height of $\outline{}$ is at least two cells.
    Thus, $\outline{}$ has two vertices with exterior angle $-\frac{\pi}{2}$ or a side of length at least two.
    Vertices with exterior angle $-\frac{\pi}{2}$ correspond to cells in $Z$ sharing two consecutive sides with the target hole (with $Z$ now lying on the outside of the polygon).
    At most one of these cells can be occupied by $m_j$, so the other is conditionally removable.
    If $\outline{}$ contains a side of length at least two, then there are consecutive $a, b \in Z$ both side-adjacent to the target hole. 
    Let $c$ and $d$ be the cells side-adjacent to $a$ and $b$, respectively, that are not in the target hole and not within $Z$ (see \cref{subfig-appendix:request-loop-2}).
    If $c$ is empty, then $a$ is conditionally removable.
    If $d$ is empty, then $b$ is conditionally removable.
    If neither is empty, both $a$ and $b$ are conditionally removable.

    In all cases, $Z$ contains a conditionally removable module, concluding the lemma.
\end{proof}

\begin{figure}[tbp]
        \centering
        \begin{subfigure}[b]{0.333\linewidth}
            \centering%
            \includegraphics[width=\linewidth,page=1]{request-loop}%
            \subcaption{}
            \label{subfig-appendix:request-loop-1}
        \end{subfigure}%
        \begin{subfigure}[b]{0.333\linewidth}
            \centering%
            \includegraphics[width=\linewidth,page=2]{request-loop}%
            \subcaption{}
            \label{subfig-appendix:request-loop-2}
        \end{subfigure}%
        \begin{subfigure}[b]{0.333\linewidth}
            \centering%
            \includegraphics[width=\linewidth,page=3]{request-loop}%
            \subcaption{}
            \label{subfig-appendix:request-loop-3}
        \end{subfigure}%
        \caption{((a, b) restated.) Illustration of the cases from \cref{lem-appendix:request-loop-1} and the terminology used in \cref{lem-appendix:request-loop-2}. (a) $\outline{}$ does not enclose the target hole; at least four vertices have exterior angle $\frac{\pi}{2}$. (b) $\outline{}$ encloses the target hole; at least two vertices have exterior angle $-\frac{\pi}{2}$ or there is a side of length at least two. (c) The C-shape in the common neighborhood of $m$ and its side-adjacent empty cell $e$.}
        \label{fig-appendix:request-loop}
    \end{figure}

\begin{lemma}
    \label{lem-appendix:extended-rhombus-weakly-convex}
    For any $n \geq 0$, the set of cells $\{v_0, \ldots v_{n}\} \cup N(v_n)$ is weakly convex, where $v_0, ..., v_n$ are the cells of a rhombus of size $n+1$.
\end{lemma}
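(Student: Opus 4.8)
The plan is to prove weak convexity in two stages: first I would show by induction that the partial rhombus $\rhombus_n = \{v_0, \ldots, v_{n-1}\}$ is weakly convex for every $n \geq 0$, and then bootstrap from this to the augmented set $R_n = \rhombus_{n+1} \cup N(v_n)$. The geometric fact driving the whole argument is that in the square grid a Manhattan shortest path between two cells is exactly a \emph{monotone} staircase path, i.e.\ one advancing in at most two perpendicular cardinal directions and reversing neither. Hence establishing weak convexity reduces to exhibiting, for each pair of cells, one such monotone path that never leaves the set.

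For the induction on $\rhombus_n$, the base case $\rhombus_0 = \emptyset$ is trivial. In the inductive step I would assume $\rhombus_{n-1}$ weakly convex and consider two cells of $\rhombus_n = \rhombus_{n-1} \cup \{v_{n-1}\}$. If both lie in $\rhombus_{n-1}$ the hypothesis applies verbatim, and a single cell is immediate; the only genuine case connects the newest cell $v_{n-1}$ to an arbitrary earlier cell $c_2 \in \rhombus_{n-1}$. Here I would take a single step from $v_{n-1}$ to a neighbor inside $\rhombus_{n-1}$ and then invoke the hypothesis, choosing the step direction so that a monotone completion to $c_2$ remains possible. The correct direction ($\S$, $\E$, or $\W$) depends on whether $v_{n-1}$ is extremal in its layer (say northernmost) or lies on a straight segment, and on which quadrant it occupies; in each subcase the position of $c_2$ forces all of $\rhombus_{n-1}$ into a half-plane or quadrant for which that first step does not overshoot.

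The extension to $R_n$ would proceed by the same scheme. If both chosen cells lie in the $3 \times 3$ block $\{v_n\} \cup N(v_n)$, a monotone path stays inside it; if both lie in $\rhombus_{n+1}$, the first stage suffices since $\rhombus_{n+1} \subset R_n$. The remaining case pairs $c_1 \in N(v_n) \setminus \rhombus_{n+1}$ with $c_2 \in \rhombus_{n+1}$, and I would again step from $c_1$ into $\rhombus_{n+1}$ before applying weak convexity of $\rhombus_{n+1}$. The step direction splits on whether $v_n$ is extremal or on a straight segment and on its quadrant; for the diagonal subcase, where $c_2$ lies in the quadrant opposite $c_1$, I would route the path by moving along one axis until aligned with $c_2$ (or until entering $\rhombus_{n+1}$), then along the other, and finally follow a monotone path inside the rhombus.

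The main obstacle I anticipate is the bookkeeping of this case analysis, specifically verifying that the initial step from a boundary cell into the interior always exists in a direction admitting a monotone extension to every target. The subtlety is that the outermost layer may be only partially filled, so I must rely on the structural invariant that the filled prefix of that layer is a contiguous clockwise arc beginning at its connector. This is precisely what guarantees that the required interior neighbor ($\S$, $\E$, or $\W$) is present, and the diamond geometry (sides of slope $\pm 1$) then ensures that monotonicity is preserved across quadrant boundaries, so that no monotone path is ever forced out of the set.
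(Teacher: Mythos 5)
Your plan is correct and follows essentially the same route as the paper's proof: a two-stage argument that first establishes weak convexity of the partial rhombus $\rhombus_n$ by induction (using monotone shortest paths and a single step from $v_{n-1}$ into $\rhombus_{n-1}$ chosen by extremality and quadrant), and then extends to $\rhombus_{n+1} \cup N(v_n)$ via the same three-case decomposition, including the axis-by-axis routing for the diagonal subcase. The structural point you flag as the main obstacle --- that the filled prefix of the outermost layer is a contiguous clockwise arc from its connector, which guarantees the required interior neighbor exists --- is exactly the fact the paper's case analysis rests on, so no new idea is missing.
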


\begin{proof}
    We first prove by induction that $\rhombus_n = \{v_0, \ldots, v_{n-1}\}$ is weakly convex for all $n \geq 0$.
    
    When $n = 0$, $\rhombus_0 = \emptyset$, which is trivially weak convex.
    Let $n > 0$, let $c_1, c_2 \in \rhombus_n$ be arbitrary, and assume that $\rhombus_{n-1}$ is weakly convex.
    We distinguish three cases:
    \emph{Case 1:}
    $c_1 = c_2 = v_{n-1}$.
    A single cell is trivially weakly convex.
    \emph{Case 2:}
    $c_1, c_2 \in \rhombus_{n-1}$.
    By the induction hypothesis, $\rhombus_{n-1}$ is weakly convex.
    Hence there is a shortest path from $c_1$ to $c_2$ fully within $\rhombus_{n-1} \subset \rhombus_{n}$.
    \emph{Case 3:}
    $c_1 = v_{n-1}, c_2 \in \rhombus_{n-1}$.
    In the square grid, all shortest paths are monotone paths, e.g., they go east and south but never west and north.
    Assume that $c_1$ is extremal within its rhombus layer, w.l.o.g. northernmost.
    Then all cells in $\rhombus_{n-1}$ lie in the southeast or southwest quadrant relative to $c_1$.
    We obtain a shortest path from $c_1$ to $c_2$ by taking one step south, and afterwards take a southeast or southwest monotone path within $\rhombus_{n-1}$, which exists by the induction hypothesis.
    Now assume that $c_1$ is not extremal within its rhombus layer.
    W.l.o.g., assume that it lies in the northwest quadrant of $\rhombus_n$.
    Then there is no cell in the northwest quadrant relative to $c_1$, and $c_1$ has neighbors in $\rhombus_{n-1}$ to the east and south.
    If $c_2$ lies in the southwest quadrant, take one step south and follow a southwest monotone path within $\rhombus_{n-1}$.
    If $c_2$ lies in the southeast quadrant, take one step south and follow a southeast monotone path within $\rhombus_{n-1}$.
    If $c_2$ lies in the northeast quadrant, take one step east and follow a northeast monotone path within $\rhombus_{n-1}$.
    In all cases, we take a single step from $c_1$ to some neighbor in $\rhombus_{n-1}$, and afterwards follow a shortest path within $\rhombus_{n-1}$ maintaining monotonicity.
    Hence, the result is a shortest path entirely within $\rhombus_{n}$, which concludes that $\rhombus_{n}$ is weakly convex for all $n \geq 0$.

    It remains to show that $R_n \coloneq \rhombus_{n+1} \cup N(v_n)$ is weakly convex.
    Let $c_1, c_2 \in R_n$ be arbitrary. 
    We distinguish three cases:
    \emph{Case 1:}
    $c_1, c_2 \in \{v_n\} \cup N(v_n)$.
    $\{v_n\} \cup N(v_n)$ is a $3 \times 3$ square of cells. 
    It is easy to see that a full square is weakly convex.
    \emph{Case 2:}
    $c_1, c_2 \in \rhombus_{n+1}$.
    Since $\rhombus_{n+1}$ is weakly convex, there is a shortest path from $c_1$ to $c_2$ within $\rhombus_{n+1} \subset R_n$.
    \emph{Case 3:}
    $c_1 \in N(v_n) \setminus \rhombus_{n+1}, c_2 \in \rhombus_{n+1}$.
    If $v_n$ is extremal within its rhombus layer, w.l.o.g. northernmost, then no cell in $\rhombus_{n}$ lies north of any cell in $N(v_n) \setminus \rhombus_n$.
    All cells in $N(v_n) \setminus \rhombus_{n+1}$ have a neighbor to the south.
    Hence, we can move from $c_1$ into $\rhombus_{n+1}$ by both southwest and southeast monotone paths, and afterwards take a path within $\rhombus_{n+1}$ that follows the same monotonicity.
    If $v_n$ is not extremal, w.l.o.g. it lies in the northeast quadrant of $\rhombus_{n+1}$, then there is no cell in the northeast quadrant relative to $c_1$ that is contained in $\rhombus_{n+1}$.
    We again distinguish the quadrant in which $c_2$ lies relative to $c_1$.
    If $c_2$ lies in the northwest quadrant, move west from $c_1$ into $\rhombus_{n+1}$ and follow a northwest-monotone shortest path to $c_2$.
    If $c_2$ lies in the southeast quadrant, move south from $c_1$ into $\rhombus_{n+1}$ and follow a southeast-monotone shortest path to $c_2$.
    If $c_2$ lies in the southwest quadrant, then: 
    Move west from $c_1$ as long as $c_2$ is not strictly south or until entering $\rhombus_{n+1}$, then move south until $c_2$ is not strictly west or until entering $\rhombus_{n+1}$, and once inside $\rhombus_{n+1}$, follow a southwest-monotone path to $c_2$.

    In all cases, we construct a shortest path from $c_1$ to $c_2$ that lies entirely within $R_n$.
    Hence, $R_n = \{v_0, ..., v_{n-1}\} \cup \{v_n\} \cup N(v_n)$ is weakly convex.    
\end{proof}

\begin{lemma}[Restatement of \cref{lem:request-loop-2}]
    \label{lem-appendix:request-loop-2}
    Let $P = (m_1, m_2, \ldots, m_k)$ be a well-formed request path with non-empty passive segment $S = (m_i, \ldots, m_k)$.
    Suppose the next anti-clockwise module along the target boundary after $m_k$ is not \pPassive{}, and that $m_k$ is not simply removable. 
    If no module in $S$ is conditionally removable, then the target hole has size one.
\end{lemma}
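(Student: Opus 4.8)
The plan is to assume that no module in $S$ is conditionally removable and to derive that the entire neighborhood of the target cell is occupied, leaving it as the unique empty cell of the target hole. Throughout, let $\mathcal{F}$ denote the set of all non-\pPassive{} modules (the \pHead{} and \pTail{} modules, extended by the temporary head's neighborhood in the edge case of \cref{par:special-case}); by \cref{lem-appendix:extended-rhombus-weakly-convex} this set is weakly convex. First I would build a simple cycle to which \cref{def:conditionally-removable} can be applied: letting $m'$ be the non-\pPassive{} module following $m_k$ anti-clockwise, I take a shortest path $P'$ from $m'$ to $m_{i-1}$ inside $\mathcal{F}$, which exists by weak convexity. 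Since $S$ is simple and consists only of \pPassive{} modules (by \cref{lem-appendix:invariants}) while $P'$ lies in $\mathcal{F}$, the two are disjoint, so the concatenation $Z = S \circ P'$ is a simple cycle and the right object for checking conditional removability of modules in $S$.

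Next I would set $D$ to be the set of modules of $S$ that are side-adjacent to the target hole and establish three claims. \textbf{(1)} $D \neq \emptyset$: if $S$ were the single module $m_k$, then weak convexity of $\mathcal{F}$ would force the shortest path from $m'$ to $m_{k-1}$ to pass through or corner-touch $m_k$, making $m_k$ simply removable and contradicting the hypothesis; hence $|S| \geq 2$ and, as $S$ is a contiguous piece of the target boundary, some module of $S$ touches the target hole. \textbf{(2)} No critical pair meets $D$: each $m \in D$ has an empty side-neighbor (in the target hole) and two side-neighbors in $Z$, and its fourth side-neighbor must be occupied, for otherwise $m$ would be conditionally removable; a module with three occupied side-neighbors cannot lie in a critical pair. \textbf{(3)} No two consecutive members of $S$ both lie in $D$: if $m_j, m_{j+1} \in D$, their side-neighbors combine into a path inside $N(m_j) \cap M$ joining the two $Z$-neighbors of $m_j$, again making $m_j$ conditionally removable.

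The technical core is a local analysis in the common neighborhood of a module and its empty side-neighbor. For an arbitrary $m \in D$ with its unique side-adjacent empty cell $e$ in the target hole, I would examine $N(m) \cap N(e)$ using the labelling of \cref{subfig-appendix:request-loop-3} and show that the occupied cells there form a \emph{C-shape} containing a second module $m^* \in D$ with $m^* \neq m$. This is proved by a case distinction on how many of the two $Z$-neighbors $a,b$ of $m$ lie in $\mathcal{F}$: the case $a,b \in \mathcal{F}$ is impossible, since any shortest path between them would traverse $m \notin \mathcal{F}$; when $a,b \notin \mathcal{F}$, weak convexity forbids both of the relevant corner cells from lying in $\mathcal{F}$ while Claim~(3) forces them occupied, so one belongs to $D$; and when exactly one of $a,b$ lies in $\mathcal{F}$, I would run a short chain of implications over the cells $f_1,f_2,f_3$ adjacent to $e$, repeatedly invoking Claims~(2) and~(3) and weak convexity to rule out each of them being empty. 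I expect this last subcase to be the main obstacle, as it requires tracking several cells at once and correctly propagating the ``not conditionally removable'' and ``no two consecutive in $D$'' constraints.

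Finally, to conclude, I would take the union of the C-shapes of $m$ and of $m^*$. Together they cover every cell of $N(e)$ except a single cell corner-adjacent to $e$. Weak convexity of $\mathcal{F}$ rules out any critical pair inside $\mathcal{F}$, and combined with Claim~(2) this shows the target hole consists of a single pseudo-hole; hence the one remaining corner cell cannot be separated from the rest and must be occupied as well. Thus every cell of $N(e)$ other than $e$ is occupied, so $e$ is the only empty cell of the target hole, i.e.\ the target hole has size one.
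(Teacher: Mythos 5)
Your proposal is correct and follows essentially the same route as the paper's proof: the same weakly convex set $\mathcal{F}$, the same cycle $Z = S \circ P'$, the same three claims about $D$, the same C-shape analysis in $N(m) \cap N(e)$, and the same concluding union-of-two-C-shapes argument. The only difference is level of detail, not approach.
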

\newcommand{\fixed}{\mathcal{F}}

\begin{proof}
    The general structure of the proof is as follows.
    Let $D$ denote the set of all cells in $S$ that are side-adjacent to the target hole.
    Assuming that no module in $S$ is conditionally removable (see \cref{def:conditionally-removable}), we show the following three properties:
    (1) $D$ is non-empty, 
    (2) no critical pair $(c_1,c_2)$ exists with $c_1 \in D$ and $c_2 \in M$,
    and (3) for any consecutive $m_j, m_{j+1}$ in $S$, if $m_j \in D$, then $m_{j+1} \notin D$.
    We then consider the intersection of the neighborhood of any $m \in D$ with the neighborhood of some side-adjacent empty cell $e$.
    Using properties (1)--(3), we argue that modules within this intersection form a C-shape containing a module in $D$ that is distinct from $m$.
    The lemma then follows by considering the union of the C-shapes of two modules in $D$ together with property (2).
    We now proceed to prove each of the above claims in order.
    
    Let $m_l$ be the current position of the \pHead{} module within the rhombus $\rhombus{}$.
    Define $\fixed{}$ as the set of all \pHead{} and \pTail{} modules, i.e., the first $l+1$ cells of the rhombus.
    If there is a temporary \pHead{} module, i.e., we are in the special case handling described in Appendix~\ref{par:special-case}, then include $m_{l+1} \cup N(m_{l+1})$ in $\fixed{}$.
    We have shown in \cref{lem-appendix:extended-rhombus-weakly-convex} that $\fixed{}$ is a weakly convex set.
    Denote $m'$ the next anti-clockwise along the target boundary after $m_k$, and let $P'$ be a shortest path from $m'$ to $m_{i-1}$ that is fully contained within $\fixed{}$ (which exists since $\fixed{}$ is weakly convex).
    Note that $P'$ and $S$ are disjoint since $S$ contains only \pPassive{} modules, and $S$ is simple by the assumption that $P$ is well-formed.
    It follows that the concatenation $Z = S \circ P'$ is a simple cycle.

    Claim (1): $D$ is non-empty.
    Assume by contradiction that $S$ consists only of the module $m_k$.
    Let $m' \in \fixed{}$ denote the next anti-clockwise module along the target boundary after $m_k$.
    Since every request is initiated by some module in $\fixed{}$, it follows that $m_{k-1} \in \fixed{}$.
    If $m' = m_{k-1}$, then $m_k$ has only one side-adjacent neighbor, contradicting the assumption that it is not simply removable.
    Hence, assume $m' \neq m_{k-1}$.
    If $m'$ and $m_{k-1}$ are not corner-adjacent, then the unique shortest path from $m'$ to $m_{k-1}$ must include $m_k \notin \fixed{}$, contradicting the weak convexity of $\fixed{}$.
    Therefore, assume that $m'$ and $m_{k-1}$ are corner-adjacent.
    By weak convexity, there exists a shortest path from $m'$ to $m_{k-1}$ that includes a module corner-adjacent to $m_k$, again contradicting that $m_k$ is not simply removable.
    We conclude that $S$ contains at least two modules.
    Since $S$ forms a contiguous path along the target boundary, at least one of these modules must be side-adjacent to the target hole.
    It follows that $D$ is non-empty.

    Claim (2): No critical pair $(c_1,c_2)$ exists with $c_1 \in D$ and $c_2 \in M$.
    Each module $c_1 \in D$ has a side-adjacent empty cell (a cell in the target hole), and two side-adjacent neighbors in $Z$.
    If the fourth cell side-adjacent to $c_1$ is empty, then $c_1$ would be conditionally removable, contradicting our assumption that no module in $S$ is conditionally removable.
    It follows that $c_1$ has exactly three side-adjacent neighbors and therefore cannot be part of a critical pair, since any critical pair module has at most two side-adjacent neighbors.
    This concludes the proof of the claim.

    Claim (3): For any consecutive $m_j, m_{j+1}$ in $S$, if $m_j \in D$, then $m_{j+1} \notin D$.
    Assume by contradiction that both $m_j$ and $m_{j+1}$ are in $D$.
    As shown in the previous claim's proof, each of them must have exactly three side-adjacent neighbors: two of which are their predecessor and successor in $Z$.
    The side-adjacent neighbors of $m_{j}$ are $m_{j-1}, m_{j+1}$ and some module $a \in N(m_{j+1})$.
    Similarly, the side-adjacent neighbors of $m_{j+1}$ are $m_{j}$, its successor in $Z$ which is either $m_{j+2} \in S$ if it exists, or $m' \in \fixed{}$ otherwise, and some module $b \in N(m_{j})$.
    Together, these neighbors form a path from $a$ to $m_{j+1}$ via $b$ that is fully contained in the neighborhood $N(m_j)$.
    This implies that $m_j$ is conditionally removable, which contradicts our assumption that no module in $S$ is conditionally removable.
    Hence, no two consecutive modules in $S$ are both contained in $D$. 

    Let $m \in D$ be arbitrary (which exists by Claim (1)), and let $e$ be the empty cell in the target hole that is side-adjacent to $m$.
    By definition of $D$, such a cell $e$ exists, and it is unique, since $m$ is not conditionally removable, i.e., it has exactly three side-adjacent modules.
    We continue to show that the common neighborhood of $m$ and $e$ has a C-shape and contains a module from $D$ that is distinct from $m$.
    Consider the labeling of cells in $N(m) \cap N(e)$ as illustrated in \cref{subfig-appendix:request-loop-3}.
    In words: let $a$ and $b$ denote the successor and predecessor of $m$ in the cycle $Z$, respectively.
    Let $c$ and $d$ be the cells such that $(m, a, c, e)$ and $(m, b, d, e)$ each form a 4-cycle.

    First, consider the case where $a, b \notin \fixed{}$.
    In this case, there is a module at both $c$ and $d$; otherwise, it holds that $a \in D$ or $b \in D$, which would imply that two consecutive modules in $S$ belong to $D$, contradicting Claim (3).
    The unique shortest path from $c$ to $d$ goes through the empty cell $e$.
    If both $c$ and $d$ were in $\fixed{}$, this would contradict the weak convexity of $\fixed{}$.
    Hence, at least one of $c$ or $d$ must be outside of $\fixed{}$.
    Since both are side-adjacent to $e$, at least one of them must be contained in $D$.

    Second, consider the case where exactly one of $a, b$ is contained in $\fixed{}$.
    W.l.o.g., assume $a \notin \fixed{}$ and $b \in \fixed{}$.
    As in the previous case, there must be a module in cell $c$; otherwise $a \in D$, contradicting Claim (3).
    Any shortest path from $c$ to $b$ must contain $m$ or $e$.
    Since $m, e \notin \fixed{}$, weak convexity implies that $c \notin \fixed{}$.
    If there is a module in cell $d$, then there is nothing more to show.
    So assume by contradiction that cell $d$ is empty.
    Let $f_1, f_2$ and $f_3$ be the cells in $N(e)$ side-adjacent to $c, e$ and $d$, respectively (see \cref{subfig-appendix:request-loop-3}).
    Any shortest path from a cell $f_i$ to $b \in \fixed{}$ must contain one of the cells $c, e_s$ or $d$, all of which lie outside of $\fixed{}$.
    Thus, by weak convexity, none of the $f_i$ can be in $\fixed{}$.
    Moreover, none of the $f_i$ can be empty:
    If $f_1$ were empty, then $c \in D$ would be conditionally removable, contradicting our assumption.
    If $f_2$ were empty, then $f_1 \in D$ and $f_2 \in D$ would be consecutive in $S$, contradicting Claim (3).
    If $f_3$ were empty, then $f_2 \in D$ would be conditionally removable, again a contradiction.
    Finally, if $d$ were empty, then $f_2 \in D$ and $f_3 \in D$ would be consecutive in $S$, contradict Claim (3) once more.
    Hence, by contradiction, cell $d$ must be occupied.

    Third, the case $a, b \in \fixed{}$ cannot occur, as the shortest path from $a$ to $b$ contains $m \notin \fixed{}$.

    We have shown that for any $m \in D$, the modules in the intersection $N(m) \cap N(e)$ form a C-shape that includes a module $m^* \in D$ with $m^* \neq m$.  
    Together with Claim~(2), we now complete the proof of the lemma.

    Consider the union of the two intersections $N(m) \cap N(e)$ and $N(m^*) \cap N(e)$.  
    This union covers all cells in the neighborhood $N(e)$, except for a single cell that is corner-adjacent to $e$ (e.g., if $m^* = c$ in \cref{subfig-appendix:request-loop-2}, then the only cell not covered is $f_3$).
    By the weak convexity of $\fixed{}$, there can be no critical pair $(c_1, c_2)$ with $c_1, c_2 \in \fixed{}$.  
    Combined with Claim~(2), this implies that the target hole contains exactly one pseudo-hole.
    Consequently, the single remaining cell not contained in the union of the two C-shapes must be occupied as well. 

    We conclude that the target hole has size one.
\end{proof}

\begin{lemma}[Restatement of \cref{lem:fill-target}]
    \label{lem-appendix:fill-target}
    Consider a round in which a module initiates a request for some target cell, and assume at least one \pPassive{} module exists.
    Then, within $\O(n)$ rounds, either the target cell is filled or the target hole consists of just the target cell.
\end{lemma}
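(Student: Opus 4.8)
The plan is to follow the life of a single request and charge the elapsed rounds against a geometric potential on the target hole. Fix the target cell $e$ and assume the target hole has size at least two, since otherwise the claim holds immediately. First I would argue that as long as a \pPassive{} module exists, the request cannot be forwarded forever without activating someone. The request is initiated inside the fixed set $\fixed$ of \pHead{} and \pTail{} modules and forwarded anti-clockwise along the target boundary; because the configuration stays side-connected (only simply and conditionally removable modules ever move) and, by weak convexity of $\fixed$ (\cref{lem-appendix:extended-rhombus-weakly-convex}), no \pPassive{} module is fully enclosed by $\fixed$, the request path must eventually reach a \pPassive{} module, so by \cref{lem-appendix:invariants} its last module $m_k$ is \pPassive{} and the passive segment $S$ is non-empty. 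In the first round in which the request is \emph{not} forwarded, $m_k$ is either simply removable, or its next anti-clockwise neighbour already lies in $S$, or that neighbour is non-\pPassive{}; in the first case $m_k$ activates directly, in the second \cref{lem-appendix:request-loop-1} yields a conditionally removable module on the loop, and in the third \cref{lem-appendix:request-loop-2} does the same (here I use that the target hole has size at least two). Thus in every case some $m_i$ with $i \le k$ becomes \pActive{} within $\O(k) = \O(n)$ rounds.

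Next I would split on whether the activated module $m_i$ lies in the same pseudo-hole $\phole_0$ as $e$. If it does, then by \cref{lem:move-in-phole} it slides and performs convex transitions directly to $e$ while hugging the boundary of $\phole_0$; counting the rounds needed to activate, to backtrack the activation message along the request path, and to traverse to $e$ gives $\O(n)$ in total, and the target cell is filled. The interesting case is $\phole_i \neq \phole_0$: heading toward $e$ along a shortest corner-path, the module is stopped at the bridge cell of the unique critical pair $(c_1, c_2)$ separating $\phole_i$ from the next pseudo-hole. Following the critical-pair handling of \cref{par:critical-pair-cases}, $m_i$ becomes \pPassive{} at that bridge, which permanently splits the target hole into two disjoint holes and prunes one of them from further consideration.

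The crux is then a potential argument. I would let $\mathcal{S}$ count the square sides shared between an empty cell of the target hole and an occupied boundary cell, observing that $\mathcal{S} = \O(n)$ initially and that each such side can be charged to exactly one hole. Letting $m_j$ be the last position on $P$ occupied by a module of the blocking critical pair, \cref{lem:move-in-phole} lets the \pActive{} module reach the bridge while deleting the requests of all path modules beyond index $j$. The key claim is that one complete episode of \emph{activate, move to the bridge, become \pPassive{}} costs only $\O(k-j)$ rounds but permanently removes the boundary of the pruned sub-hole from the target hole, decreasing $\mathcal{S}$ by $\Omega(k-j)$. Amortising over episodes, the total number of rounds is proportional to the total decrease of $\mathcal{S}$, so after $\O(n)$ rounds either $\mathcal{S} \le 4$ — forcing the target hole down to size one — or an \pActive{} module finally lies in $\phole_0$ and fills $e$ as in the easy case.

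The main obstacle I anticipate is precisely this charging step: I must verify quantitatively that the request-path suffix erased in one episode has length $\Theta$ of the boundary that leaves the target hole, so that no round goes uncharged, and that no shared side is double-counted across the repeated splits. This hinges on \cref{lem-appendix:invariants} guaranteeing the path is well-formed and simple, and on identifying the pruned sub-hole's boundary with exactly the portion of $P$ of index larger than $j$; aligning these two facts numerically is the delicate part, whereas the ``every request activates someone'' step and the same-pseudo-hole case are comparatively routine.
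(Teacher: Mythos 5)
Your proposal follows essentially the same route as the paper's proof: the same argument that weak convexity of the fixed set plus side-connectivity forces the request to activate some \pPassive{} module, the same case split on whether the activated module's pseudo-hole equals the target cell's, and the same potential $\mathcal{S}$ (boundary sides of the target hole) decreasing by $\Omega(k-j)$ per $\O(k-j)$-round pruning episode until $\mathcal{S}\le 4$ or the target pseudo-hole is reached. The "delicate" charging step you flag is resolved in the paper exactly as you sketch it, by observing that all request-path modules of index larger than $j$ lie on the boundary of the pruned sub-hole.
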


\begin{proof}
    Let $e$ be the target cell, and assume that the target hole has size larger one, as otherwise there is nothing to show.
    Consider the first round after the request for cell $e$ is initiated in which the request is \emph{not} forwarded to the next module along the target boundary.
    Let $P = (m_1, m_2, \ldots, m_k)$ be the request path and $S$ be its passive segment in that round.
    Let $\fixed{}$ be defined as in the proof of \cref{lem-appendix:request-loop-2}.
    By weak convexity of $\fixed{}$, no \pPassive{} module is fully enclosed by modules in $\fixed{}$.
    Moreover, since we maintain that the set of all modules is side-connected (since we only ever move simply removable and conditionally removable modules), and the request path traverses the target boundary starting in $\fixed{}$, it must eventually contain a \pPassive{} module, i.e., $S$ is non-empty.
    By \cref{lem-appendix:invariants}, $m_k$ is \pPassive{}.
    This implies, that in that round module $m_k$ does not forward the request because (1) it is simply removable, (2) the next anti-clockwise module along the target boundary after $m_k$ is contained in $S$, or (3) that module is contained in $\fixed{}$.
    In all three cases, some \pPassive{} module $m_i$ with $i \leq k$ becomes \pActive{}.
        
    Let $\phole_i$ be the pseudo-hole containing cell $m_i$ after removing its module, and denote $\phole_0$ be the pseudo-hole containing the target cell.
    Note that both simple and conditional removability requires $m_i$ to be side-adjacent to the target hole.

    First, consider the case where $\phole_i \neq \phole_0$. 
    Consider a shortest corner-path from $m_i$ to target cell $e$ entirely within the target hole $\hole$.
    Let $b_1$ be the last cell on that path within $\phole_i$, and $b_2$ the first cell on that path \emph{not} within $\phole_i$.
    Note that these two cells are uniquely defined, since for any adjacent pseudo-hole, $\phole_i$ contains exactly one bridge cell.
    Let $(c_1, c_2)$ be the critical pair adjacent to the bridge cells $b_1, b_2$.
    At least one of $c_1, c_2$ must be contained on the request path.
    Let $j$ be the maximum index w.r.t. $P$ for which $m_j = c_1$ or $m_j = c_2$.
    Since $m_i$ and $b_1$ are in the same pseudo-hole, by \cref{lem:move-in-phole}, the \pActive{} module can move from $m_i$ to $b_1$ remaining side-adjacent to the boundary of $\phole_i$.
    Moving and deleting all requests until it is side-adjacent to $m_j$ requires $\O(k-j)$ rounds.
    Let $\mathcal{S}$ be the number of square sides that are shared between an empty cell in the target hole and an occupied cell in its boundary.
    Note that while an occupied cell may belong to the boundaries of multiple holes, each such square side can be uniquely assigned to the boundary of a single hole.
    After the \pActive{} module becomes \pPassive{}, the target hole decomposes into precisely two holes, one of which contains the now empty cell $m_i$.
    Before $m_i$ became \pActive{}, all nodes on $P$ with index larger $j$ are contained in the boundary of that hole.
    Hence, it takes $\O(k-j)$ rounds to decrease $\mathcal{S}$ by $\Omega(k-j)$.
    Since $\mathcal{S} = \O(n)$ initially, it takes $\O(n)$ rounds until $\mathcal{S} \leq 4$, in which case either the target hole has size one, or $\phole_i = \phole_0$.

    Now consider the case where $\phole_i = \phole_0$.
    In this case, by \cref{lem:move-in-phole}, the \pActive{} module can directly move into the target cell.
    It takes $\O(2k - i)$ rounds until $m_i$ is \pActive{} and another $\O(i)$ rounds until it arrives at the target cell.
    Hence, it takes $\O(n)$ in total.
\end{proof}
}
\end{document}